\documentclass{article}
\usepackage{booktabs} 
\usepackage[ruled]{algorithm2e} 

\SetAlFnt{\small}
\SetAlCapFnt{\small}
\SetAlCapNameFnt{\small}
\SetAlCapHSkip{0pt}
\IncMargin{-\parindent}

\usepackage{float}
\usepackage{graphicx} 
\usepackage{import}
\usepackage{caption}
\usepackage{subcaption}
\usepackage{wrapfig}
\usepackage{enumitem}
\usepackage{bm}
\usepackage{tikz}
\usepackage{array}
\usepackage{hyperref}
\usepackage{geometry, amsmath, amsthm, amsfonts}
\usepackage{thmtools, thm-restate}

\newcolumntype{M}[1]{>{\centering\arraybackslash}m{#1}}

\DeclareMathOperator{\sgn}{sgn}
\declaretheorem[name=Theorem]{thm}
\declaretheorem[name=Definition, sibling=thm]{defn}
\declaretheorem[name=Lemma, sibling=thm]{lemma}
\declaretheorem[name=Proposition, sibling=thm]{prop}

\declaretheorem[name=Fact, sibling=thm]{fact}
\newenvironment{autoproof}[1]
  {\begin{proof}[Proof of \autoref{#1}]}
  {\end{proof}}

\usepackage{xcolor}

\title{Convergence rates of random-order best-response dynamics in public good games on networks\footnote{This work was supported by the Polish National Science Centre through grant 2021/42/E/HS4/00196.}}

\author{Wojciech Misiak\footnote{Institute of Applied Mathematics and Mechanics, University of Warsaw. \texttt{w.misiak@mimuw.edu.pl}}
\and Marcin Dziubi\'{n}ski\footnote{Institute of Informatics, University of Warsaw. \texttt{m.dziubinski@mimuw.edu.pl}}
}
\begin{document}

\begin{titlepage}

\maketitle
\begin{abstract}
    We study convergence rates of random-order best-response dynamics in games on networks with linear best responses and strategic substitutes.
    Combining formal analysis with numerical simulations we identify phenomena that lead to slow convergence. 
    One of the key such phenomena is convergence to stable strategy profiles in parts of the network neighboring sets of nodes which remain inactive until the dynamics is close to converging and then switch to activity, initiating convergence to profiles with a new set of active agents and possibly leading to another iteration of such behavior. 
    We identify structural properties of graphs which make such phenomena more likely. 
    These properties go beyond the spectrum of a graph, which we demonstrate analyzing convergence rates on co-spectral mates.
    
\end{abstract}
\vspace{1cm}
\setcounter{tocdepth}{2} 
\tableofcontents

\end{titlepage}

\section{Introduction}
The models of public good games on networks concern scenarios where strategic agents, connected in a network, decide on providing a good that can be consumed by themselves as well as all their neighbors in the network. Examples of scenarios that feature such characteristics include choosing a level of heating own apartment in a block of flats, where each flat benefits from the neighboring flats being heated, caring for a house garden which neighbors benefit from as well, testing new products and providing reviews in a social network, which benefits observers of the content in the network, investing in R\&D, the results of which spill over to competing companies in the industry.

These types of interactions often feature local effects as well as own efforts and the efforts of the neighbors being strategic substitutes. A prominent class of models of such interactions are network games with linear best responses and strategic substitutes~\cite{Ball06,BraKra07,Bra14}.
As was shown by~\cite{Bra14}, these games are strategically equivalent to a potential game and, therefore, have a Nash equilibrium in pure strategies, that can be reached by a best-response dynamics. 

Although existence, conditions for uniqueness of Nash equilibria in these games and how their properties depend on the topology of the network are quite well understood, very little is known about the time it takes for the interactions of the agents to converge to equilibria, how this time depends on the properties of the network, and how these dynamics proceed. The objective of this paper is to fill this gap and extend our understanding of convergence rates in games with strategic substitutes on networks. As was already argued by~\cite{Ell93}, in the context of coordination games on networks, understanding the time of convergence to equilibrium is important as it allows us to determine whether the dynamic interactions will reach an equilibrium point within a reasonable time horizon.

\subsection{Related literature}
Game theoretic models of strategic interactions in networks attracted interest from the researchers in computer science and economics for the past 25 years. Early such models include graphical games~\cite{Kea01}, coordination~\cite{Morr00,JackWa02a,JackWa02b} and anti-coordination games in networks~\cite{Bra07}, as well as public-good games in networks~\cite{Ball06,BraKra07,Bra14,All15}. The literature distinguished two general classes of games networks: the games of strategic complements~\cite{Calvo04,Calvo07,Ball06} and games of strategic substitutes~\cite{Ball06,BraKra07,Bra14}. The prime example of the former are coordination games on networks and the prime example of the latter are public-good games on networks.

In general, Nash equilibria in games of strategic substitutes are not well behaved and neither existence nor uniqueness are guaranteed. Nevertheless, with additional structure, Nash equilibrium exists and their properties, in particular, uniqueness and multiplicity, can be tied to the properties of the network. A fundamental class of models for which a good understanding of equilibria has been obtained are games on networks with strategic substitutes and linear best responses. The most general model of such games was studied by~\cite{Bra14}. They showed that these games are strategically equivalent to a potential game and have a Nash equilibrium in pure strategies. In addition, they showed that Nash equilibrium in these games is unique, if the network externality factor is lower than the inverse of the absolute value of the lowest eigenvalue of the network. If the externality factor is higher than this threshold, every equilibrium features inactive agents and equilibria may be multiple. 
In addition, \cite{Bra14} obtained conditions for stability of equilibria in the games in question. They considered asymptotic stability in the neighborhood of equilibrium points and showed that an equilibrium is stable if and only if the network externalities are lower than the inverse of the absolute value of the lowest eigenvalue of the network induced by the set of active agents in the equilibrium. 
The literature on games on networks is vast and we restricted attention to games of strategic substitutes, which are closest related to our paper. A broad overview of the literature can be found in an excellent review of~\cite{JackZen15}.

The study of strategic interaction dynamics in networks focused largely on games with strategic complements, particularly on coordination games. The key questions here concern the conditions under which the agents converge to a particular equilibrium, stability of equilibrium points, and speed of convergence. \cite{Morr00} studied a coordination game on a network with binary actions, $0$ and $1$. He characterized the properties of equilibria where some agents coordinate on action $1$  and the remaining agents coordinate on action $0$. He also characterized the conditions under which the best-response dynamics converges to the  equilibrium where all players choose action $1$. \cite{Young93,Mail93,Ell93,Ell00,JackWa02b} study stochastic stability of Nash equilibria in coordination games on networks. These papers consider a random-order best-response dynamics where agents can make random mistakes when choosing their responses. Depending on the particular model of errors, the structure of the network may affect the equilibria reached by the dynamics, as demonstrated by~\cite{Ell93} for cycle network and~\cite{JackWa02b} for star network. In addition, the structure of the network may have a significant effect on the speed of convergence to equilibrium~\cite{Ell93,Ell00}.
In a recent paper, \cite{Huss24} considered convergence of quantal response dynamics to mixed strategy quantal response equilibria and $\varepsilon$-Nash equilibria in arbitrary games on networks with local effects. They provide sufficient conditions on convergence and conduct numerical experiments on the behavior of quantal response dynamics.

Our paper is also related to a large body of literature on numerical analysis of strategic interactions in networks, which includes agent-based computational models~\cite{Tesf06,Wilh06a} and numerical simulations in physics~\cite{Perc13}. In particular, \cite{Wilh06b} study a model where protection against an external threat can be a public good. He conducts numerical experiments in order to study emergence and evolution of different forms of organization (dictatorship, communes, democracy) in this setting. \cite{Zhou16} study a public good game on a network with externality factor equal to $1$ and with the agents facing a budget constraint. They run numerical simulations with agents connected in a complete network and study performance of four behavioral strategies (average, proportional, greedy, and random). 

\subsection{Our contribution}

We study random-order best-response dynamics in network games of strategic substitutes with linear best responses. Our main focus are the total convergence time of the dynamics and how it is influenced by the network topology. 

First, we provide conditions for stochastic stability of equilibria and tie the stability to absolute value of the lowest eigenvalue of the network induced by the set of active agents,  $|\lambda_{\min}(\bm{G}_S)|$. This result provides a bound on the convergence rate in close neighborhood of an equilibrium. Convergence time is exponential but it can be arbitrarily slow when the externality factor converges to $|\lambda_{\min}(\bm{G}_S)|$ from below. Our stability result is analogous to the stability result of~\cite{Bra14} but it is obtained under a different concept of stability that corresponds to a different process. 
Under discrete dynamics, the strategy profile does not change continuously and ensuring the process remains near the equilibrium required more rigor.
Furthermore, we construct a martingale argument to account for the stochasticity of the system and obtain the result.

Next, using numerical simulations on co-spectral mates, we demonstrate that the convergence times are affected by properties of networks that go beyond their spectral characteristic.

To get a better understanding of the behavior of the best-response dynamics in question, we obtain formal characterization of stable equilibria on path graphs and we conduct formal analysis of convergence rates on such graphs with small number of nodes. We also extend this analysis to larger graphs combining formal and numerical based methods.
We show that the main cause of slow convergence is the existence of particular ``problematic'' subgraphs within the graph.  
These include graphs on which which the best-response dynamics converges slowly, even if all agents are active in equilibrium, as well as subgraphs in which some vertices change status from being inactive late, leading to what we call a \emph{reshuffle}.
The subgraphs which are ``problematic'' change as the externality factor changes. If the subgraphs which are ``problematic'' for a given externality factor do not exist within the structure of the graph, the convergence is fast.

We extend the insights obtained for path graphs to more general graph topologies. We first obtain formal results characterizing of sets of active agents in stable equilibria on general graphs. In particular, we show that when the externality factor is sufficiently large (greater than inverse of the golden ratio), the set of active agents in any stable equilibrium consist of disjoint cliques. Next we conduct extensive numerical analysis of best-response dynamic in question on various classes of deterministic as well as random graphs. We find that in some regimes, many ``problematic'' subgraphs may exist, leading to slow convergence. To analyze best-response dynamics on different types of networks, we created a web application available at https://networkconvergencerates.up.railway.app/.

The rest of the paper is organized as follows. In Section~\ref{sec:prelim} we define the model of the games in question, the best-response dynamics and the concept of stability we consider. We also prove the result providing sufficient and necessary conditions on  stability of equilibria. The analysis is conducted in Section~\ref{sec:analysis}. In particular, in Section~\ref{sec:path} we study the equilibria and the convergence times for path graphs and in Section~\ref{sec:gengraphs} we consider more general classes of graphs. We summarize and discuss our results in Section~\ref{sec:summary}. The proofs are given in the Appendix.

\section{Preliminaries}
\label{sec:prelim}
\subsection{The model}
We study a game on networks as considered by~\cite{Bra14}.
There is a set $N = \{1,\ldots,n\}$ of agents connected in an undirected graph represented by a symmetric adjacency matrix $\bm{G} = (g_{ij})_{i,j\in N} \in \{0,1\}^{N\times N}$ with $g_{ij} = g_{ji}$, for all $(i,j) \in N\times N$. Agent $i \in N$ is connected to agent $j\in N$ (or is a \emph{neighbor} or $j$ in $\bm{G}$) if and only if $g_{ij} = 1$. No agent is connected to herself, $g_{ii} = 0$, for all $i \in N$. The agents are involved in an interaction where each agent $i\in N$ chooses an activity level $x_i \in [0,1]$ and payoff of $i$ depends on her activity level as well as on activity levels of her neighbors in graph $\bm{G}$, scaled by an \emph{externality factor} $\delta \in [0,1]$. The externality factor determines the extent to which activity of neighbor of an agent affect her own payoff. A profile of activity levels $\bm{x} = (x_j)_{j \in N}$ results in payoff $U_i(\bm{x} \mid \bm{G}, \delta)$. An activity level $x_i \in [0,1]$ of agent $i\in N$ is a \emph{best response} to a profile of activity levels of the remaining agents, $\bm{x}_{-i} = (x_j)_{j \in N\setminus \{i\}}$, if it maximises the value of $U_i(x_i,\bm{x}_{-i} \mid \bm{G},\delta)$. We consider games with \emph{linear best responses}, where the best response of agent $i \in N$ to activity levels of other agents, $\bm{x}_{-i}$, is given by
\begin{equation*}
x_i = \max\!\left(0, 1 - \delta\sum g_{ij}x_j\right).
\end{equation*}
An example of payoff functions that result in such best responses is as follows. Given a strategy profile $x = (x_j)_{j \in N}$, payoff to agent $i\in N$ is
\begin{equation*}
U_i(\bm{x} \mid \bm{G}, \delta) = x_i - \frac{x_i^2}2 - \delta \sum g_{ij}x_i x_j.
\end{equation*}

This simple model of network games with linear best responses encompasses a large class of network games with local effects (c.f. \cite{Bra14}). The assumption that $\delta \in [0,1]$ results in agents' activities being strategic substitutes with the activities of their neighbors.

Given an $\epsilon \geq 0$, a strategy profile $\bm{x}$ is a \emph{$\epsilon$-Nash equilibrium} of the game in question if, for any agent $i \in N$ and any activity level $x'_i \in [0,1]$,
\begin{equation*}
U_i(x'_i,\bm{x}_{-i} \mid \bm{G}, \delta) - U_i(x_i,\bm{x}_{-i} \mid \bm{G}, \delta) \leq \epsilon.
\end{equation*}
A strategy profile $\bm{x}$ is a Nash equilibrium if it is a $\epsilon$-Nash equilibrium with $\epsilon = 0$. Throughout the remaining part of the paper we will use the term equilibrium rather than Nash equilibrium, for short.

As was shown in~\cite{Bra14}, the game in question has a Nash equilibrium. In every equilibrium there is a non-empty set of agents (called \emph{active agents}) choosing positive activity levels, $x_i > 0$, and a (possibly empty) set of agents (called \emph{inactive agents}) choosing activity level $0$. An agent is called \emph{strictly inactive} if $\delta\sum_{j}g_{ij}x_j>1$.\footnote{Strictly inactive agents are relevant, as they stay inactive under a small perturbation of the strategy profile. For almost any $\delta$, every inactive agent is strictly inactive. "Almost any $\delta$" refers to all $\delta\in[0,1]$ except for finitely many points. This is shown in \cite[Footnote~16]{Bra14}} The set of active agents is called an \emph{active set}.
Uniqueness of equilibrium is connected to the spectral properties of the adjacency matrix $A$, more specifically, its lowest eigenvalue, $\lambda_{\min}(\bm{G})$. We summarise the key results from~\cite{Bra14} in the theorem below.

\begin{thm}[\cite{Bra14}]
\label{th:bka:ne}
Equilibrium exists for any $\bm{G} \in \{0,1\}^{N\times N}$ and $\delta \in [0,1]$. If $\delta < 1/|\lambda_{\min}(\bm{G})|$ then equilibrium is unique.
If $\delta > 1/|\lambda_{\min}(\bm{G})|$ then there exists an equilibrium with at least one inactive agent.
\end{thm}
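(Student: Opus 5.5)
The plan is to use the potential-game structure of~\cite{Bra14}. Define
\[
\Phi(\bm{x}) = \sum_i x_i - \tfrac12\sum_i x_i^2 - \tfrac{\delta}{2}\sum_{i,j} g_{ij}x_ix_j
= \bm{1}^\top\bm{x} - \tfrac12\,\bm{x}^\top(\bm{I}+\delta\bm{G})\bm{x}.
\]
A direct computation gives $\partial\Phi/\partial x_i = 1 - x_i - \delta\sum_j g_{ij}x_j = \partial U_i/\partial x_i$, so $\Phi$ is an exact potential for the payoffs given above. Each $U_i$ is strictly concave in $x_i$, with maximiser $\max(0,1-\delta\sum_j g_{ij}x_j)$ on $[0,\infty)$. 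This maximiser never exceeds $1$, so restricting strategies to $[0,1]$ changes nothing. The proof then has three steps.

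\textbf{Existence.} $\Phi$ is continuous on the compact cube $[0,1]^N$, so it attains a maximum $\bm{x}^*$. At a maximiser, no coordinate can be changed to increase $\Phi$. By exact potentiality, no agent can then increase her payoff, so $\bm{x}^*$ is a Nash equilibrium. Equivalently, the KKT conditions for $\max\Phi$ over the nonnegative orthant are exactly the best-response conditions $x_i=\max(0,1-\delta\sum_j g_{ij}x_j)$.

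\textbf{Uniqueness.} The Hessian of $\Phi$ is $-(\bm{I}+\delta\bm{G})$, and the eigenvalues of $\bm{I}+\delta\bm{G}$ are $1+\delta\lambda$ for eigenvalues $\lambda$ of $\bm{G}$. If $\delta<1/|\lambda_{\min}(\bm{G})|$, the smallest of these is $1+\delta\lambda_{\min}(\bm{G}) > 0$, so $\Phi$ is strictly concave. Every Nash equilibrium satisfies the KKT conditions of the concave program $\max_{\bm{x}\ge 0}\Phi$; this is coordinatewise stationarity with complementary slackness. For a concave program the KKT conditions are sufficient for global optimality, so every equilibrium is a global maximiser. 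Strict concavity gives a unique maximiser, hence a unique equilibrium. The point to handle with care is the direction ``every equilibrium is a maximiser''. It holds because the feasible set is a product of intervals and the equilibrium conditions coincide with KKT, so no constraint qualification issue arises.

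\textbf{An equilibrium with an inactive agent when $\delta>1/|\lambda_{\min}(\bm{G})|$.} I expect this step to be the main obstacle. Take $\bm{x}^*$ to be a global maximiser of $\Phi$ on $[0,\infty)^N$, which is bounded as above; it is an equilibrium by the existence argument. Suppose all agents were active. Then $\bm{x}^*$ is an interior critical point, so it is a local maximum in every direction and the Hessian $-(\bm{I}+\delta\bm{G})$ must be negative semidefinite. But with $\bm{v}$ an eigenvector for $\lambda_{\min}(\bm{G})$,
\[
\bm{v}^\top(\bm{I}+\delta\bm{G})\bm{v} = (1+\delta\lambda_{\min}(\bm{G}))\|\bm{v}\|^2 < 0,
\]
so moving from $\bm{x}^*$ along $\pm\bm{v}$, by a small step that stays in the positive orthant, strictly increases the quadratic $\Phi$. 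This contradicts maximality, so some agent is inactive at $\bm{x}^*$. The care needed is in checking that a small step along $\bm{v}$ preserves feasibility, which holds because $\bm{x}^*>0$ is interior, and that $\Phi$'s exact quadratic form gives $\Phi(\bm{x}^*+t\bm{v})-\Phi(\bm{x}^*) = -\tfrac{t^2}{2}\bm{v}^\top(\bm{I}+\delta\bm{G})\bm{v}>0$, the linear term vanishing since $\nabla\Phi(\bm{x}^*)=0$.
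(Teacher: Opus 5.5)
Your proposal is correct and is essentially the argument of \cite{Bra14}, which the paper cites rather than reproves. That argument uses the same potential $V(\bm x)=\bm x^\top\bm 1-\frac12\bm x^\top(\bm I+\delta\bm G)\bm x$ that the paper invokes in the proof of Theorem~\ref{th:stability}: existence comes from compactness, uniqueness from strict concavity plus sufficiency of the KKT conditions, and the inactive agent from the indefinite Hessian, which rules out an interior global maximiser. The only step you leave terse is the existence of a maximiser of $\Phi$ on the unbounded orthant; this follows because $\bm x^\top\bm G\bm x\ge 0$ there, so $\Phi(\bm x)\le\bm 1^\top\bm x-\tfrac12\|\bm x\|^2$. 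Your remark about constraint qualifications is unnecessary, since the argument only uses sufficiency of KKT for a concave program.
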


\subsection{Best-response dynamics and stability}
We consider a random best-response dynamics with discrete time, which is defined as follows. Fix a best-response function, 
$f : [0,1]^N \rightarrow [0,1]^N$ which, for every strategy profile $\bm{x}\in [0,1]^N$ and every agent $i \in N$, selects a best response $f_i(\bm{x})$ of $i$  to $x$. The play starts at time $t = 1$ with a strategy profile $\bm{x}(0)$.
In each time-step  $t = 1,2,\ldots$ one of the agents $i \in N$ is picked uniformly at random and chooses best response $f_i(\bm x_{-i}(t-1))$ to the strategy profile $\bm{x}_{-i}(t-1)$. We then set $\bm{x}(t)=(f_i(\bm{x}_{-i}(t-1)), \bm{x}_{-i}(t-1))$.

We are interested in the stochastic process resulting from this dynamics. In particular, we are interested in stable strategy profiles under this process. Throughout the paper we adopt the following concept of stability.
\begin{defn}
An equilibrium $\bm{x}^* = f(\bm{x}^{*})$ is \emph{locally asymptotically stable in probability} whenever the following two conditions hold:
\begin{enumerate}
    \item it is \emph{Lyapunov stable almost surely}:
    $$\forall\epsilon>0\exists r>0\forall \bm x(0): \left\|\bm{x}(0)-\bm{x}^*\right\|<r\Rightarrow \mathbb P(\sup_{ t>0}\left\| \bm{x}(t)-\bm{x}^* \right\| < \epsilon)=1,$$
    \item it is \emph{attractive}:
    $$\exists \epsilon>0\ \forall \bm{x}(0):\ \left\|\bm{x}(0)-\bm{x}^*\right\| < \epsilon \Rightarrow \mathbb P\left(\lim_{t\to\infty}\left\|\bm{x}-\bm{x}(t)\right\|_\infty= 0\right)=1.$$
\end{enumerate}
\end{defn}

Informally, an equilibrium being Lyapunov stable implies that once the process is near that profile, it stays near it. Being attractive means that if the process is near enough, it converges to the strategy profile almost surely.
We will refer to such equilibria as \emph{stable}. If either of these conditions is not met, we will refer to the equilibrium as \emph{unstable}. 

The following proposition provides a necessary and sufficient condition for stability of equilibria.
\begin{thm}
\label{th:stability}
Let $\bm{x}^*$ be an equilibrium with active set $S\subseteq N$ and strictly inactive agents $N\setminus S$. Let $\bm{G}_{S}$ be the adjacency matrix restricted to $S$.
\begin{itemize}
\item If $\delta\geq 1/|\lambda_{\min}(\bm{G}_{S})|$, $\bm{x}^*$ is unstable.
\item If $\delta<1/|\lambda_{\min}(\bm{G}_{S})|$, $\bm{x}^*$ is stable. Furthermore, we have $$\exists \epsilon>0\ \forall \bm{x}(0):\ \left\|\bm{x}(0)-\bm{x}^*\right\| < \epsilon \Rightarrow\mathbb P\left(\lim_{t\to\infty}\left(1-\frac{1+\delta\lambda_{\min}(\bm G_{S})}{n}\right)^{-t}\left\|\bm{x}^*-\bm{x}(t)\right\|_\infty \text{ converges}\right)=1.$$
\end{itemize}
\end{thm}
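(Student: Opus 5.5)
Because the agents in $N\setminus S$ are strictly inactive, there is a neighbourhood of $\bm{x}^*$ in which each of them has $\delta\sum_j g_{ij}x_j>1$. Our first step is therefore a localisation argument. While the process stays in that neighbourhood, any update of an agent in $N\setminus S$ sets her activity to $0$. After the first such update she stays at $0$, and before it her deviation $|x_i(t)-0|$ does not grow. Similarly, each active agent $i\in S$ has $1-\delta\sum_j g_{ij}x^*_j=x_i^*>0$, so near $\bm{x}^*$ her best response is the interior linear one. Writing $\bm{y}(t)=\bm{x}_S(t)-\bm{x}_S^*$, an update of agent $i\in S$ acts as
\[
y_i \mapsto -\delta\sum_{j\in S} g_{ij}y_j-\delta\sum_{j\notin S}g_{ij}x_j .
\]
The second term vanishes once every inactive neighbour has been updated once, which happens after a geometrically distributed number of steps. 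Up to this transient, the dynamics on $S$ is a random product of matrices $M_i=I-e_ie_i^\top(I+\delta\bm{G}_S)$.

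For the stable case, $\delta<1/|\lambda_{\min}(\bm G_S)|$, we will use the quadratic potential $V(\bm y)=\bm y^\top (I+\delta\bm G_S)\bm y$. This form is positive definite with smallest eigenvalue $\mu=1+\delta\lambda_{\min}(\bm G_S)>0$, and it is the potential of \cite{Bra14} restricted to $S$.
\begin{itemize}
\item A direct computation shows that updating agent $i$ decreases $V$ by exactly $\big((I+\delta\bm G_S)\bm y\big)_i^2$, because the update minimises $V$ in coordinate $i$. Hence $V$ is nonincreasing along every path. Combined with the strict-inactivity margin, this gives Lyapunov stability almost surely (deterministically, in fact) once the initial deviation is small.
\item Averaging over the uniformly random choice of $i$ gives $\mathbb E[V(t+1)\mid\mathcal F_t]=V(t)-\frac1n\|A\bm y\|^2$, where $A=I+\delta\bm G_S$.
\item Since $\|A\bm y\|^2\ge \mu\, \bm y^\top A\bm y=\mu V$, we obtain $\mathbb E[V(t+1)\mid\mathcal F_t]\le(1-\mu/n)V(t)$.
\item Consequently $Z_t=(1-\mu/n)^{-t}V(t)$ is a nonnegative supermartingale, and it converges almost surely.
\end{itemize}
The nonnegative supermartingale convergence theorem thus yields attractivity and the exponential rate, since $\|\bm y\|_\infty^2\le V/\mu$. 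To connect this with the rate written in the theorem, we will apply the same supermartingale argument to a suitable norm (or to $\sqrt V$, via Jensen) and use the fact that the inactive coordinates are exactly $0$ after a finite random time.

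For instability, $\delta\ge 1/|\lambda_{\min}|$, let $\bm v$ be an eigenvector of $\bm G_S$ for $\lambda_{\min}$. Then $V(\varepsilon\bm v)=\varepsilon^2(1+\delta\lambda_{\min})\|\bm v\|^2\le 0$, and $V$ is nonincreasing. When $\delta>1/|\lambda_{\min}|$, a start at $\bm{x}^*+\varepsilon\bm v$ has $V<0$, so the path can never approach $\bm{x}^*$, where $V=0$; this breaks attractivity. The strict inequality is also needed to argue that the decrease is strict with positive probability, after which the path is trapped in the set $\{V\le -c<0\}$, bounded away from $\bm{x}^*$. When $\delta=1/|\lambda_{\min}|$, the vector $\bm v$ lies in the kernel of $A$. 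Every profile $\bm{x}^*+\varepsilon\bm v$ is therefore a fixed point of all the updates. These are equilibria arbitrarily close to $\bm{x}^*$, so attractivity fails again.

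The main obstacle is handling the boundary of the linear regime. We must ensure the process never leaves the neighbourhood in which the inactive agents respond with $0$ and the active agents respond in the interior, and we must control the transient before the inactive coordinates are reset. Monotonicity of $V$ handles the active part, but $V$ involves only $S$. We will therefore use the full potential on $N$, or bound the cross terms by the initial deviation, to guarantee that the whole trajectory stays within the margin.
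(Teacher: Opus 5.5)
Your plan follows the paper's proof. You localize via strict inactivity and, once every inactive agent has moved, reduce to random products of $\bm I-\bm e_i\bm e_i^\top\bm A$ with $\bm A=\bm I+\delta\bm G_S$. You use $V(\bm y)=\bm y^\top\bm A\bm y$ with its exact drop $(\bm A\bm y)_i^2$ and get the supermartingale $\rho^{-t}V(t)$, $\rho=1-\mu/n$, from $\|\bm A\bm y\|_2^2\ge\mu V$. Instability comes from an eigenvector of $\lambda_{\min}(\bm G_S)$. Your fallback of using the full potential on $N$ for localization is what the paper does.

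\textbf{The gap: matching the rate in the statement.} Jensen gives $\mathbb E[\sqrt{V(t+1)}\mid\mathcal F_t]\le\sqrt{\rho}\,\sqrt{V(t)}$, so your supermartingale is $\rho^{-t/2}\sqrt{V(t)}$. Combined with $\|\bm y\|_\infty^2\le V/\mu$, this only repeats what you already had: $\|\bm x(t)-\bm x^*\|_\infty=O(\rho^{t/2})$. It does not give convergence, or even boundedness, of $\rho^{-t}\|\bm x(t)-\bm x^*\|_\infty$.

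No one-step estimate built on $V$ can close this. If $\bm y=a\bm w$ with $\bm A\bm w=\mu\bm w$ and $\|\bm w\|_2=1$, an update of agent $i$ gives exactly $V(t+1)=(1-\mu w_i^2)V(t)$. Hence $\mathbb E[\sqrt{V(t+1)}\mid\mathcal F_t]>\rho\sqrt{V(t)}$ whenever $\mu<1$.

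The paper's proof makes the same jump. From the supermartingale for $\|\bm f\|_{\bm B}^2$ it asserts that $\bm f(t)/\rho^t$ converges, and it calls the $O(\rho^t)$ size heuristic. So what both arguments establish is that $\rho^{-t}\|\bm x(t)-\bm x^*\|_\infty^2$ stays almost surely bounded.

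The exponent $\rho^{t}$ needs a first-moment argument instead. When $\bm G_S$ is bipartite, one works: entrywise $|(\bm I-\bm e_i\bm e_i^\top\bm A)\bm y|\le(\bm I-\bm e_i\bm e_i^\top(\bm I-\delta\bm G_S))|\bm y|$. Take a positive $\bm u$ with $\bm G_S\bm u\le|\lambda_{\min}(\bm G_S)|\bm u$ (Perron vectors of the components). Then $\rho^{-t}\langle\bm u,|\bm y(t)|\rangle$ is a nonnegative supermartingale, which bounds $\rho^{-t}\|\bm y(t)\|_\infty$ almost surely.

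\textbf{A smaller gap, also present in the paper.} For $\delta>1/|\lambda_{\min}(\bm G_S)|$, $V$ is nonincreasing only while the path stays where active agents respond linearly and inactive ones respond with $0$. So ``$V<0$ forever'' does not exclude leaving that region and later converging to $\bm x^*$ from a point with $V\ge 0$. Your ``trapped in $\{V\le -c\}$'' presupposes the same thing.

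Negate Lyapunov stability instead. Start from $\bm x^*+\varepsilon\bm v$. As long as the path stays in a small ball, $V(t)\le V(0)<0$ and $\|\bm A\bm y\|_2^2\ge m\,|V(\bm y)|$, where $m$ is the smallest nonzero modulus among the eigenvalues of $\bm A$. So $V$ drops in expectation by at least $m|V(0)|/n$ per step while staying bounded below on the ball. Hence the path leaves the ball almost surely. Your treatment of the boundary case $\delta=1/|\lambda_{\min}(\bm G_S)|$, via nearby equilibria $\bm x^*+\varepsilon\bm v$, is fine as it stands.
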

Besides showing stability, Theorem~\ref{th:stability} shows that when an equilibrium is stable, convergence is exponential. 
The additional stability criterion can be interpreted as stating that the distance from equilibrium behaves asymptotically as $\left(1-\frac{1+\delta\lambda_{\min}(\bm G_{S})}{n}\right)^t\approx \exp\left(-(1+\delta\lambda_{\min}(\bm G_{S}))\frac tn\right)$ for large $n$.
With $\delta$ converging to $1/|\lambda_{\min}(\bm{G}_{S})|$ from below, the convergence can become arbitrarily slow.
\subsection{The criterion for convergence}
In general, the convergence to equilibrium only happens in the limit $t\to\infty$. Since large part of our results is based on numerical simulations, we need a criterion for stopping the simulations and deciding whether the given strategy profile can be deemed an equilibrium. The criterion we adopt in this paper is as follows. Given a strategy profile $\bm{x}$, let
\begin{equation*}
d(\bm{x})=\|\bm{x}-\max(0,1-\delta \bm{G} \bm{x})\|_\infty,
\end{equation*}
where the $\max$ function in the formula is applied to each coordinate separately, be the distance between $\bm{x}$ and the best response to $\bm{x}$. 

We will say that the process \emph{converged} at time $t$ if $d(\bm{x}(t)) < \epsilon$ for some given value of $\epsilon$. When this happens for the first time, we stop the dynamics. The time at which we stop the dynamics is called the \emph{time to convergence}.\footnote{
Notice that if $U$ is $C$-Lipschitz with constant $C$ then a strategy profile $\bm{x}$ that satisfies $d(\bm{x}) < Q$ is an $\epsilon$-Nash equilibrium with $\epsilon$ that depends on $C$ and $Q$.} To remain scale free, it will be expressed as the number of \emph{rounds} $T$, with each round consisting of $n$ time-steps. The number $T$ is therefore the average number of times each agent updated their strategy.
%
%

\section{Analysis}
\label{sec:analysis}

Our main objective is to obtain an understanding which characteristics of graph structure affect the speed of convergence of the best-response dynamic in question and how these characteristics interact with the externality factor, $\delta$.
The results of~\cite{Bra14} as well as Theorem~\ref{th:stability} suggest that an important factor would be the eigenvalues of the adjacency matrix, i.e. the spectrum of the graph. To check whether this is the case we consider cospectral mates: graphs with the same spectrum but different topology. 

\begin{defn}
Two graphs with adjacency matrices $\bm{G}$ and $\bm{H}$ are \emph{cospectral mates} if $\bm{G}$ and $\bm{H}$ have the same sets of eigenvalues.
\end{defn}

Figure~\ref{fig:cospectral} presents convergence rates of two cospectral mates: the start with $5$ vertices and the cycle with $4$ vertices with a single vertex added (c.f. Figure~\ref{fig:cospectral_graphs}).

The horizontal axis in the plots represents the values of $\delta$, while the vertical axis represents the number of rounds to convergence $T$. Each dot represents a sample trajectory at the time of convergence. For blue dots, the terminal active set supports a stable equilibrium. For red dots, it does not, meaning that the process stopped before reaching it. For each value of $\delta$, $10$ sample trajectories are presented. The distance between consecutive values of $\delta$ is $0.005$.
The red and green vertical lines represent the values $1/|\lambda_i(\bm G)|$, corresponding to negative and positive eigenvalues respectively.
The leftmost red line is therefore the $\delta-$threshold for the possible loss of uniqueness of the equilibrium.

\begin{figure}[!htb]
    \centering
    \begin{minipage}{.49\textwidth}
    \begin{subfigure}[t]{0.49\textwidth}
        \centering
        \centering
        \begin{tikzpicture}[main_node/.style={circle,fill=blue!20,draw,minimum size=1em,inner sep=3pt]}]
            \node[main_node] (1) at (0,0) {1};
            \node[main_node] (2) at (1.5, -1.5)  {2};
            \node[main_node] (3) at (1.5, 1.5) {3};
            \node[main_node] (4) at (-1.5, 1.5) {3};
            \node[main_node] (5) at (-1.5, -1.5) {3};
            \draw (2)--(1)--(3);
            \draw (4)--(1)--(5);
        \end{tikzpicture}
        \caption{The star $K_{1,4}$}
    \end{subfigure}%
    \hfill
    \begin{subfigure}[t]{0.49\textwidth}
        \centering
        \begin{tikzpicture}[main_node/.style={circle,fill=blue!20,draw,minimum size=1em,inner sep=3pt]}]
            \node[main_node] (1) at (0,0) {1};
            \node[main_node] (2) at (1.5, -1.5)  {2};
            \node[main_node] (3) at (1.5, 1.5) {3};
            \node[main_node] (4) at (-1.5, 1.5) {3};
            \node[main_node] (5) at (-1.5, -1.5) {3};
            \draw (2)--(3)--(4)--(5)--(2);
        \end{tikzpicture}
        \caption{The cycle with an added vertex $C_4\cup K_1$}
    \end{subfigure}%
    \caption{An example of cospectral graphs}
    \label{fig:cospectral_graphs}
    \end{minipage}%
    \hfill
    \begin{minipage}{0.49\textwidth}
        \centering
        \begin{subfigure}[t]{0.49\textwidth}
        \centering
        \includegraphics[width=\linewidth]{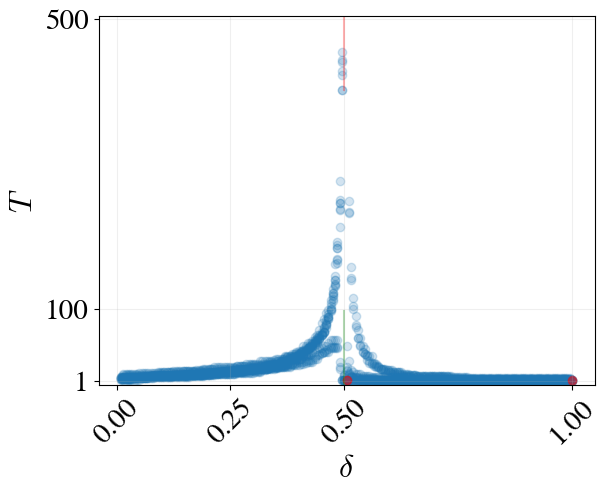}%
        \caption{Convergence times $T$ for \hbox{$C_4\cup{K_1}$}.}
    \end{subfigure}%
    \hfill
        \begin{subfigure}[t]{0.49\textwidth}
        \centering
        \includegraphics[width=\linewidth]{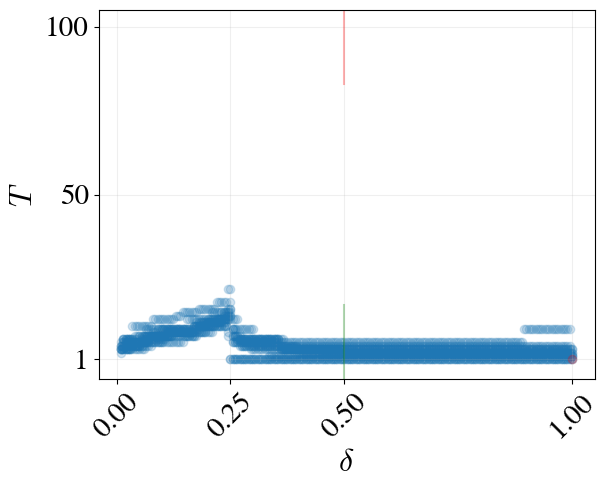}%
        \caption{Convergence times $T$ for $K_{1,4}$.}
    \end{subfigure}
    \caption{Convergence times for cospectral mates.}
    \label{fig:cospectral}
    \end{minipage}
\end{figure}

The star graph $K_{1,4}$ and the cycle with an added vertex $C_4\cup K_1$ have the same spectrum (c.f. \cite[p.~7]{brouwer2012spectra}).
For both of them, $\lambda_{\min}(\bm G)=-2$, implying that uniqueness of equilibria may be lost at $\delta=1/2$.
For the cycle with an extra edge, we observe that dynamics are significantly slower near this value and otherwise fast.
For the star graph, this behavior does not occur, with only a small slowdown visible for a different value of $\delta$.

This showcases that the convergence times can be different for cospectral mates.
The differences are caused by different \emph{subgraph structures}, which lead to different equilibria and are not determined by spectrum.

To get a better understanding of the properties of graph structure and the properties of best-response dynamics that lead to slower convergence rates we first address these questions by studying very simple graphs analytically. Then we extend this analysis to arbitrary graphs

\subsection{Path graphs}
\label{sec:path}

In this section we consider path graphs, that is connected graphs over $n \geq 2$ vertices where $g_{ij} = 1$, if $|j-i| = 1$,and $g_{ij} = 0$, otherwise (c.f. Figure~\ref{fig:path}).

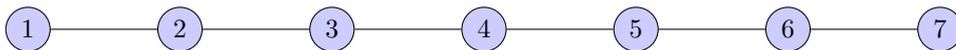
\begin{figure}[h!]
    \centering
        \begin{tikzpicture}[main_node/.style={circle,fill=blue!20,draw,minimum size=1em,inner sep=3pt]}]
            \node[main_node] (1) at (-6,0) {1};
            \node[main_node] (2) at (-4,0)  {2};
            \node[main_node] (3) at (-2,0) {3};
            \node[main_node] (4) at (0,0) {4};
            \node[main_node] (5) at (2,0) {5};
            \node[main_node] (6) at (4,0) {6};
            \node[main_node] (7) at (6,0) {7};
            \draw (1)--(2)--(3)--(4)--(5)--(6)--(7);
        \end{tikzpicture}
    \caption{The path graph over $n=7$ vertices.}
    \label{fig:path}
\end{figure}

It turns out that many of the observed qualitative phenomena regarding the best-response dynamics that we observe for arbitrary graphs occur already for path graphs. In addition path graphs have equilibria which are relatively simple to describe. For the remaining part of this section $\bm{G}$ is an adjacency matrix of a path graph.

\subsubsection{Structure of equilibria of path graphs}
By Theorem~\ref{th:bka:ne}, if $\delta < 1/|\lambda_{\min}(\bm G)|$, then there exists a unique equilibrium.
In addition, since degree of a node in a path graph is at most $D = 2$, it is easy to see that for $\delta < 1/D < 1/|\lambda_{\min}(\bm G)|$, all the agents in the unique equilibrium are active.
Indeed, an agent with $d$ neighbors can obtain external activity of up to $d\delta$. 
Hence, for the agent to be inactive we need to satisfy $d\delta\geq 1$.
If this does not hold for a given agent, the agent must be active in equilibrium. Hence if $D\delta<1$, no agent meets the criteria for inactivity and, therefore, all agents are active.

We have $D=2$ and $\lambda_{\min}(\bm G)=2\cos(\frac{n}{n+1}\pi)$ as shown in~\cite[p.~9]{brouwer2012spectra}.
This means that, for $\delta < 1/2$, there is a unique equilibrium with all agents active and for $\delta<-1/(2\cos \frac{n}{n+1}\pi)$ there is a unique equilibrium.
In particular, these values are not the same and so we may obtain a unique equilibrium where not all of the agents are active.
This happens for paths of odd length as we will see in more detail later.

Now let us classify the possible active sets. Every subgraph of a path graph is a disjoint union of smaller path graphs.
Hence, the active sets will be composed of blocks (connected subgraphs of $G$ which are paths or independent vertices) of varying lengths, with agents who choose non-zero strategies and agents who choose inactivity between the blocks. 
The level of activity of endpoints of the consecutive components must sum to at least $1/\delta$ for the inactive agent between them to remain inactive.
For $\delta<1$, this implies that the inactive gaps can have length of at most $1$ and that the endpoints need to be active.
Hence a configuration of a set of active agents in equilibrium can be represented by a sequence of positive integer values, $(a_1,\ldots,a_k) \in \mathbb{Z}_{> 0}$, representing the sizes of subsequent blocks. We will denote such a configuration by $a_1-\dots-a_k$ (c.f. Figure~\ref{fig:path_configuration} for an example).
\begin{figure}[h]
    \centering
        \begin{tikzpicture}[main_node/.style={circle,fill=blue!20,draw,minimum size=1em,inner sep=3pt]}]
            \node[main_node] (1) at (-6,0) {1};
            \node[main_node][fill=red] (2) at (-4,0)  {2};
            \node[main_node] (3) at (-2,0) {3};
            \node[main_node][fill=red] (4) at (0,0) {4};
            \node[main_node] (5) at (2,0) {5};
            \node[main_node][fill=red] (6) at (4,0) {6};
            \node[main_node] (7) at (6,0) {7};
            \node[main_node] (8) at (8,0) {8};
            \draw (1)--(2)--(3)--(4)--(5)--(6)--(7)--(8);
        \end{tikzpicture}
    \caption{The configuration $1-1-1-2$ on the path graph of length $8$. Inactive agents are marked red.}
    \label{fig:path_configuration}
\end{figure}
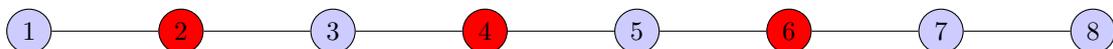
Given a configuration $a_1-\ldots-a_k$, we will say that it \emph{supports an equilibrium}, for a given $\delta\in [0,1]$, if there exists an equilibrium with the active set consisting of blocks of sizes $(a_1,\ldots,a_k)$, separated by inactive agents. We will also use $\bm{G}(a_i)$ to denote the adjacency matrix of the sub-path of the path graph restricted to $a_i$ agents in the $i$'th block, or an \emph{$a_i-$block}, for short.
For each inactive agent lying between two blocks of active agents, we will say that this agent \emph{separates} the two blocks.

We will now analyze which blocks of active agents can occur in equilibria.
The analysis naturally consists of two steps.
Firstly, we need to establish which active configurations are viable by themselves.
Secondly, we need to establish the conditions for a configuration to support its inactive agents.

To classify the valid blocks we begin by noting that the eigenvalues of a path graph over $n$ vertices are given by $2\cos(\frac{k\pi}{n+1})$, for $k\in\{1,\dots n\}$, with the smallest eigenvalue, $\lambda_{\min}(\bm{G}) = 2\cos{(\frac{n}{n+1}\pi)}$~\cite[p.~9]{brouwer2012spectra}.

We start with a proposition that characterizes the unique stable equilibria in the case of $\delta < 1/|\lambda_{\min}(\bm{G})|$.

\begin{prop}
\label{pr:path:uneq}
	Let $G$ be a path graph over $n$ nodes.\\
	If $\delta < 1/2$, the unique equilibrium has $n$ active agents.
	If $1/2<\delta<1/(2\cos{(\frac{n}{n+1}\pi)})$	then
	\begin{enumerate}
	\item if $n = 2m$, the unique equilibrium has $n$ active agents,
	\item if $n = 2m+1$, the unique equilibrium has $m+1$ active agents and the configuration of the set of active agents is $1-\ldots-1$.
\end{enumerate}
\end{prop}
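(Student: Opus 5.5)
By Theorem~\ref{th:bka:ne}, for $\delta<1/|\lambda_{\min}(\bm G)|=1/|2\cos(\frac{n}{n+1}\pi)|=1/(2\cos\frac{\pi}{n+1})$ the equilibrium is unique. I read the upper bound in the statement in this absolute-value sense. So in every case it is enough to \emph{exhibit} one equilibrium with the claimed active set. Uniqueness then identifies it as \emph{the} equilibrium.

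\emph{Case $\delta<1/2$.} I would reuse the degree argument given before the proposition. Since $x_j\le 1$ and every node has at most two neighbours, we have $\delta\sum_j g_{ij}x_j\le 2\delta<1$. Hence the best response $\max(0,1-\delta\sum_j g_{ij}x_j)$ is strictly positive, and no agent can be inactive in any equilibrium. This also covers even $n$ for $\delta<1/2$.

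\emph{Case $1/2<\delta<1/(2\cos\frac{\pi}{n+1})$, $n=2m+1$.} Take $x_i=1$ for odd $i$ and $x_i=0$ for even $i$. Each odd agent has only zero neighbours, so its best response is $1$. Each even agent $i\in\{2,\dots,2m\}$ is interior and has both neighbours at $1$, so $\delta\sum_j g_{ij}x_j=2\delta>1$ and its best response is $0$. This is an equilibrium with $m+1$ active agents in configuration $1-\dots-1$, and uniqueness finishes this case.

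\emph{Case $1/2<\delta<1/(2\cos\frac{\pi}{n+1})$, $n=2m$.} Here I would show that the interior linear system has a strictly positive solution, i.e.\ that $x_i+\delta(x_{i-1}+x_{i+1})=1$ for $i=1,\dots,n$ with $x_0=x_{n+1}=0$ has all $x_i>0$. Such a solution is an equilibrium with all agents active, and uniqueness concludes.

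I would solve the recurrence explicitly. The particular solution is $1/(1+2\delta)$. For the homogeneous part I substitute $y_i=(-1)^ic_i$, which turns the recurrence into $c_i=\delta(c_{i-1}+c_{i+1})$. This has solutions $\cos$ and $\sin$ of $\alpha i$ with $2\delta\cos\alpha=1$. The hypothesis $1/2<\delta<1/(2\cos\frac{\pi}{n+1})$ is exactly $\alpha\in(0,\frac{\pi}{n+1})$.

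Exploiting the reflection symmetry $i\mapsto n+1-i$ together with the parity of $n$, the symmetric homogeneous solution is $(-1)^i\sin\bigl(\alpha(i-\tfrac{n+1}{2})\bigr)$. So I would use the ansatz
\[ x_i=\frac{1}{1+2\delta}+B(-1)^i\sin\Bigl(\alpha\bigl(i-\tfrac{n+1}{2}\bigr)\Bigr), \]
and fix $B$ from the single boundary condition $x_0=0$; the condition $x_{n+1}=0$ then follows by symmetry.

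The main obstacle is the final positivity check $x_i>0$ for $1\le i\le n$. Since $|\alpha(i-\frac{n+1}{2})|\le\alpha\frac{n-1}{2}<\alpha\frac{n+1}{2}<\frac{\pi}{2}$, the sines in the interior are strictly smaller in absolute value than the one at the boundary. This yields $|B\sin(\cdot)|<1/(1+2\delta)$ in the interior, which gives positivity.

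If the explicit computation becomes messy, my fallback is a contradiction argument. Suppose an equilibrium with an inactive agent exists; since $\delta<1$ its blocks are separated by single inactive agents. I would then use the equilibrium of an odd block together with the separating agents to build a second equilibrium, contradicting uniqueness. I expect the explicit route to be cleaner.
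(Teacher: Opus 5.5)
Your proposal is correct and takes essentially the same approach as the paper: the degree bound for $\delta<1/2$, the alternating profile $x_i = i \bmod 2$ for odd $n$, and for even $n$ the explicit trigonometric solution of $x_i+\delta(x_{i-1}+x_{i+1})=1$ with $x_0=x_{n+1}=0$, each closed off by uniqueness from Theorem~\ref{th:bka:ne}. The only difference is in the even case. Your symmetric mode $(-1)^i\sin\bigl(\alpha(i-\tfrac{n+1}{2})\bigr)$ with $\alpha=\pi-\theta$ is, up to a constant factor, the paper's $\cos\bigl((i-\tfrac{2m+1}{2})\theta\bigr)$. Your positivity check, comparing $|\sin(\alpha j)|$ with $\sin(\alpha\tfrac{n+1}{2})$ via monotonicity of $\sin$ on $[0,\pi/2]$, is a tidier substitute for the paper's floor-function sign bookkeeping.
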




In general, equilibria on a path graph may feature inactive agents. In particular, by Theorem~\ref{th:bka:ne}, such an equilibrium always exists when the externality factor is above the lowest eigenvalue threshold: $\delta > 1/|\lambda_{\min}(\bm{G})|$. The next result provides conditions for stability of such equilibria.
 
As we discussed above, the active set of agents in equilibrium consists of blocks of different lengths, separated by single agents.
Hence each such block is adjacent to two or one inactive agents. If the adjacent inactive agents remain inactive throughout the dynamic, 
the sub-path consisting of the active agents in the block is stable whenever it would be stable as a standalone path equilibrium.
If it is not stable as such, it will also not be stable in the larger graph. 
Hence, we need to investigate whether the inactive agents will remain inactive.

Let $b(\delta, k)$ denote the activity level of an endpoint agent, in the unique all-active equilibrium for a path of length $k$ and a given externality factor $\delta\in [0,1]$, assuming that such an equilibrium exists for the given $\delta$ and $k$. 
For $\delta$ such that the all-active equilibrium for the path of length $k$ is stable, this equilibrium is also unique, making $b(\delta, k)$ well-defined.

If, in an equilibrium, there is an inactive agent separating two blocks of active agents of lengths $k$ and $l$, it must be that
$$\delta(b(\delta, k)+ b(\delta, l))\geq 1,$$
for otherwise this agent would benefit from choosing a positive activity.

Finally, there is one more condition that must be satisfied by a stable equilibrium with inactive agents. The lengths of active paths in the configuration and the number of inactive agents must add up to the total path length. Combining all these conditions, we obtain necessary and sufficient conditions for a given configuration of block sizes to support a stable equilibrium.
\begin{thm}
\label{thm:paths}
    Let $\bm{G}$ be a path of length $n \geq 2$. For almost any $\delta \in [0,1]$, a configuration $a_1 - \ldots - a_k$ supports a stable equilibrium if and only if the following conditions hold:
    \begin{enumerate}
        \item $\forall i \in \{1,\ldots,k\}:\ \delta< -1/(2\cos(\frac{a_i\pi}{a_i+1}))$,\footnote{For $a_i=1$, we consider this condition met.}\label{p:paths:1}
        \item $\forall i \in \{1,\ldots,k\}:\ (\bm{I} + \delta\bm{G}(a_i))^{-1}1>0$,\label{p:paths:2}
        \item $\forall i \in \{1,\ldots,k\}: b(\delta, a_i) + b(\delta,a_{i+1}) > 1/\delta$,\label{p:paths:3}
        \item $\sum_{i = 1}^k a_i = n-k+1$.\label{p:paths:4}
    \end{enumerate}
\end{thm}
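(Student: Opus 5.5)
We reduce the statement to Theorem~\ref{th:stability} together with an explicit description of the equilibrium supported by a configuration. Fix a configuration $a_1-\ldots-a_k$ and let $S$ be the union of its blocks. Any candidate equilibrium with active set $S$ must solve $(\bm I+\delta\bm G_S)\bm x_S=\bm 1$ on $S$, with $x_j=0$ off $S$. The matrix $\bm G_S$ is block diagonal with blocks $\bm G(a_i)$, because inactive separators break the path. Hence the system decouples, and on block $i$ we need $\bm x^{(i)}=(\bm I+\delta\bm G(a_i))^{-1}\bm 1$, whenever this inverse exists.

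\textbf{Necessity.} Suppose the configuration supports a stable equilibrium $\bm x^*$. For almost every $\delta$ every inactive agent is strictly inactive, so Theorem~\ref{th:stability} applies. It gives $\delta<1/|\lambda_{\min}(\bm G_S)|$. Since $\lambda_{\min}(\bm G_S)=\min_i\lambda_{\min}(\bm G(a_i))=\min_i 2\cos(\frac{a_i\pi}{a_i+1})$, this is exactly condition~\ref{p:paths:1}; for $a_i=1$ the block is a single vertex with eigenvalue $0$, so the condition is vacuous. Under condition~\ref{p:paths:1}, $\bm I+\delta\bm G(a_i)$ is positive definite and hence invertible. Activity of all agents in $S$ then forces condition~\ref{p:paths:2}. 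In this regime $b(\delta,a_i)$ is the endpoint coordinate of $\bm x^{(i)}$, so it is well defined.

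The remaining conditions come from the geometry of the path. The earlier argument shows that for $\delta<1$ gaps have length exactly one and the endpoints $1$ and $n$ are active. Counting vertices then gives condition~\ref{p:paths:4}. A separator between blocks $i$ and $i+1$ sees neighbouring activity $b(\delta,a_i)+b(\delta,a_{i+1})$, so being inactive requires $\delta(b(\delta,a_i)+b(\delta,a_{i+1}))\ge1$. Equality occurs only at finitely many $\delta$: $b(\delta,a)$ is a rational function of $\delta$, and we may assume this equation does not hold identically. We exclude these $\delta$, which yields the strict inequality in condition~\ref{p:paths:3}, read for $i=1,\ldots,k-1$.

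\textbf{Sufficiency.} Assume conditions~\ref{p:paths:1}--\ref{p:paths:4}. Condition~\ref{p:paths:4} lets us place the blocks along the path with single separators, starting at vertex $1$ and ending at vertex $n$. Define $\bm x^*$ by the block solutions $\bm x^{(i)}$ on $S$ and zero on the separators.
\begin{itemize}
\item Agents in $S$ play a best response: condition~\ref{p:paths:2} gives positivity, and the linear equation gives the value $1-\delta\sum_j g_{ij}x_j$. No agent in $S$ is adjacent to another block, since its only possible outside neighbours are separators, which play $0$.
\item Each separator has neighbouring activity $b(\delta,a_i)+b(\delta,a_{i+1})>1/\delta$ by condition~\ref{p:paths:3}. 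So it is strictly inactive and its best response is $0$.
\end{itemize}
Hence $\bm x^*$ is an equilibrium with strictly inactive complement. Condition~\ref{p:paths:1} gives $\delta<1/|\lambda_{\min}(\bm G_S)|$, so Theorem~\ref{th:stability} yields stability.

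\textbf{Main obstacle.} The delicate point is the ``almost any $\delta$'' bookkeeping. Three things must be checked. First, non-strict inactivity and equality in condition~\ref{p:paths:3} occur only on a finite set of $\delta$. Second, the boundary case $\delta=1/|\lambda_{\min}|$ is excluded. Third, the claim that gaps have length one and endpoints are active relies on $\delta<1$, which excludes the single point $\delta=1$. I would handle all three by noting that each $b(\delta,a)$ is a nonconstant rational function of $\delta$ on the relevant range, so each equality defines a finite exceptional set.
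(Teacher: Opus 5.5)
Your proof is correct and follows essentially the paper's route. The paper deduces the result from Theorem~\ref{thm:equilibria}, which is itself just Theorem~\ref{th:stability} combined with $\lambda_{\min}(\bm G_S)=\min_l \lambda_{\min}(\bm G_{S_l})$, exactly as you apply it directly, and it gets condition~\ref{p:paths:3} in the necessity direction from the almost-everywhere strict inactivity you also invoke. That makes your rational-function argument redundant (its unproved step ``the equation does not hold identically'' follows at once from $\delta=0$, where the left side vanishes), while your derivation of condition~\ref{p:paths:4} from the gap structure for $\delta<1$ is slightly more explicit than the paper's appeal to well-definedness of the configuration.
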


Furthermore, condition~\ref{p:paths:3} of Theorem~\ref{thm:paths} can be expressed more concretely for particular ranges of $\delta$.
\begin{prop}
    \label{pr:sequences}
    For $\delta\in (1/|\lambda_{\min}\bm G(2m+2)|, 1/|\lambda_{\min}(\bm G(2m))|)$, we have
    \begin{enumerate}
        \item $b(\delta, 2m)+b(\delta, 1)> 1/\delta$,\label{p:seq:1}
        \item $b(\delta, 1)+b(\delta, 2k)<1/\delta$ for $1<k<m$,\label{p:seq:2}
        \item $2\delta b(\delta, 2m)<1$ and $\lim_{\delta\uparrow\delta_c}2\delta b(\delta, 2m)=1$, where $\delta_c=1/|\lambda_{\min}(\bm G(2m))|$.\label{p:seq:3}
    \end{enumerate}
    In particular, the stable equilibria for $\delta\in (1/|\lambda_{\min}\bm G(2m+2)|, 1/|\lambda_{\min}(\bm G(2m))|)$ are supported exactly by the block configurations $a_1-\dots-a_k$ in which $\forall_i (a_i, a_{i+1})\in\{(1,1), (1, 2m), (2m, 1)\}$.
\end{prop}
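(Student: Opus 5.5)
The plan is to get a closed form for $b(\delta,k)$ and reduce each item to a trigonometric inequality in a single angle variable. Throughout, write $\delta=-1/(2\cos\theta)$ with $\theta\in(\pi/2,\pi)$. This is possible since $\delta>1/2$ on the whole interval, and $\theta$ is a monotone decreasing function of $\delta$. The thresholds $\delta=1/|\lambda_{\min}(\bm G(k))|=-1/(2\cos\frac{k\pi}{k+1})$ then correspond exactly to $\theta=\frac{k\pi}{k+1}$. So the interval in the statement becomes
$$\theta\in\Bigl(\tfrac{2m\pi}{2m+1},\tfrac{(2m+2)\pi}{2m+3}\Bigr).$$

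\textbf{Step 1: closed form for $b(\delta,k)$.} The all-active equilibrium on a $k$-block solves $(\bm I+\delta\bm G(k))\bm x=\bm 1$. This is the recurrence $\delta x_{i-1}+x_i+\delta x_{i+1}=1$ with $x_0=x_{k+1}=0$.
\begin{itemize}
\item The constant particular solution is $c=1/(1+2\delta)$.
\item The characteristic roots are $e^{\pm i\theta}$ up to sign, and the solution is symmetric about the middle of the block. Hence
$$x_i=c\Bigl(1-\frac{\cos(\theta(i-\frac{k+1}{2}))}{\cos(\theta\frac{k+1}{2})}\Bigr),$$
up to the sign convention coming from $-\cos\theta$. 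I will verify this form directly.
\end{itemize}
This gives the explicit endpoint value
$$b(\delta,k)=c\Bigl(1-\frac{\cos(\theta\frac{k-1}{2})}{\cos(\theta\frac{k+1}{2})}\Bigr).$$
For $k=1$ we have $b(\delta,1)=1$. So items \ref{p:seq:1} and \ref{p:seq:2} become $b(\delta,2j)>1/\delta-1$ (respectively $<$), which in $\theta$ reads $b(\delta,2j)\gtrless -2\cos\theta-1$.

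\textbf{Step 2: items 1 and 2.} Substitute $k=2j$ and $c=1/(1-1/\cos\theta)$ into the formula. Each inequality then reduces to the sign of an expression of the form
$$\cos\Bigl(\tfrac{(2j+1)\theta}{2}\Bigr)\cdot g(\theta)+\cos\Bigl(\tfrac{(2j-1)\theta}{2}\Bigr)\cdot h(\theta),$$
with $g,h$ simple functions of $\cos\theta$. Using product-to-sum identities, I expect this to collapse to a sign condition on something like $\cos(\tfrac{(2j+3)\theta}{2})/\cos(\tfrac{(2j+1)\theta}{2})$. That sign is controlled by which of the intervals $\bigl(\frac{2j\pi}{2j+1},\frac{(2j+2)\pi}{2j+3}\bigr)$ contains $\theta$. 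This is exactly why the answer flips between $j=m$ (item 1) and $j<m$ (item 2).

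\textbf{Step 3: item 3.} At $\delta_c$ the matrix $\bm I+\delta\bm G(2m)$ becomes singular. The corresponding eigenvector $(\sin\frac{i\,2m\pi}{2m+1})_i$ is antisymmetric, so it is orthogonal to $\bm 1$, and $b$ therefore has a finite limit. To identify it, substitute $\theta\to\frac{2m\pi}{2m+1}$, where $\theta\frac{2m+1}{2}=m\pi$. Evaluating the formula gives $2\delta_c b=1$ in the limit. The strict inequality below $\delta_c$ follows from the same kind of sign analysis as in Step 2.

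\textbf{Step 4: the ``in particular'' clause.} Combine items 1--3 with Theorem~\ref{thm:paths}.
\begin{itemize}
\item Condition \ref{p:paths:1} forbids blocks longer than $2m$.
\item Condition \ref{p:paths:2} should exclude odd blocks of length $\geq3$: the formula above produces a negative coordinate for them in this range. I will check this sign directly.
\item Item 3 excludes two adjacent $2m$-blocks.
\item Item 2 excludes pairs $(1,2k)$ with $k<m$, and items 1--2 show which pairs including the trivially valid $(1,1)$ are allowed. The case $(2k,2l)$ with $k,l\le m$ is excluded because $b(\delta,2k)\le b(\delta,2m)$ after monotonicity, together with item 3; monotonicity in $k$ needs checking.
\end{itemize}

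\textbf{Main obstacle.} I expect the hard part to be the trigonometric sign bookkeeping in Steps 2 and 4: getting the right half-angle cosine signs on each interval and proving monotonicity of $b(\delta,2k)$ in $k$. I would first try to factor every inequality into a ratio of cosines at multiples of $\theta/2$ and read off signs interval by interval.
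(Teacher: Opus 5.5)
For items 1--3 your plan follows the paper's proof. It uses the same substitution $\delta=-1/(2\cos\theta)$ on $\theta\in(\frac{2m\pi}{2m+1},\frac{(2m+2)\pi}{2m+3})$ and the same closed form (the one in the proof of Proposition~\ref{pr:path:uneq} is yours with $k=2m$). As in the paper, items 1--2 reduce to the sign of a single trigonometric ratio, $2\delta b(\delta,2m)<1$ reduces to $\tan(\frac{2m+1}{2}\theta)\cot\frac{\theta}{2}>0$, and the limit is read off at $\theta=\frac{2m\pi}{2m+1}$.

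One correction to your guess in Step 2: $1+2\cos\theta+b(\delta,2j)$ collapses to
\[
\frac{\sin\bigl(\tfrac{(2j+3)\theta}{2}\bigr)\cos\tfrac{\theta}{2}}{\sin\tfrac{\theta}{2}\,\cos\bigl(\tfrac{(2j+1)\theta}{2}\bigr)},
\]
with a sine, not a cosine, in the numerator. This is essential. The denominator factor $\cos(\frac{(2j+1)\theta}{2})$ has constant sign $(-1)^j$ on $(\frac{(2j-1)\pi}{2j+1},\pi)$. The sign flip comes from the zero of $\sin(\frac{(2j+3)\theta}{2})$ at $\frac{(2j+2)\pi}{2j+3}$, the right end of the $j$-th interval. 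A numerator $\cos(\frac{(2j+3)\theta}{2})$ would have its last zero below $\pi$ at $\frac{(2j+1)\pi}{2j+3}<\frac{2j\pi}{2j+1}$, so it could not separate $j=m$ from $j<m$. Running Step 2 for all $j<m$, including $j=1$, is also right: the paper states item 2 only for $1<k<m$, but the final clause needs $k=1$ to exclude $(1,2)$ pairs.

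The paper gives no argument for the ``in particular'' clause, so your Step 4 goes beyond it. It reaches the right conclusion, but two of its bullets are false as stated; the argument closes only because the true part of each covers the false part of the other.
\begin{itemize}
\item Condition~1 of Theorem~\ref{thm:paths} rules out every block of length $\ge 2m+2$. A $(2m+1)$-block, however, passes it for $\delta<1/|\lambda_{\min}(\bm G(2m+1))|$, and this threshold lies inside your interval.
\item Conversely, Condition~2 does not fail for all odd blocks here. For $m=1$ and $\delta=0.85$ one gets $(\bm I+\delta\bm G(7))^{-1}\bm 1\approx(0.80,0.23,0.10,0.83,0.10,0.23,0.80)>0$, so a direct sign check over all odd lengths will not go through.
\end{itemize}
What you actually need requires no computation. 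If $a\ge 3$ is odd and Condition~1 holds for $a$, then Condition~2 fails. The reason is that for $\delta\ge 1/2$ the profile $x_i=i\bmod 2$ is an equilibrium on $\bm G(a)$, and uniqueness (Theorem~\ref{th:bka:ne}) forbids a second, all-active one.

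The monotonicity you flag comes for free: items 1 and 2 give $b(\delta,2k)<1/\delta-1<b(\delta,2m)$ for $k<m$. Alternatively, items 2 and 3 together with $\delta>1/2$ bound $b(\delta,2k)+b(\delta,2l)<1/\delta$ directly.

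Finally, for a single-block configuration the pair condition is vacuous, so the clause as written needs $k\ge 2$. A lone block is admissible exactly when $n$ is even and $n\le 2m$.
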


We illustrate the findings of this section in the following example. 
A block of length $1$ is always stable. A block of length $2$ is stable for $\delta<1$ and both agents in the block choose activity level $1/(1+\delta)$ in equilibrium. A block of length $3$ supports a stable equilibrium for $\delta<1/2$.
Finally, the block of length $4$ is stable for $\delta < (\sqrt5-1)/2=1/\phi$.\footnote{The constant $\phi\approx1.618$ is also known as the \emph{golden ratio}. It has $1/\phi\approx0.618$} 
Larger blocks are also unstable for $\delta \geq 1/\phi$.
Thus it follows that stable equilibria for $\delta \in (1/\phi, 1)$ only consist of blocks of lengths $1$ and $2$.
For this to be possible for two blocks of length $2$ separated by a common inactive agent, we need
\begin{align*}
 \frac{2\delta}{1+\delta}\geq 1,
\end{align*}
which is not satisfied for $\delta<1$. Hence of a block of length two share a common inactive agent with another block, that block must be of length $1$.
The condition for the separating agent to be inactive in equilibrium is
\begin{align*}
    \delta+\frac{2\delta}{1+\delta}\geq 1\iff\delta\geq 1/\phi\approx0.618,
\end{align*}
which is always satisfied for $\delta \in (1/\phi, 1)$.
This reasoning is generalized for blocks of length $l>2$ in Proposition~\ref{pr:sequences}.

These observations allow us to construct the recurrence for the number of equilibria on a path of length $n$ for $\delta\in(1/\phi,1)$.
By either placing a block of length $1$ at the start and then filling the rest with any other valid equilibrium or by placing a block of length $2$ and then following with a block of length $1$ and anything afterwards we obtain the following recurrence:
\begin{align*}
    e_n=e_{n-2}+e_{n-5},
\end{align*}
with the initial conditions $a_1=1, a_2=1, a_3=1, a_4=2, a_5=1$.

Solving the appropriate quintic polynomial numerically one can show that $e_n$ grows approximately as $1.2365^n$.
Similar analysis can be done for smaller values of $\delta$ and also leads to exponentially increasing number of stable equilibria.

In general dynamical systems, we expect the convergence to slow down as we approach the parameters for which the system is not stable. 
Theorem~\ref{thm:paths} informs us what the stability thresholds of particular components of the system are, suggesting the values of externality factor for which the convergence will be slow.
When analyzing the dynamics, we will use informal terms to describe the convergence time plots.
We will use the word \emph{peak} to describe the behavior near the values of $\delta$ at which local maxima of convergence time are. 
Pinpointing the exact value of $\delta$ at which the local maxima occurs can be difficult and we will often refer to approximate values.
When explaining slow convergence, we will refer to equilibria as \emph{close to losing stability}.
This phrase means that the given point of interest $\delta_c$ is the stability threshold of the equilibrium and the phenomenon occurs for $\delta\in(\delta_c-\epsilon,\delta_c)$ for some value of $\epsilon>0$ small enough. For example, we will observe that the convergence times increase as $\delta\rightarrow1=-1/(2\cos{\frac{2\pi}{3}})$ with $\delta<1$ and that the convergence is fast at $\delta=1$. 
We will refer to such behavior as the convergence slowing down due to the loss of stability near the threshold $\delta=1$.
Similarly, we can refer to a phenomenon occurring whenever an equilibrium is \emph{barely stable} whenever it occurs for $\delta\in(\delta_c,\delta_c+\epsilon)$ for some $\epsilon>0$ small enough.
Finally, we can refer to a phenomenon occurring whenever a condition of the form $f(x)>0$ for some function $f$ is \emph{barely not met} whenever it occurs for $f(x)\in(-\epsilon, 0)$ for $\epsilon>0$ sufficiently small.

\subsubsection{Small path dynamics}
In this section we consider the best-response dynamics on paths with small number of nodes.
In the case of $n=1$ the dynamics converges in a single round.
The case of $n = 2$ is more interesting. 

The eigenvalues of $\bm G$ are $\pm 1$ and so for $\delta<1$ the equilibrium is unique. It is given by $\bm x^*=(\frac{1}{1+\delta},\frac{1}{1+\delta})$ and we can explicitly compute the rate at which we converge to it.
Consider a system where we start at strategy profile $\bm{x}(0) =(x_1(0), x_2(0))$.
Suppose we let the agents change their strategies for $T$ rounds. 
If the same agent, $i$, is chosen to change twice in a row, the change is idempotent, as $i$ is already playing the best response. 
Therefore we introduce effective time $\tau$, distributed according to $Binom(2T ,1/2)$ and interpreted as the number of alternating changes. 
Without loss of generality, we assume that agent $1$ changes its strategy first.

Let $f(z)=\max(0, 1-\delta z)$. 
We have that 
\begin{align*}
    x_1(2\tau+1) &= f(x_2(2\tau))\\
    x_2(2\tau+2) &= f(x_1(2\tau+1)).
\end{align*}
It suffices to analyze the sequence
\begin{equation*}
    x_1(2\tau+1)=f(f(x_1(2\tau-1))).
\end{equation*}
For $\delta,z \leq 1$, we have $f(z)=1-\delta z$ and so $f(f(z))=1-\delta+\delta^2 z$ and, for $\delta< 1$,
\begin{equation}
    x_1(2\tau+1) = (1-\delta^{2\tau}) \frac{1-\delta}{1-\delta^2}+\delta^{2\tau}x_1(1)
              = \frac{1-\delta^{2\tau}}{1+\delta}+\delta^{2\tau}x_1(1),
\end{equation}
with $x_1(1) = 1 - \delta x_2(0)$.
Hence, the convergence rate is exponential and it becomes much slower when we approach $\delta=1$.
The expected number of rounds to converge to an equilibrium is of the order of $\log(\epsilon)/\log(\delta)$ rounds.
At $\delta = 1$ the equilibrium is no longer unique and any strategy profile of the form $(1-z, z)$, for any $z \in [0,1]$, is an equilibrium.
Hence the convergence rate slows down when $\delta$ approaches a value at which the set of equilibria changes. 
At $\delta=1$, we get instant convergence to any one of the infinitely many equilibria:
\begin{align*}
    x_1(2t) = x_1(2) = 1 - x_2(0).
\end{align*}
This behavior is consistent with the simulated results in Figure~\ref{fig:p2}. The convergence rate slows down near $\delta=1$ and is again fast at $\delta=1$.

\begin{wrapfigure}{r}{0.3\textwidth}
    \centering
    \includegraphics[width=\linewidth]{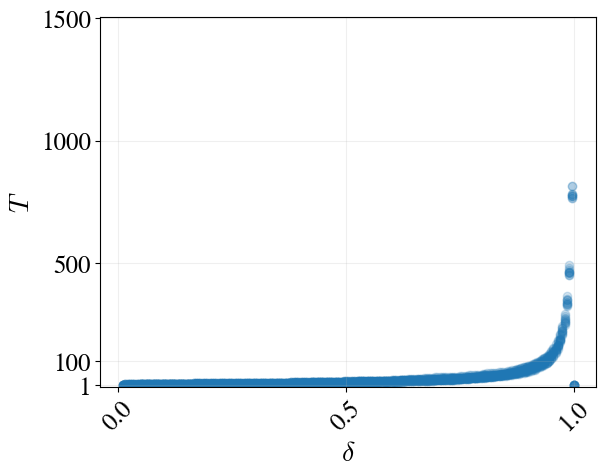}
    \caption{Convergence times $T$ for the path graphs over $n = 2$ nodes.}
    \label{fig:p2}
\end{wrapfigure}
We now turn to analyzing the convergence rate for $n \geq 3$ nodes. To do this, we examine how the classification of equilibria on small paths and their qualitative changes affect the dynamics. The analysis shows, in particular, what happens when we get close to invalidating some of the conditions for stability stated in Theorem~\ref{thm:paths}. The phenomena that we observe for small graphs will be helpful later to explain the results for larger graphs.

When the number of nodes grows, the number of possible configurations of equilibrium blocks of active agents and, consequently, the complexity of the best-response dynamics, increases significantly. This makes the problem prohibitive for formal analysis and most of our findings are based on numerical simulations. Still, we are able to support some of them findings by formal results.

We start with the case of $n = 3$. The convergence rates, for different values of $\delta\in [0,1]$, are presented in Figure~\ref{fig:p3}.
We observe that the diagram features a single peak located around the value of $\delta = 1/2$, which is different from the threshold for the possible loss of uniqueness of the equilibrium, which occurs at $1/|\lambda_{\min}(\bm{G})| = \sqrt{2}/2$.

\begin{wrapfigure}{r}{0.3\textwidth}
    \centering
    \includegraphics[width=\linewidth]{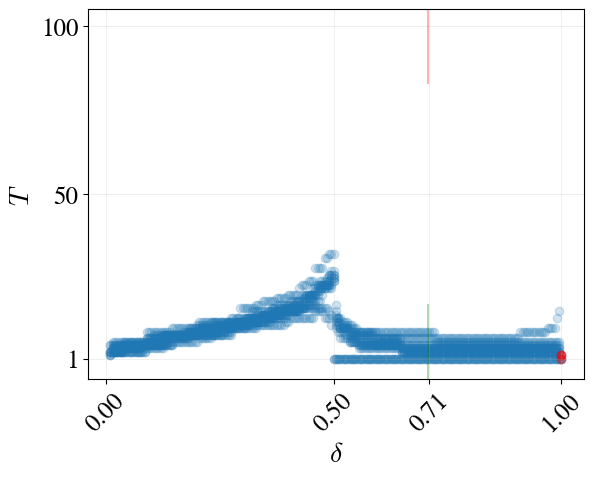}
    \caption{Convergence times $T$ for the path graphs over $n = 3$ nodes.}
    \label{fig:p3}
\end{wrapfigure}
This is because, although the equilibrium remains unique, it changes qualitatively. Instead of all agents being active, we arrive at an equilibrium where the inside agent does not invest, while the outer agents do (c.f. Proposition~\ref{pr:path:uneq}).
While the agents' activity levels near the value of $\delta=1/2$ change continuously in $\delta$, the change of convergence times seems to be sharp. 
This is because the active set we converge to changes. 

Indeed, when $\delta$ crosses the threshold $1/2$, the best-response dynamics converges to the equilibrium $1-1$ and convergence to this equilibrium is quick. Once the middle agent becomes inactive, the other agents converge to the equilibrium in their next move.
Furthermore, crossing the threshold of $1/|\lambda_{\min}(\bm{G})|$ (indicated by the red vertical line in Figure~\ref{fig:p3}) does not affect the convergence rate significantly. 
This threshold is related to the stability of the equilibrium with active set $S=\{1,2,3\}$, which is not the active set for $\delta>1/2$ and so crossing this threshold does not affect the dynamics.

The largest values observed near the peak are significantly smaller than the values observed near the peak near $\delta=1$ for $n=2$ (c.f. Figure~\ref{fig:p2}).
The main difference between the two cases is that, for $n=2$, the unique equilibrium has $n$ active agents for all values of $\delta<1/|\lambda_{\min}(\bm{G})| = 1$, whereas for $n=3$, all agents are active only for $\delta<1/2<1/|\lambda_{\min}(\bm{G})| = \sqrt{2}/2$. 
If the unique equilibrium for $n=3$ had $n$ active agents for all values of $\delta<1/|\lambda_{\min}(\bm{G})| = \sqrt{2}/2$, we could expect to see a similar peak as for $n=2$ near $\delta=1$. 
As the unique equilibrium's active set changes earlier than at the loss of stability threshold, we only observe a small increase in the convergence times.

We observe no change in the speed of convergence when $\delta$ approaches $1$. 
While the equilibrium is no longer unique at $\delta=1$,\footnote{
In this case, there are only $2$ equilibria at $\delta=1$. They have active sets $\{1, 3\}$ and $\{2\}$.
} the convergence times do not increase as we get near this threshold. For $n=2$, this increase is caused by the existence of active block of length $2$ in the configuration and blocks of length $2$ loose stability at $1$. Here, the unique equilibrium has configuration $1-1$ and so has no such block.

Next, we consider the case of $n=4$ (c.f. Figure~\ref{fig:p4}, Figure~\ref{fig:p4_lastSchange} and Figure~\ref{fig:p4_convfail}). 

\begin{figure}[h]
\centering
\begin{minipage}{.3\textwidth}
\centering
    \includegraphics[width=\linewidth]{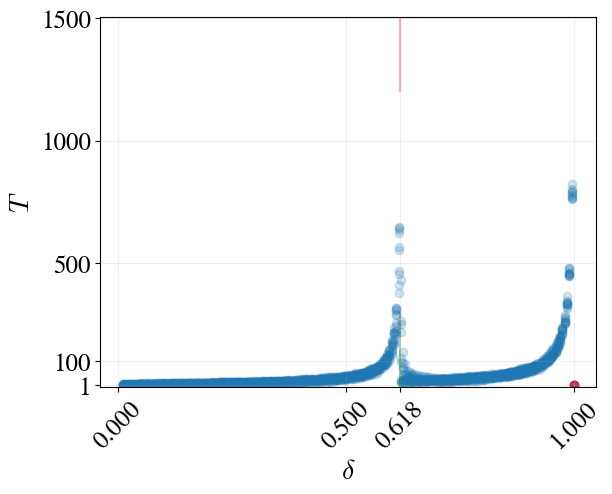}
    \caption{Convergence times $T$ for the path graphs over $n = 4$ nodes.}
    \label{fig:p4}
\end{minipage}%
\hfill
\begin{minipage}{.3\textwidth}
\centering
    \includegraphics[width=\linewidth]{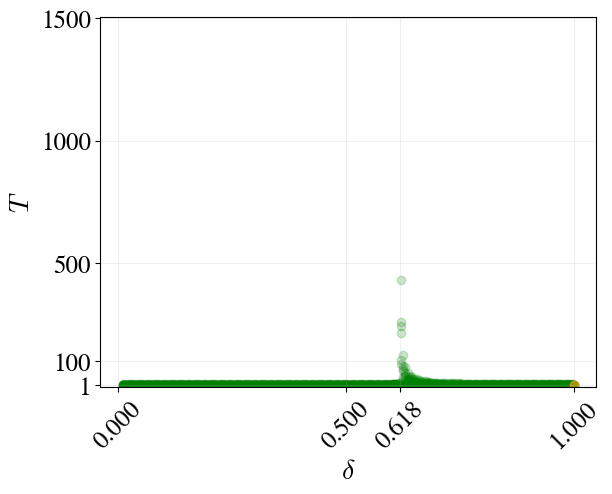}
    \caption{Time of the final active set change for the path graphs over $n = 4$ nodes.}
    \label{fig:p4_lastSchange}
\end{minipage}%
\hfill
\begin{minipage}{.3\textwidth}
\centering
    \includegraphics[width=\linewidth]{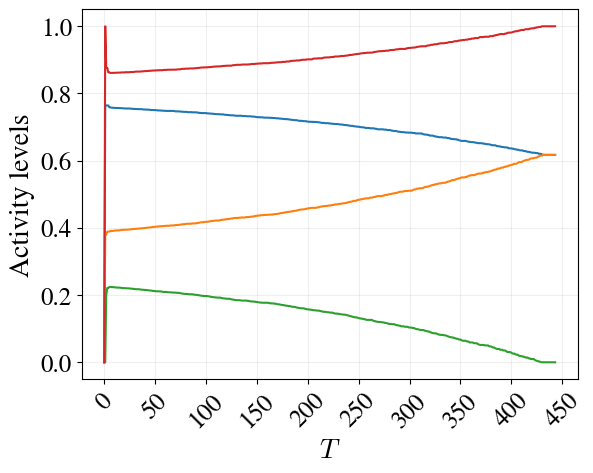}
    \caption{Sample trajectory for a path graph over $n = 4$ nodes and $\delta=0.62$}
    \label{fig:p4_convfail}
\end{minipage}%
\end{figure}

We observe two significant slowdowns of the convergence rate near the values of $\delta = 1/|\lambda_{\min}(\bm{G})|$ and $\delta = 1$, 
where one of the equilibria loses its stability.
Unlike in the case of $n=3$, past the threshold $\delta = \lambda_{\min}(\bm{G})$ equilibrium is no longer unique. 
Three different equilibria exist, described by the configurations: $4, 2-1, 1-2$. 
The equilibrium with all agents being active does not satisfy condition~\eqref{p:paths:1} of Theorem~\ref{thm:paths} and is not stable for $\delta\geq\lambda_{\min}(\bm{G})$. The latter two equilibria are stable for $\lambda_{\min}(\bm{G})\leq \delta<1$.

Furthermore, in Figure~\ref{fig:p4_lastSchange} we observe the time taken to establish the final active set. 
In this plot, a single peak is visible near the value $\delta = \lambda_{\min}(\bm{G})$, with Figure~\ref{fig:p4_convfail} showcasing sample dynamics near this value.
The agents diverge from the equilibrium with $4$ active agents and converge to the equilibrium with configuration $2-1$.
The divergence is slow due to the value of $\delta$ being near the stability threshold.
This is the first example we observe where the shuffle phase is long compared to the asymptotic phase, as it only end while the active set is fully established.

For large values of $\delta$, the active set is again established quickly.
After it it established, the dynamics in the block of $2$ are the same as those described in the analysis for $n=2$, as the inactive agent does no influence them.
This explains the presence of the peak near $\delta=1$.

Next, we consider the case of $n=5$ (c.f. Figure~\ref{fig:p5}).
Overall, the best-response dynamics with $n = 5$ is similar to that for~\hbox{$n = 3$}.
\begin{wrapfigure}{r}{0.3\textwidth}
        \includegraphics[width=\linewidth]{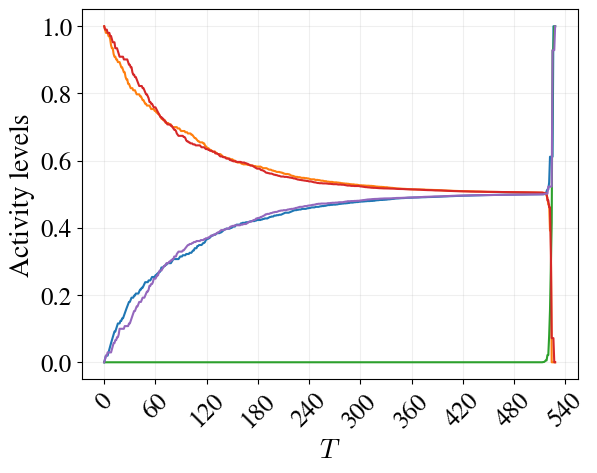}
        \caption{Sample trajectory for a path graph over $n = 5$ nodes and $\delta=0.99$ with the stopping condition using $\epsilon=10^{-5}$ to demonstrate the reshuffle.}
        \label{fig:switchup}
\end{wrapfigure}
\begin{wrapfigure}{r}{0.3\textwidth}
        \includegraphics[width=\linewidth]{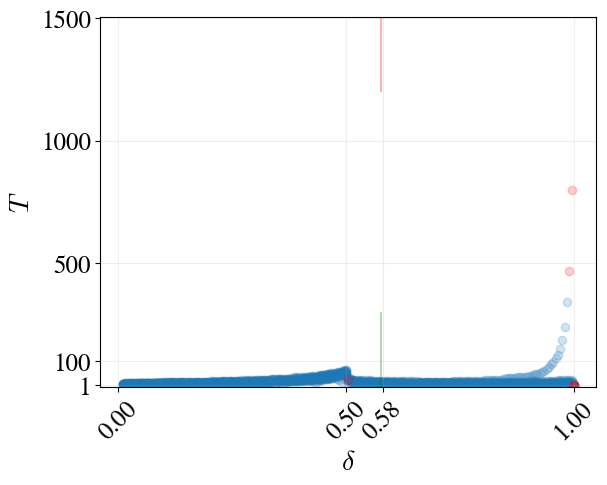}
    \caption{Convergence times $T$ for the path graphs over $n = 5$ nodes.}
    \label{fig:p5}
\end{wrapfigure}

We observe a small peak near the value $\delta=1/2$ and a larger peak near the value $\delta=1$.
Furthermore, only a part of the trajectories converge slow near $\delta=1$.

Let us classify important configurations and explain the behavior observed in the plot.
Below the threshold of $\delta=1/2$ there is a unique stable equilibrium with all agents being active.
Beyond this threshold, the unique equilibrium is of the form $1-1-1$ and it is the only equilibrium for the values of $\delta\in[1/2,1)$.
This was also the case for $n=3$ and explains why the behavior is similar near $\delta=1/2$.

There is another configuration, $2-2$, which is a valid equilibrium candidate as it satisfies conditions \ref{p:paths:1}, \ref{p:paths:2} and \ref{p:paths:4} of Theorem~\ref{thm:paths}. 
The blocks, by themselves, would be stable equilibria but they do not meet condition~\ref{p:paths:3} and so do not induce an equilibrium for the entire graph in the case $n=5$.
The equilibrium activity levels of the pairs of agents in the two active blocks are insufficient to support the separating, middle agent, to remain inactive in equilibrium.
Nevertheless, the best-response dynamics may still converge to these two stable configurations, locally. When the two pairs are converging, it is possible for agents neighboring the separating agent to have the higher activity level level throughout. 
As long as this is the case, the middle agent has an incentive to remain inactive. This stops being the case only when the two sub-paths are close to their own, local, equilibria. This behavior results in slower convergence rate, which becomes more pronounced when the value of $\delta$ is close to $1$.

Such phenomena, where fragments of a graph, separated by inactive agents, converge locally to stable activity profiles and then get perturbed by the separating agents switching from inactivity to activity, play a major role in general. We will refer to them as \emph{reshuffles}. To investigate this closer on a path with $n=5$ nodes, we present the dynamics of the individual activity levels in Figure~\ref{fig:switchup}. Each trajectory in the figure corresponds to a different agent. In particular, the trajectory of the inactive agent separating two pairs of active agents is represented by the green color. As we can see in the figure, the activity level of this agent remains at $0$ for a long time, until the remaining agents converge close enough to their local equilibria. Only then the activity level around the inactive agent becomes insufficient and the agent responds with increasing its activity level and becoming active.
This initiates convergence of the best-response dynamic to a different activity profile, $1-1-1$, which is an equilibrium.
Such a process of convergence to a ``false equilibrium'' may take a long time. During that process the activity profile barely fails to satisfy 
condition~\eqref{p:paths:3} of Theorem~\ref{thm:paths}, which becomes clear after a long time due to condition~\eqref{p:paths:1} of Theorem~\ref{thm:paths} being only barely satisfied.

When the value of $\delta$ is close to $1$, the two pairs converge slowly. The value they converge to is close to satisfying the threshold for inactivity of the separating agent and insufficient activity in the neighborhood of the separating agent may occur arbitrarily late as $\delta$ goes to $1$.
It is important to note here that if the value of $\epsilon$ which determines the stopping condition for the simulation of the best-response dynamics is too large, we may not notice the aforementioned behavior at all and be mistakenly convinced that the dynamics converged to a stable equilibrium.

The order of the updates determines which of the behaviors occurs in the dynamics. In particular, the slow convergence happens if we draw the update order leading to the state $01010$ in the beginning. It is simple to see that we then have the pair dynamics described above on the sides and constant inactivity for a long time in the center. This scenario causes the peak to appear near $\delta = 1$ for some of the simulations, while many of the other trials converge quickly. The existence of slow trajectories is formalized by Fact~\ref{thm:p5}. This points illustrates why the random order of updates in the best-response dynamics is important. In particular, it allows for discovering behaviors that could, otherwise, remain unnoticed. 

\begin{fact}
\label{thm:p5}
Suppose $\frac12<\delta<1$ and $n=5$. Let $T$ denote the number of rounds taken until the last active set change.
We have that
\begin{align*}
\mathbb P(5T > C(\delta))\geq2/25
\end{align*}
where $C(\delta)\to\infty$ with $\delta\to 1$.
\end{fact}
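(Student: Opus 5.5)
We will exhibit an explicit event of probability at least $2/25$ on which the last active-set change happens arbitrarily late as $\delta\to1$. We assume the initial profile is $\bm x(0)=\bm 0$; the intended reading of the fact is the simulation setup.

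The event $E$ is that the first two updates are agents $2$ and $4$ in either order, which has probability $2\cdot\frac1{25}=2/25$. Starting from $\bm 0$, agent $2$ best-responds with $1$ and so does agent $4$. The profile after two steps is then $01010$, the state singled out in the discussion preceding the fact. From here on we condition on $E$.

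Next we show that, on $E$, agent $3$ stays at $0$ for a long time. The two pairs $\{1,2\}$ and $\{4,5\}$ evolve as in the $n=2$ analysis for as long as agent $3$ is inactive. Agent $3$ sees a neighbour sum of $x_2+x_4$, and it stays inactive at each of its updates as long as $\delta(x_2+x_4)\ge1$.

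By the $n=2$ recursion, each of $x_1,x_2$ (and likewise $x_4,x_5$) is a function of its pair's effective update count. The values alternate around $1/(1+\delta)$ with deviation $\delta^{k}$ times the initial deviation. Take the pair $(1,2)$ started at $(0,1)$. Whenever agent $1$ updates, $x_1=1-\delta x_2$. Whenever agent $2$ updates, $x_2=1-\delta x_1$. The deviation $x_2-\frac1{1+\delta}$ starts at $\frac{\delta}{1+\delta}>0$ and is multiplied by $\delta^2$ at each full alternation $2\to1\to2$. Crucially, it never changes sign, since the composition $f\circ f$ is increasing. So $x_2$ decreases monotonically to $1/(1+\delta)$, and $x_2\ge \frac1{1+\delta}+\frac{\delta}{1+\delta}\delta^{2k_2}$, where $k_2$ is the number of alternations of the left pair. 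The same holds for $x_4$ with its own count $k_4$.

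Hence agent $3$ stays inactive while
$$\frac{2\delta}{1+\delta}+\frac{\delta^2}{1+\delta}\bigl(\delta^{2k_2}+\delta^{2k_4}\bigr)\ge1.$$
Since $2\delta/(1+\delta)\to1$, the right margin $1-\frac{2\delta}{1+\delta}=\frac{1-\delta}{1+\delta}$ tends to $0$. This condition therefore holds whenever $\max(k_2,k_4)\le K(\delta)$, where
$$K(\delta)=\Big\lfloor \tfrac{1}{2}\log_\delta\tfrac{1-\delta}{\delta^2}\Big\rfloor\to\infty.$$
Each alternation requires at least one update step, so for $t\le K(\delta)$ steps the bound certainly holds and agent $3$ remains at $0$. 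Meanwhile agents $1,2,4,5$ are all positive, so the active set is $\{1,2,4,5\}$ throughout.

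Finally, the final active set must differ from $\{1,2,4,5\}$. By the discussion before the fact (Proposition~\ref{pr:path:uneq} and Theorem~\ref{thm:paths}), the only equilibrium for $\delta\in(1/2,1)$ is $1-1-1$, with active set $\{1,3,5\}$. The limiting profile of the $2-2$ configuration violates condition~\ref{p:paths:3}, because $2\delta/(1+\delta)<1$. So on $E$, provided the process converges at all (which it does almost surely by the potential-game structure, or the event of non-convergence can simply be absorbed), a change of active set occurs at some time after step $K(\delta)$. This gives $5T\ge K(\delta)=:C(\delta)$ on $E$.

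The main obstacle is the monotonicity claim: that $x_2$ and $x_4$ approach $1/(1+\delta)$ from above without oscillating below it. This is what ensures agent $3$ is never triggered early. We handle it by writing the pair dynamics as $x\mapsto 1-\delta(1-\delta x)$, which is increasing, so the sign of the deviation is preserved. A minor subtlety is that agent $3$'s updates are no-ops while it stays at $0$, so the pairs are genuinely decoupled until the threshold is crossed.
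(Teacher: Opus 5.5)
Your proposal is correct and follows the paper's own proof. It uses the same event (agents $2$ and $4$ update first, probability $2/25$, giving $01010$ from the zero profile), the same pair recursion $x_2=\frac{1}{1+\delta}+\frac{\delta}{1+\delta}\delta^{2k_2}$, and the same inactivity threshold for agent $3$; your cruder bound $\max(k_2,k_4)\le t$ replaces the paper's remark that the fastest schedule splits updates equally between the two sides. One harmless slip: right after $E$ agents $1$ and $5$ are still at $0$, so the active set is not $\{1,2,4,5\}$ throughout, but your conclusion only needs that agent $3$ stays at $0$ up to step $K(\delta)$ while the unique equilibrium's active set $\{1,3,5\}$ contains it.
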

Empirically, reshuffles are more common than the above theorem suggest. In the sample size of $10000$, slow convergence occurred for $20.89\%$ of the total trajectories.

\begin{figure}[h]
\centering
\begin{minipage}{.3\textwidth}
\centering
    \includegraphics[width=\linewidth]{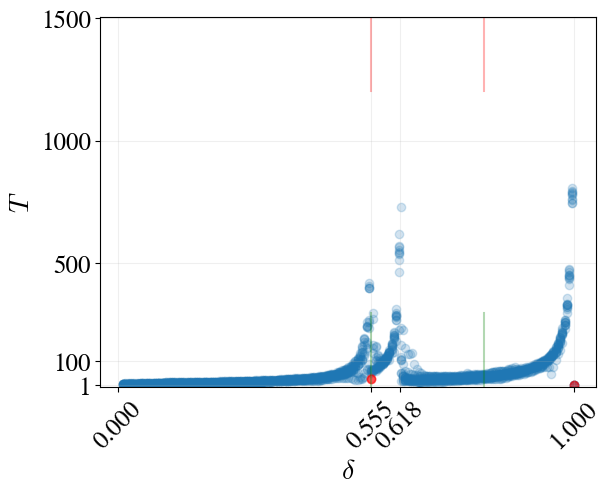}
    \caption{Convergence times $T$ for the path graphs over $n = 6$ nodes.}
    \label{fig:p6}
\end{minipage}%
\hfill
\begin{minipage}{.3\textwidth}
\centering
    \includegraphics[width=\linewidth]{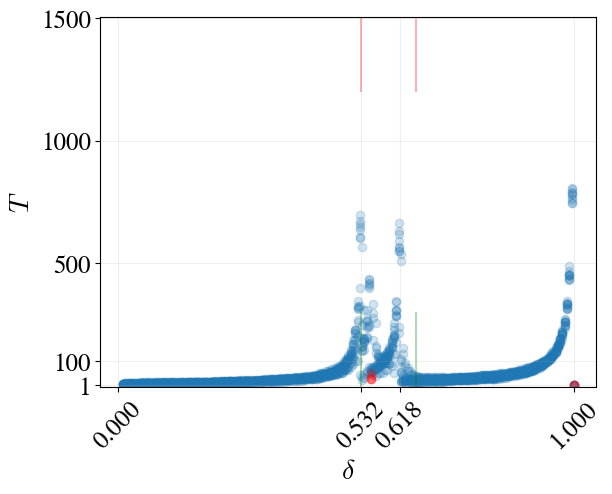}
    \caption{Convergence times $T$ for the path graphs over $n = 6$ nodes.}
    \label{fig:p8}
\end{minipage}%
\hfill
\begin{minipage}{.3\textwidth}
\centering
    \includegraphics[width=\linewidth]{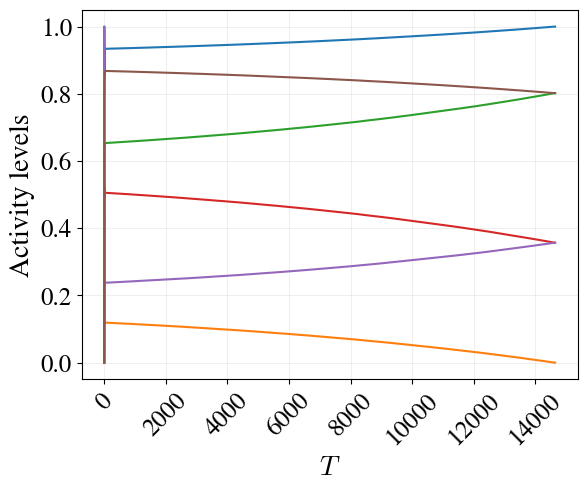}
    \caption{Sample trajectory for $n=6$ and $\delta=0.555$}
    \label{fig:p6_critical}
\end{minipage}%
\end{figure}

The behavior of the best-response dynamics in the cases of $n = 6$ and $n = 8$ are similar. The convergence dynamics for these cases are presented in Figures~\ref{fig:p6} and~\ref{fig:p8}.
For the first time, we observe multiple peaks for intermediate values of $\delta \in (1/2,1/|\lambda_{\min}(\bm{G})|)$.
We observe a similar behavior in the case of larger, general graphs.

In the cases $n = 6$ and $n = 8$ the possible equilibria are still simple to classify and help explain the singularities.
The singularities occur near the values of $\delta_1 = 1/|\lambda_{\min}(\bm{G})| = -1/(2\cos(\frac{n\pi}{n+1}))$ and $\delta_2 =(\sqrt{5}-1)/2$.
For $\delta\in[0,\delta_1)$ there is a unique equilibrium with all agents being active, which loses stability at $\delta = \delta_1$, resulting in the slowdown close to that point.
In fact, we can observe critical behavior here. The sample trajectory in Figure~\ref{fig:p6_critical} has an \emph{almost linear convergence rate}.
This is because the active set on $6$ vertices is barely unstable here, with $\delta_1 \approx 0.554958$.
In effect, the convergence away from the unstable all-active equilibrium is very slow, triggering the stopping criterion (as signified by the corresponding point in Figure~\ref{fig:p6} being red).
This is a more extreme version of what we observed for $n=4$.
After the value of $\delta$ crosses the threshold $\delta_1$, two equilibria of the form $1-4$ and $4-1$ arise for $n=6$. These are the only two stable equilibria and the overall convergence is quick, until the value of $\delta$ reaches the threshold of $\delta_2$, as it is the stability threshold for the block of length $4$.

\begin{wrapfigure}{r}{0.3\textwidth}
        \centering
        \includegraphics[width=\linewidth]{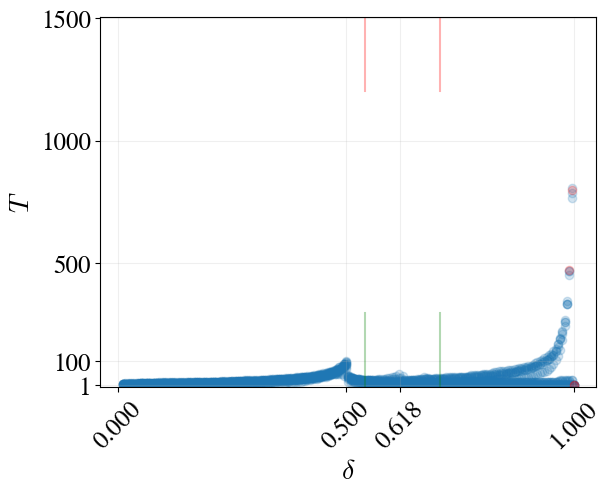}
        \caption{Convergence times $T$ for the path graphs over $n = 7$ nodes.}
        \label{fig:p7}
\end{wrapfigure}
The case of $n = 7$ is similar to the case of $n = 5$ (c.f. Figure~\ref{fig:p7}.). 
The convergence rate around the value of $\delta = 1/2$ is slower than in the case of $n=5$, which was slower than for $n=3$.
This is again caused by the shift of the stability threshold.
The point $\delta=1/2$ where the unique equilibrium changes qualitatively is closer to the threshold $\delta=1/|\lambda_{\min}(\bm G)|$, slowing the convergence more than for the previous odd values of $n$.
The configuration $1-1-1-1$ supports the unique equilibrium for $1/2<\delta<1/\phi$. For larger values of $\delta$, the configuration $2-1-2$ is also a stable equilibrium.
We also obtain the configurations $1-2-2$ and $2-2-1$, which fail only condition~\ref{p:paths:3} of Theorem~\ref{thm:paths}.
Reaching the active sets induced by these configurations leads to different behaviors.
Convergence to the equilibrium induced by $1-1-1-1$ is quick for all values of $\delta$.
For the stable equilibrium induced by $2-1-2$, the convergence is slow near the value $\delta=1$ with the convergence terminating near the equilibrium.
For the configurations $1-2-2$ and $2-2-1$, we observe the phenomenon described in detail for $n=5$.
We slowly converge to a strategy profile which is not an equilibrium, finally changing the active set to the one induced by $1-1-1-1$.
The convergence time of the latter two scenarios is similar, despite one of the trajectories converging to an equilibrium and the other changing at the last moment.

\subsubsection{Large paths}
\begin{wrapfigure}{r}{0.3\textwidth}
    \centering
    \begin{subfigure}[t]{0.3\textwidth}
    \includegraphics[width=\linewidth]{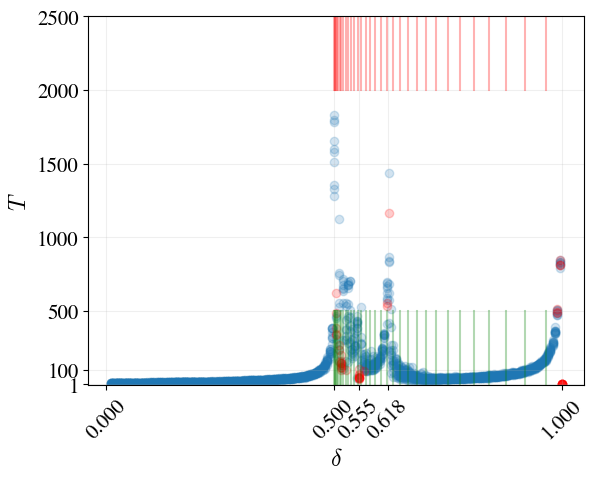}
    \caption{Path with $n=100$ nodes.}
    \end{subfigure}\\
    \begin{subfigure}[t]{0.3\textwidth}
    \includegraphics[width=\linewidth]{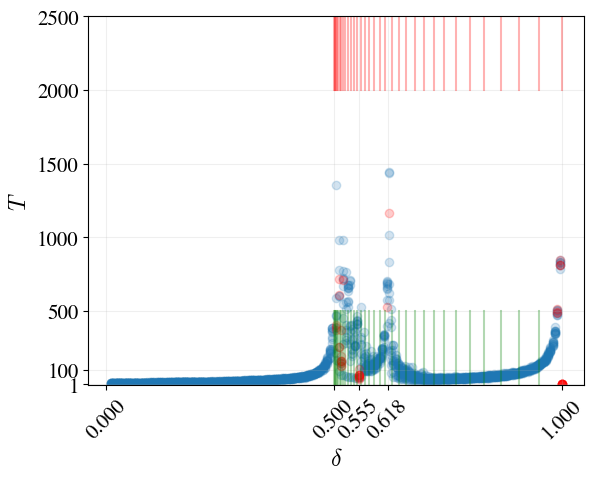}
    \caption{Path with $n=101$ nodes. The trajectories for $\delta=0.5$ are not shown due to long convergence times $T\in(3500, 5000)$ to aid comparability of the overall plots.}
    \end{subfigure}
    \caption{Convergence times $T$ for the path
graphs over $n$ nodes.}
    \label{fig:p100}
\end{wrapfigure}
In this section we analyze the speed of convergence of best-response dynamics on path graphs with large $n$. When $n$ becomes sufficiently large, these plots share similar characteristics across all path graphs, regardless of the parity of $n$. Examples of such plots, for $n = 100, 101$, are presented in Figure~\ref{fig:p100}.

The first peak of convergence times appears around $\delta=1/2$. This is the limit of the sequence of thresholds $1/|\lambda_{\min}(\bm G)|
=1/|2\cos{\frac{n\pi}{n+1}}|$, where equilibrium with all agents active ceases to be stable.
As the thresholds for stability loss approach the value $\delta=1/2$, the difference between paths of different parity diminishes. 
In what we observed, the convergence time near the threshold of stability depended largely on the distance between $\delta$ and this threshold.
For $n=3,5,7$, the peaks at $\delta=1/2$ increased as the value of $1/|\lambda_{\min}(\bm G)|>1/2$ decreased.
For $n=101$, the threshold $1/|\lambda_{\min}(\bm G)|$ is close to $1/2$, leading to a large peak.
At $\delta=1/2$, convergence is slower for odd $n$ than it is for even $n$. The paths of different parity converge to qualitatively different equilibria near this value,\footnote{This is showcased in Appendix~\ref{appendix:parity}. This qualitative difference is only visible close to $\delta=1/2$.} possibly causing the discrepancy between the convergence times.

The differences between paths of different parity are only visible close to $\delta=1/2$. For the values of $\delta<1/2$, the unique equilibrium is different and convergence is slower. This difference persists up to the value $\delta=1/|\lambda_{\min}(\bm G)|$. As a result, the overall plots for odd and even $n$ differ only on a small interval and are overall similar, with the peak near $\delta=1/2$ visible for both $n=100$ and $n=101$ (c.f. Figure~\ref{fig:p100}).

For $\delta\in(\frac1\phi, 1)$ the plot is similar to those observed for small values of $n$ (cf. Figures~\ref{fig:p4},~\ref{fig:p6},~\ref{fig:p8}).
Only the equilibria comprised of blocks of lengths $1, 2$ are stable and the only slowdowns we observe are near the endpoints of the interval. 
As for small values of $n$, this corresponds to either the blocks of length $2$ converging slowly (either to a stable equilibrium or to a non-equilibrium strategy profile, causing a reshuffle) or to blocks of length $4$ diverging slowly due to being close to their respective stability threshold $1/\phi$.

For intermediate values of $\delta\in(\frac12, \frac1\phi)$, there are many slow trajectories and multiple trajectories which stop before reaching a stable equilibrium's active set.

We explain this behavior by again referring to the stable equilibria classification.
Direct enumeration of all equilibria is difficult in this case. To aid understanding of the convergence, we classify which blocks may exist subsequently.
To do this, we will use Proposition~\ref{pr:sequences}. For a system to converge slowly, we only need a single agent to converge slowly. Local understanding of the configurations will therefore be enough to understand the slow convergence.

\begin{figure}[h]
    \centering
    \includegraphics[width=0.5\linewidth]{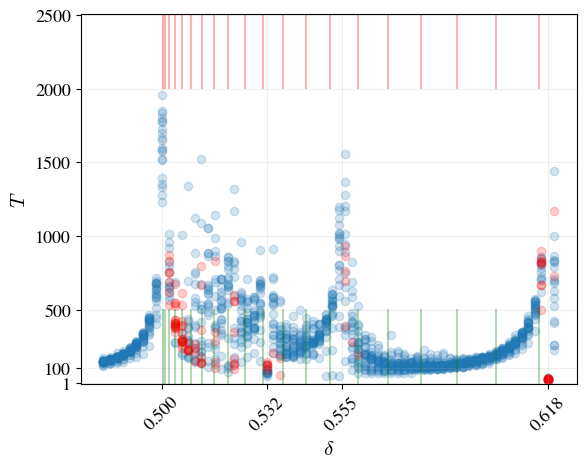}
    \caption{Convergence times $T$ for the path
graphs over $n=100$ nodes, for $\delta\in(0.45, 0.62)$, with increments of $0.002$ and $20$ trajectories per value. The values $0.532, 0.555, 0.618$ are equal to $1/|\lambda_{\min}(\bm G(8))|,1/|\lambda_{\min}(\bm G(6))|,1/|\lambda_{\min}(\bm G(4))|$ respectively.}
    \label{fig:p100zoom}
\end{figure}
We can now focus on the region with the more complex behavior, presented in more detail in Figure~\ref{fig:p100zoom}, with individual trajectories in Figure~\ref{fig:p100_samples}.
In the plot, distinct peaks occur near the values $1/|\lambda_{\min}(\bm G(8))|$, $1/|\lambda_{\min}(\bm G(6))|$, $1/|\lambda_{\min}(\bm G(4))|$ and for the values $\delta\in(1/2, 1/|\lambda_{\min}(\bm G(8))|)$ many slow samples are observed.
This behavior corresponds directly to the study of stable equilibria in the graph, as the consecutive even length blocks lose stability at these thresholds (and odd length blocks support no equilibrium for $\delta>1/2$ as per Proposition~\ref{pr:path:uneq}).

In each part of $(1/2, 1/|\lambda_{\min}(\bm G(4))|)$, described by the thresholds of Proposition~\ref{pr:sequences}, different equilibria are valid. 
In the region $\delta\in (1/|\lambda_{\min}(\bm G(2k+2))|, 1/|\lambda_{\min}(\bm G(2k))|)$, only the equilibria comprised of blocks of lengths $2k$ and $1$, with no two blocks of length $2k$ neighboring one another, are stable.
Indeed, if two such blocks appear sequentially, the trajectory converges to the corresponding strategy profile, possibly only failing to satisfy the inactivity threshold of the separating agent when close to convergence, leading to a reshuffle.
This is similar to the behavior described for $n=5, 7$. 
As we approach the value of $\delta=1/|\lambda_{\min}(\bm G(2k))|$, the blocks of length $2k$ lose stability, leading to slow convergence, and the separating agents' requirements are close to being satisfied (c.f. Statement~\ref{p:seq:3} of Proposition~\ref{pr:sequences}) when surrounded by two blocks of length $2k$, leading to late activity change.
An example of reshuffles is showcased in Figure~\ref{fig:p100_0615}. The trajectory in the figure has $\delta=0.615$ and showcases two reshuffles, occurring in different parts of the graph. As $\delta$ is close enough to the threshold $1/\phi$, the blocks of length $4$ converge slowly. With the threshold for inactivity of an agent separating two blocks of length $4$ is barely not satisfied, reshuffles are possible. 
One of the reshuffles happens earlier than the other. For the later reshuffle, both endpoints of adjacent $4-$ blocks have activity levels higher than the activity level at equilibrium (represented by one of the trajectories converging to $x\approx0.8$ from above). For the earlier reshuffle, one of the endpoints is below the equilibrium activity level (represented by one of the trajectories converging to $x\approx0.8$ from below), making the reshuffle quicker.\footnote{When a $2k-$block converges to the equilibrium, the even agents are above their equilibrium activity levels and the odd agents are above, or vice-versa. In the trajectory plots, pairs of agents with indices $i$ and $2k+1-i$ are represented by trajectories converging to the same point, one from above and the other from below.}
There exist $4-$blocks which are not adjacent to other $4-$blocks. They converge to their equilibrium with no reshuffle.

Another phenomenon may occur within the interval $\delta\in (1/|\lambda_{\min}(\bm G(2k+2))|, 1/|\lambda_{\min}(\bm G(2k))|)$. The blocks of length $2k+2$ may be barely unstable, leading to slow divergence away from them, similar to the behavior noted for $n=4, 6$. 

For $k=1,2,3, 4$, the above behaviors correspond to their corresponding peaks in the dynamics, each subsequent one less pronounced.
For large values of $k$, the phenomena described above overlap, leading to the peaks being less defined.
This is caused by different thresholds being close to one another.
For example, in the region $\delta\in(1/2, 0.532)$, blocks of various lengths are active for long times during the dynamics, despite not being a part of any equilibrium. They remain active and diverge slowly because their thresholds for not being stable are barely met.
We highlight two examples of this in Figures~\ref{fig:p100_0502},~\ref{fig:p100_0515}.

In Figure~\ref{fig:p100_0502}, we observe slow divergence from the equilibrium with all agents active and towards an equilibrium described by the configuration with a block of length $34$ and many $1$'s.
In Figure~\ref{fig:p100zoom}, this trajectory corresponds to a red point as the last active set change occurs when the trajectories are almost constant.
This change of activity leads from a block of length $36$ to a configuration containing $34-1$ in place of $36$. In particular, it does not lead to a reshuffle, despite red points being associated exclusively with reshuffles earlier.

In Figure~\ref{fig:p100_0515}, convergence of two different parts of the graph overlap.
For $\delta=0.515$, the blocks which may appear in the equilibrium are of lengths $1$ and $10$.
The smooth lines visible in the trajectory plot converging to activity levels in $(0,1)$ correspond to a block of length $10$. The other lines correspond to longer blocks, which divide into configurations consisting of blocks of length $1$.
Due to closeness of the different thresholds, the divergence from the longer blocks took a relatively long time.

Large values of $n$ allow the inspection of another key property of the dynamics.
No cases of fast convergence occur near the value of $\delta=1$, whereas they did occur for small $n$.
Out of the exponential number of stable equilibria, at most one contains no $2-$blocks.
The system rarely converges to this equilibrium.
Similarly, for other values of $\delta$, the slowly converging blocks are more likely to appear throughout the dynamics for paths of larger length $n$. For $n=100$, the quickly converging trajectories are not infrequent near $\delta=0.532\approx1/\lambda_{\min}(\bm G(8))$ (c.f. Figure~\ref{fig:p100zoom}). With larger values of $n$, this is not expected to happen. Nevertheless, $n=100$ was sufficient to analyze much of the behavior for large paths.

\begin{figure}
    \centering
    \begin{subfigure}[t]{0.3\textwidth}    
        \centering
        \includegraphics[width=\linewidth]{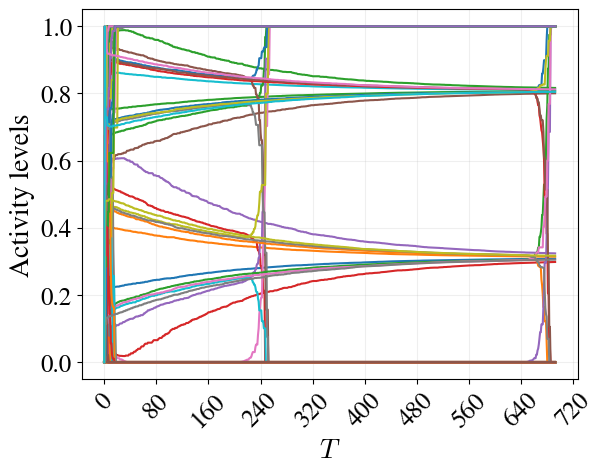}
        \caption{$\delta=0.615$.}
        \label{fig:p100_0615}
    \end{subfigure}
    \begin{subfigure}[t]{0.3\textwidth}
        \centering
        \includegraphics[width=\linewidth]{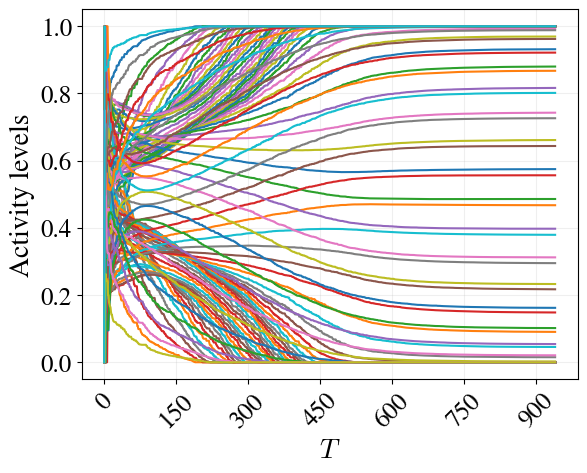}
        \caption{$\delta=0.502$.}
        \label{fig:p100_0502}
    \end{subfigure}
    \begin{subfigure}[t]{0.3\textwidth}    
        \centering
        \includegraphics[width=\linewidth]{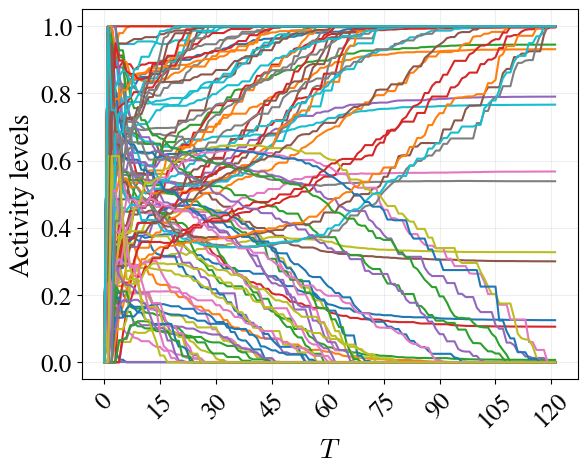}
        \caption{$\delta=0.515$.}
        \label{fig:p100_0515}
    \end{subfigure}
    \caption{Sample trajectories for $n=100$ and values of $\delta\in(0.5, 0.62)$.}
    \label{fig:p100_samples}
\end{figure}

To summarize, we observed the following phenomena leading to slow convergence in paths:
\begin{itemize}
\item stable equilibria consisting of blocks close to losing stability,
\item the trajectory being close to a barely unstable equilibrium, leading to slow divergence away from it,
\item the \emph{shuffle phase} consisting of semi-convergence to a strategy profile which is only an equilibrium on the active set, leading to a change of active set upon reaching activity levels insufficient for inactive agents.
\end{itemize}
The first two phenomena are related to Condition~\ref{p:paths:1} of Theorem~\ref{thm:paths}.
The last one needs Condition~\ref{p:paths:3} together with Condition~\ref{p:paths:1} to be close to failing.
We also observed trajectories skipping the asymptotic convergence phase and arriving at the exact equilibrium. This occurs in the parts of the graph comprised of independent sets.

\subsection{General graphs}
\label{sec:gengraphs}

Throughout this section, we will relate the phenomena observed for paths into a more general setting.
We will study both deterministic and random graphs.
In particular, we will see how the characteristic properties of various graph classes impact the convergence times.

\subsubsection{Structure of equilibria general graphs}

Firstly, we classify the stable equilibria and their active sets with a result similar to Theorem~\ref{thm:paths}.
\begin{thm}
\label{thm:equilibria}
    Let $S\subseteq V(G)$ and let $\delta$ be given.
    Let $(S_l)$ be the maximal connected components of subgraph induced by $S$.
    Suppose that 
    \begin{enumerate}
        \item $\forall l: \delta<1/|\lambda_{\min}(\bm G_{S_l})|.$\label{p:eqilibria:1}\\
    Let now $\bm x^*$ be the equilibrium that is unique on each of the $S_l$ and suppose further that 
        \item $\forall i\in S: x_i^*>0$\label{p:eqilibria:2},
        \item $\forall i\not\in S: \sum g_{ij}x_j^*> 1/\delta$\label{p:eqilibria:3}.
    \end{enumerate}
    Then, $S$ is the active set of a stable equilibrium of the graph $G$ for $\delta$, for all $\delta\in[0,1]$.\\
    Furthermore, for almost any $\delta$, if Condition~\ref{p:eqilibria:1} is not met but an equilibrium with active set $S$ exists, it is unstable.
\end{thm}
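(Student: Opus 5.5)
The plan is to reduce everything to Theorem~\ref{th:stability}, so the whole argument consists of checking its hypotheses.

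\textbf{Sufficiency.} First I check that $\bm x^*$ is an equilibrium of $G$ with active set exactly $S$. On each component $S_l$, Condition~\ref{p:eqilibria:1} gives $1+\delta\lambda_{\min}(\bm G_{S_l})>0$, so $\bm I+\delta\bm G_{S_l}$ is positive definite and hence invertible. Then $\bm x^*_{S_l}=(\bm I+\delta\bm G_{S_l})^{-1}\bm 1$ is the unique solution of the interior first-order conditions on $S_l$. Since the $S_l$ are maximal connected components, no edges join different components. So for $i\in S_l$ the neighbours of $i$ inside $S$ all lie in $S_l$, and agents outside $S$ play $0$. Therefore
$$x_i^*=1-\delta\sum_j g_{ij}x_j^*,$$
and this equals the best response $\max(0,1-\delta\sum_j g_{ij}x_j^*)$ because of Condition~\ref{p:eqilibria:2}. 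For $i\notin S$, Condition~\ref{p:eqilibria:3} gives $1-\delta\sum_j g_{ij}x_j^*<0$, so $0$ is the best response and $i$ is strictly inactive. Thus $\bm x^*$ is an equilibrium with active set $S$, and all agents outside $S$ are strictly inactive.

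Next I use the block structure. After ordering vertices by components, $\bm G_S$ is block diagonal with blocks $\bm G_{S_l}$. Its spectrum is therefore the union of the blocks' spectra, and
$$\lambda_{\min}(\bm G_S)=\min_l\lambda_{\min}(\bm G_{S_l}).$$
Condition~\ref{p:eqilibria:1} then gives $\delta<1/|\lambda_{\min}(\bm G_S)|$. One small point: if some component is a single vertex, or if $\lambda_{\min}$ is nonnegative, the condition is vacuous. Since $S$ is nonempty and has edges unless it is independent, this needs only a brief remark. The second bullet of Theorem~\ref{th:stability} now yields stability, and this holds for every $\delta$ because strict inactivity was obtained directly rather than generically.

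\textbf{Instability.} Suppose an equilibrium with active set $S$ exists and Condition~\ref{p:eqilibria:1} fails for some $l$, so $\delta\ge 1/|\lambda_{\min}(\bm G_{S_l})|\ge 1/|\lambda_{\min}(\bm G_S)|$. For almost every $\delta$ every inactive agent is strictly inactive; this is the footnote citing \cite[Footnote~16]{Bra14}. So the hypotheses of the first bullet of Theorem~\ref{th:stability} hold, and the equilibrium is unstable.

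\textbf{Main obstacle.} The delicate point is that Theorem~\ref{th:stability} requires strictly inactive agents. In the sufficiency direction this is secured by the strict inequality in Condition~\ref{p:eqilibria:3}. In the converse it is exactly why "almost any $\delta$" is needed, and I will invoke the generic strict-inactivity fact there. I should also check that the equilibrium on each $S_l$ is indeed unique, which follows from the invertibility of $\bm I+\delta\bm G_{S_l}$ noted above.
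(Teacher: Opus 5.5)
Your proposal is correct and follows the paper's proof essentially step for step. You show that Conditions 2 and 3 make $\bm x^*$ an equilibrium with active set $S$ and a strictly inactive complement, use the block-diagonal structure to get $\lambda_{\min}(\bm G_S)=\min_l\lambda_{\min}(\bm G_{S_l})$ and invoke Theorem~\ref{th:stability}, and for the converse combine its first bullet with the generic strict-inactivity fact from \cite{Bra14}. One small correction: uniqueness of the equilibrium on each $S_l$ (among all profiles, not only interior ones) comes from positive definiteness of $\bm I+\delta\bm G_{S_l}$, i.e.\ Theorem~\ref{th:bka:ne}, rather than from mere invertibility, but this does not affect your argument.
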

While not as intuitive as the classification provided by Theorem~\ref{thm:paths}, this theorem serves a similar purpose.
It enables connected components of active sets to be referred to as close to losing stability or barely insufficient for their inactive neighbors, thereby explaining slow convergence.

Conversely, it does not yield a simple description of all stable equilibria.
We are often not able to establish a priori which sets will be active throughout the dynamics and will be only able to explain slow convergence retrospectively.

For $\delta\in(1/\phi, 1)$, we can strengthen this theorem to fully characterise the equilibria.
In previous work of \cite{Bra07}, independent sets of a given order are listed as possible equilibria.
If they exist, they satisfy the conditions of Theorem~\ref{thm:equilibria} and are therefore stable for almost any $\delta$.
They may, however, not exist.
Here, we show that the active sets of stable equilibria must be disjoint unions of cliques\footnote{A clique is a graph in which $\forall i,j\not=i\in V, g_{ij}=1$}.
The proof is based on the fact that the family of graphs satisfying $\lambda_{\min}(\bm G)/\phi>-1$ is small.
\footnote{This relates to a classic problem of classifying the graphs with $\lambda_{\min}(\bm G)>c$ for some constant $c$. For $c=-2$ this is covered in  \cite{cameron1976line} and could be used to classify the stable equililbria for smaller values of $\delta$ than presented here. The resulting class of graphs is significantly more complicated than what is presented here.}
When combined with the observation that if a clique of size $k$ is active and has no active neighbors, each agent in that clique chooses activity $1/(1+k\delta)$, we get the following result.

\begin{thm}
\label{thm:large_delta_equils}
    For $\delta\in(1/\phi, 1)$, the active set $S$ of any stable equilibrium is comprised of disjoint cliques.\\
    Furthermore, for $i\not\in S$,
    \begin{equation}
    \sum_k \frac{\delta n^i_k}{1+(k-1)\delta}\geq 1,
\end{equation}
where $n^i_k$ denotes the number of neighbors of $i$ inside active cliques of size $k$.
\end{thm}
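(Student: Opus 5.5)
The plan has two parts. First, show that every connected component of the active set is a clique, using the stability criterion. Second, compute the equilibrium activity on an isolated active clique and read off the inactivity condition.

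\textbf{Components are cliques.} Let $S$ be the active set of a stable equilibrium and $S_l$ a connected component of $S$. By Theorem~\ref{th:stability}, applied together with the ``furthermore'' part of Theorem~\ref{thm:equilibria}, stability forces $\delta<1/|\lambda_{\min}(\bm G_{S_l})|$. For inactive agents that are not strictly inactive, the statement is meant for almost every $\delta$, as in those results. Since $\delta>1/\phi$, we get $\lambda_{\min}(\bm G_{S_l})>-1/\delta>-\phi$. So it suffices to show that a connected graph with $\lambda_{\min}>-\phi$ is complete.

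Suppose $S_l$ is connected but not complete. Then it contains two non-adjacent vertices at distance $2$, so it contains an induced path $P_3$ on three vertices. By Cauchy interlacing for principal submatrices, $\lambda_{\min}(\bm G_{S_l})\le\lambda_{\min}(P_3)=-\sqrt2<-\phi$. This is a contradiction.

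The route via $P_3$ is actually stronger than needed, since $-\sqrt2<-\phi$ holds outright. The threshold $\phi$ would only be sharp if $P_3$ could be replaced by $P_4$, whose smallest eigenvalue is exactly $-\phi$. So the interlacing step is the key point, and I should double-check the constant. As a sanity check against Proposition~\ref{pr:sequences}: $P_3$ already fails to be stable for $\delta>1/\sqrt2$, while $1/\phi<1/\sqrt2$. On $(1/\phi,1/\sqrt2)$, however, $P_3$ has $\delta<1/\sqrt2$, so it is not excluded by the spectral condition alone.

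\textbf{Main obstacle: the range $(1/\phi,1/\sqrt2)$.} Here the spectral argument does not rule out $P_3$-containing components, so I would add Condition~\ref{p:eqilibria:2} of Theorem~\ref{thm:equilibria}: positivity of the unique equilibrium on $S_l$. Even so, the cleanest claim is only for graphs whose $\lambda_{\min}$ exceeds $-\phi$, which by interlacing excludes induced $P_4$ and the other small configurations; the paper's footnote suggests exactly this route. The plan for this step is:
\begin{itemize}
\item Enumerate the connected non-complete graphs with $\lambda_{\min}\ge-\phi$. Each contains an induced $P_3$ and no induced subgraph with $\lambda_{\min}<-\phi$, which pins them down to a small family (star-like graphs such as $K_{1,2}$, and cliques sharing a vertex).
\item For each graph in this family, check directly that the equilibrium $(\bm I+\delta\bm G_{S_l})^{-1}\bm 1$ has a non-positive coordinate when $\delta>1/\phi$. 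For $P_3$ this is immediate, since its centre gets $(1-2\delta)/(1-2\delta^2)<0$ for $1/2<\delta<1/\sqrt2$.
\end{itemize}
This classification of the small family is the step I expect to be hardest, and I would attempt it by case analysis on a vertex adjacent to both ends of an induced $P_3$.

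\textbf{Activity on a clique and the inequality.} Once every component is a clique $K_k$ with no active neighbours, the symmetric solution of $x=1-\delta(k-1)x$ gives activity $x=1/(1+(k-1)\delta)$ for each agent. (The text's "$1/(1+k\delta)$" should read $(k-1)$.) An inactive agent $i$ must satisfy $\delta\sum_j g_{ij}x_j\ge1$. Grouping its active neighbours by the size of the clique they belong to gives
\begin{equation*}
\sum_k \frac{\delta n^i_k}{1+(k-1)\delta}\ge 1,
\end{equation*}
which is the stated inequality.
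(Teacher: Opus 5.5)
Your first argument fails on a numerical error. $-\sqrt2\approx-1.414$ is \emph{larger} than $-\phi\approx-1.618$, so an induced $P_3$ only gives $\lambda_{\min}(\bm G_{S_l})\le-\sqrt2$. That contradicts the stability requirement $\lambda_{\min}(\bm G_{S_l})>-1/\delta$ only when $\delta\ge1/\sqrt2$. Your own sanity check says exactly this, so discard the paragraph claiming a contradiction, and the remark about $P_3$ being ``stronger than needed'' than $P_4$. The theorem's real content lies on $(1/\phi,1/\sqrt2)$, and there you have only a plan. That plan is missing the two ingredients that actually carry the proof.

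\textbf{First missing ingredient: the classification.} For every $\delta>1/\phi$, interlacing excludes induced $P_4$ ($\lambda_{\min}=-\phi$), $C_4$ ($-2$) and $K_{1,3}$ ($-\sqrt3$). You then need that a connected $\{P_4,C_4,K_{1,3}\}$-free graph is either a clique or a clique extension of $P_3$, i.e.\ $K_{n_2}$ joined to $K_{n_1}\cup K_{n_3}$ (Lemma~\ref{lemma:structure}). Your guessed family of ``star-like graphs and cliques sharing a vertex'' is not right. Larger stars are already excluded by $K_{1,3}$, while the middle clique can have any size. For example, the diamond $K_4-e$ has $\lambda_{\min}=(1-\sqrt{17})/2\approx-1.56>-\phi$, so it passes the spectral test for $\delta$ slightly above $1/\phi$, yet it is not two cliques sharing a vertex. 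A short proof of the classification:
\begin{enumerate}
\item A connected $P_4$-free graph on at least two vertices is a join of two nonempty graphs.
\item $C_4$-freeness forces one side of the join to be complete.
\item If the graph is not a clique, deleting all universal vertices leaves a disconnected remainder.
\item Claw-freeness then forces exactly two components, each a clique.
\end{enumerate}

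\textbf{Second missing ingredient: positivity for all sizes.} Positivity must be refuted for all $n_1,n_2,n_3$, not just for $P_3$. Vertices in one clique share a closed neighbourhood, so for $\delta<1$ they have equal activity, which reduces the problem to a $3\times3$ system. Subtracting half of the first and third equations from the second gives $(1-\delta)x_2=\tfrac12\bigl[(1-(n_1+1)\delta)x_1+(1-(n_3+1)\delta)x_3\bigr]$. For $\delta>1/2$ and $x_1,x_3>0$ the right-hand side is negative while the left-hand side is positive (Lemma~\ref{lemma:nop3active}).

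With these two steps your outline becomes the paper's proof. Your clique-activity computation and the final inequality are fine, and you are right that $1/(1+k\delta)$ should read $1/(1+(k-1)\delta)$.
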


The function $\delta\mapsto\sum_k \frac{\delta n^i_k}{1+(k-1)\delta}$ is strictly increasing, meaning that if an active set supports a stable equilibrium for a given $\delta$, it also supports it for larger values of $\delta$.

\subsubsection{Deterministic graphs}

\begin{wrapfigure}{r}{0.3\textwidth}
        \includegraphics[width=\linewidth]{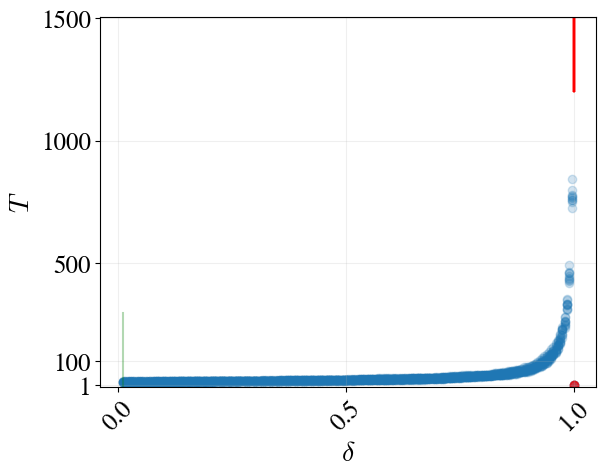}
    \caption{The clique on $100$ vertices}
    \label{fig:k100}
\end{wrapfigure}
We can now analyze specific classes of graphs, beginning with the cliques.
The convergence times on a large clique (c.f. Figure~\ref{fig:k100}) are similar to those observed for the path of length $n=2$.
This is the case for cliques of all sizes\footnote{Note that the times were rescaled by the number of vertices.}.
The equilibrium is unique for $\delta<1$ and it is the one with all agents active.
The activity level at equilibrium is uniform and, for a clique of size $n$, has the value of $\frac{1}{1+(n-1)\delta}$.
As $\delta$ approaches $1$, the convergence slows down drastically.
At $\delta=1$, all strategy profiles with activity levels summing to $1$ are equilibria.

Next, we consider complete bipartite graphs\footnote{A complete bipartite graph is a graph whose set of vertices is comprised of two disjoint parts, $A$ and $B$. Two vertices are connected if and only if they belong to different parts.} with sizes of parts equal to $m$ and $l\leq m$.
The convergence is quick for the majority of values of $\delta$. 
A peak occurs near the value $\delta=1/m$. This peak is smaller for $m\not=l$ compared to the one for $m=l$. Furthermore, for $m=l$, the trajectories converge to states which are not stable equilibria, as signified with the red points.
There are no peaks near $\delta=1$ in any of the plots.
\begin{figure}[h]
    \centering
    \begin{subfigure}[t]{0.25\textwidth}
        \centering
        \includegraphics[width=\linewidth]{pictures/cospectral/K14.png}%
        \caption{$m=4, l=1$.}
    \end{subfigure}%
    \begin{subfigure}[t]{0.25\textwidth}
        \centering
        \includegraphics[width=\linewidth]{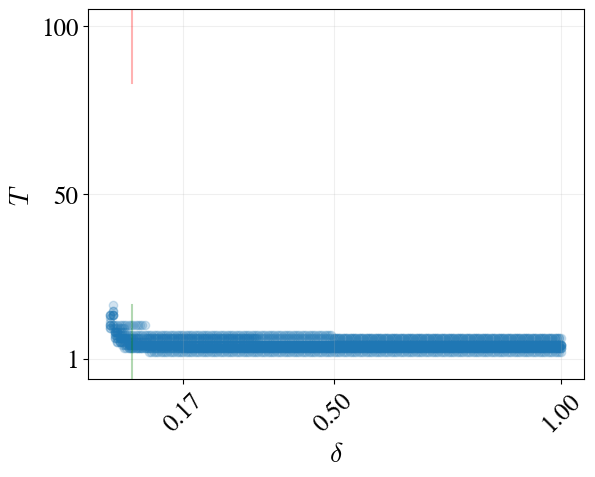}%
        \caption{$m=60, l=5$.}
    \end{subfigure}%
    \begin{subfigure}[t]{0.25\textwidth}
        \centering
        \includegraphics[width=\linewidth]{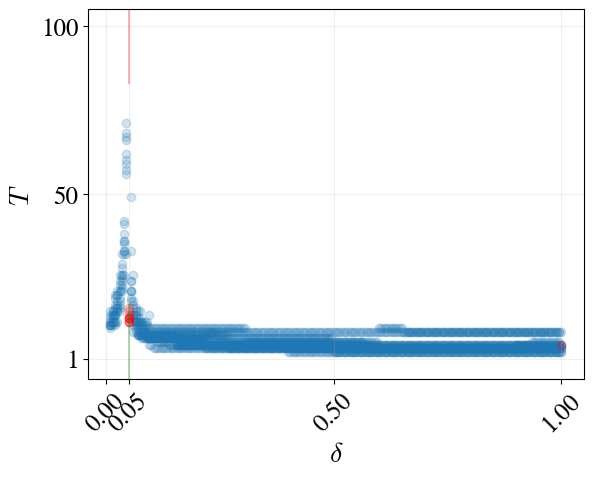}%
        \caption{$m=l=20$.}
    \end{subfigure}%
    \caption{Convergence times $T$ for the complete bipartite
graphs with different $m, l$.}
\end{figure}

The equilibria of complete bipartite graphs can be fully classified.
The smallest eigenvalue is $\lambda_{\min}(\bm G)=-\sqrt{ml}$ (c.f.\cite[p.~8]{brouwer2012spectra}).
For $\delta\leq 1/m$, all agents are active, as no agent can have their activity threshold satisfied by their neighbors. 
For $l=m$, this overlaps the threshold induced by the smallest eigenvalue. The strategy profile with agents of one part choosing activity level $a$ and agents of the other part choosing activity level $1-a$, for any $a\in[0,1]$, is an equilibrium for $\delta=1/m$, whenever $m=l$.
For $\delta\in (1/m, 1/l)$, in the equilibrium, the agents in the part of size $m$ choose activity $1$ and agents in the part of size $l$ are inactive. Finally, for $\delta\in [1/l,1]$, either part has an activity level of $1$ and the other part is inactive.

For $l\not=m$, the peak resembles the peaks observed for small paths of odd length. Indeed, in these cases the equilibrium stability threshold is also higher than the threshold at which the equilibrium changes qualitatively, explaining the small peak.
\begin{wrapfigure}{r}{0.25\textwidth}
        \centering
        \includegraphics[width=\linewidth]{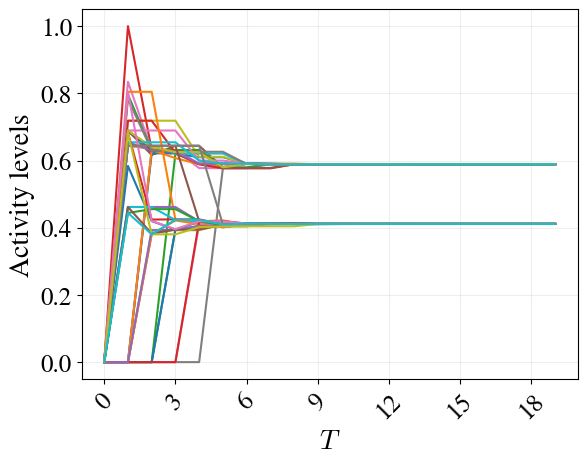}%
        \caption{Sample trajectory for a bipartite graph with $m=20, l=20$ and  $\delta=1/20$.}
        \label{fig:k2020_critical}
\end{wrapfigure}
For $m=l$, the peak occurs near the value $\delta=1/|\lambda_{\min}(\bm G)|=1/m$.
This is where the stability of the all-active equilibrium is lost.
When $\delta$ is slightly larger than $1/|\lambda_{\min}(\bm G)|$, the divergence away from this equilibrium is slow.
At the threshold $\delta=1/m$, the trajectories converge to strategy profiles which are not stable equilibria and the trajectories are represented by red points. Figure~\ref{fig:k2020_critical} showcases an example of such trajectory. The equilibrium is not stable due to lack of attractivity: There exist other equilibria arbitrarily close to it and convergence cannot occur.

For the values $\delta\in (1/m, 1)$, the stable equilibria consist only of independent sets. 
In particular, they contain no active edges, leading to no peak near unity.

\subsubsection{Random graphs}
We now proceed to the study of random graphs.
When studying them, we will be sampling a single graph from a given model and generating the convergence time plot for this single sample.
This is done to distinguish the properties of representatives, as averaging over many samples may make particular features difficult to distinguish.

The convergence times for different types of random graphs are presented in Figures~\ref{fig:ba}, \ref{fig:er}, \ref{fig:rr}.
Before explaining the details observed for each random graph model, we will briefly discuss the similarities observed across all of them.

Firstly, for $\delta<1/|\lambda_{\min}(\bm G)|$, for which the equilibrium is unique, the convergence is quick.
This corresponds to the behavior observed before and no phenomena causing slowdowns tend to occur.

\begin{wrapfigure}{r}{0.4\textwidth}
    \begin{subfigure}[t]{0.2\textwidth}
        \centering
        \includegraphics[width=\linewidth]{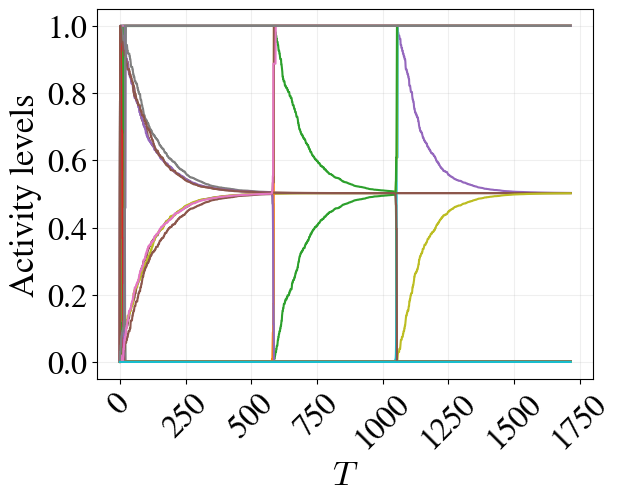}%
        \caption{Three reshuffles.}
        \label{fig:rr_reshuffle_1}
    \end{subfigure}%
    \begin{subfigure}[t]{0.2\textwidth}
        \centering
        \includegraphics[width=\linewidth]{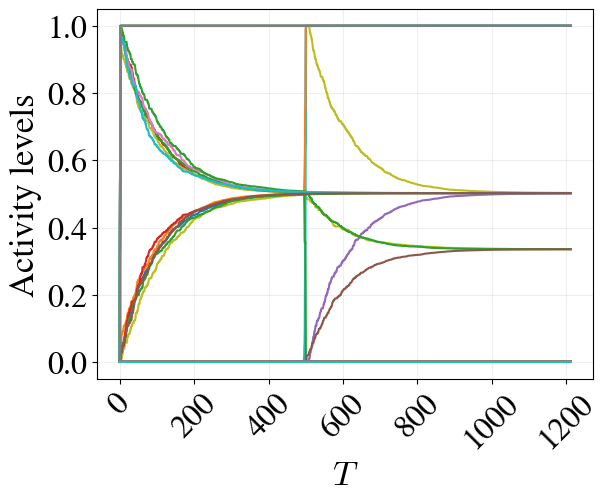}%
        \caption{A reshuffle resulting in a larger active clique.}
        \label{fig:rr_reshuffle_2}
    \end{subfigure}%
    \caption{Sample trajectories of the process for $\delta=0.99$ in random regular graphs with degree $d=10$ and $n=100$ vertices.}
    \label{fig:rr_reshuffles}
\end{wrapfigure}
For $\delta\in(1/\phi, 1)$, the plots are similar to the results shown for paths and cliques.
This behavior can be explained with Theorem~\ref{thm:large_delta_equils}.
Only cliques appear in the active sets of the stable equilibria for $\delta\in(1/\phi, 1)$ and the convergence is therefore the same as that of a clique, with the added possibility of reshuffles occurring if the cliques making up the active set induce a strategy profile which is not an equilibrium.
The reshuffles occurring in general graphs are more complicated than those observed in the path graphs. We will now consider some of the reshuffles we observed during the study of random graphs (c.f. Figure~\ref{fig:rr_reshuffles}).
Figure~\ref{fig:rr_reshuffle_2} showcases that different clique sizes can be involved. A reshuffle of multiple cliques with $n=2$ vertices leads to the emergence of an active clique with $n=3$ vertices.
In Figure~\ref{fig:rr_reshuffle_1}, we observe a trajectory where three reshuffles are chained together, with each one finishing initiating the subsequent one.
Such a sequence of reshuffles can be arbitrarily long (c.f. Section~\ref{sec:reshuffles}).

\begin{figure}[h]
    \centering
    \begin{subfigure}[t]{0.25\textwidth}
        \centering
        \includegraphics[width=\linewidth]{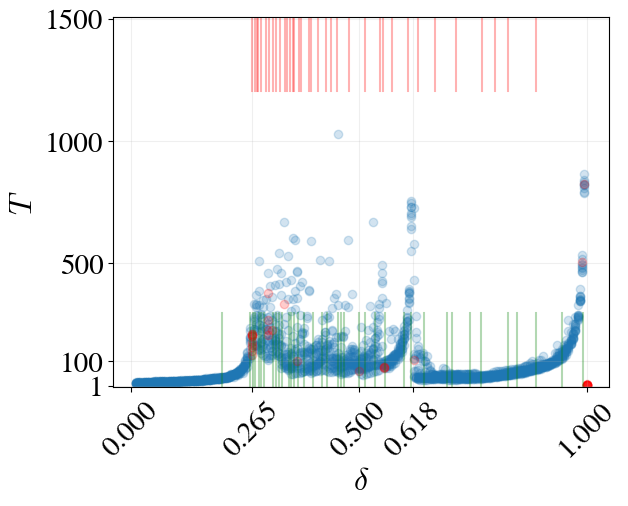}%
        \caption{$d=5$}
    \end{subfigure}%
    \begin{subfigure}[t]{0.25\textwidth}
        \centering
        \includegraphics[width=\linewidth]{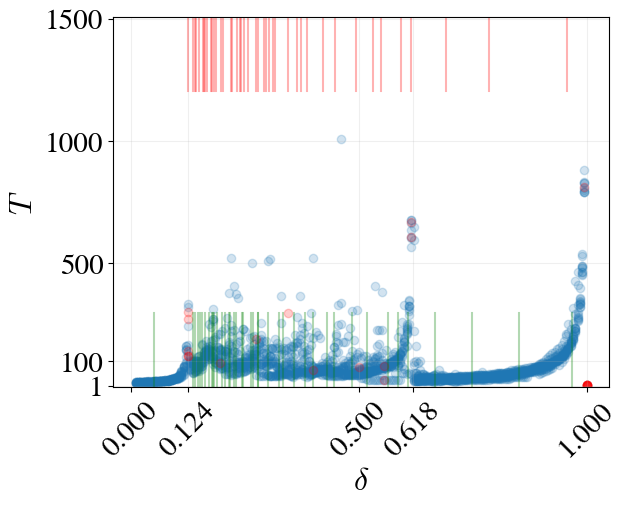}%
        \caption{$d=20$}
    \end{subfigure}%
    \begin{subfigure}[t]{0.25\textwidth}
        \centering
        \includegraphics[width=\linewidth]{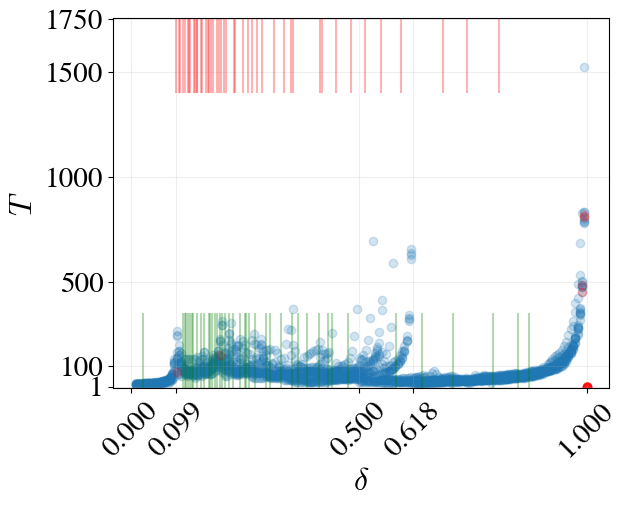}%
        \caption{$d=40$}
    \end{subfigure}%
    \begin{subfigure}[t]{0.25\textwidth}
        \centering
        \includegraphics[width=\linewidth]{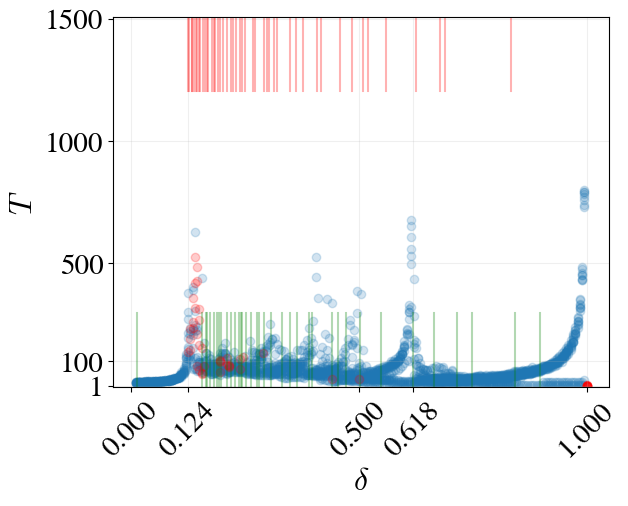}%
        \caption{$d=80$}
    \end{subfigure}\\
    \caption{Convergence plots for random regular graphs with degree $d$ and $n=100$ vertices.}
    \label{fig:rr}
\end{figure}

\begin{wrapfigure}{R}{0.3\textwidth}
    \includegraphics[width=\linewidth]{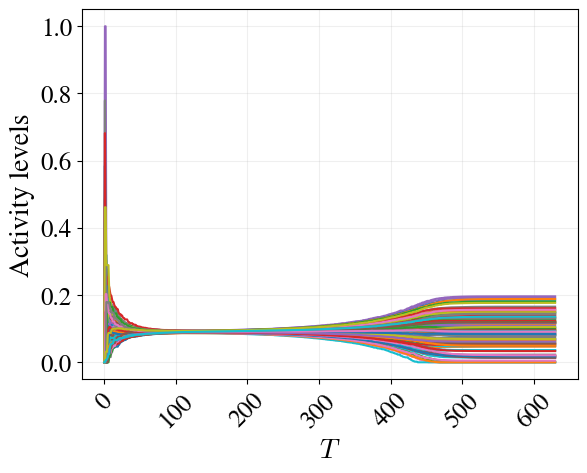}%
    \caption{Sample trajectory of the process for $\delta=0.125$ in a random regular graphs with degree $d=80$ and $n=100$ vertices.}
    \label{fig:rr_unif}
\end{wrapfigure}
The differences between the different random graph models are the most pronounced for $\delta\in (1/|\lambda_{\min}(\bm G)|, 1/\phi)$.
We begin by studying random regular graphs.\footnote{
Random regular graphs are sampled uniformly from the set of all $d-$regular graphs on $n$ vertices. A graph is $d-$regular if all vertices has degree $d$.}
In a $d-$regular graph on $n$ vertices, there exists an equilibrium with activity levels uniform and equal to $1/(1+\delta d)$.
In particular, whenever the equilibrium is unique, it is the equilibrium with uniform messages.
This equilibrium is stable for $\delta<1/|\lambda_{\min}(\bm G)|$.
The loss of its stability leads to the peaks observed near the value $\delta=1/|\lambda_{\min}(\bm G)|$.
Near this value, especially for $d=80$, we observe many trajectories which do not converge to a stable equilibrium with the assigned stopping criterion.
Figure~\ref{fig:rr_unif} includes a sample such trajectory.
The process first approaches a strategy profile with every agent choosing the same strategy.
For $\delta=0.125$, this profile is not stable, leading to slow divergence away from it and to an equilibrium in which there are inactive agents.
This is similar to the slow divergence observed for paths, where whenever an equilibrium existed but was not stable, the divergence away from it was slow.

For $\delta$ within the interval $(1/|\lambda_{\min}(\bm G)|, 1/\phi)$, the shuffle phase of the dynamics is often long compared to the asymptotic phase.
The number of distinct active sets visited throughout the dynamics can be large and it is difficult to notice patterns within the changes.
In Figure~\ref{fig:rr_samples}, we present representative trajectories.
The graph sample is the same between all of them.
For $\delta=0.35$, in Figure~\ref{fig:rr1}, the shuffle phase is long.
Throughout this phase, the activity levels change with little regularity. The individual agent's trajectory can change monotonicity multiple times or remain constant for long periods of time.
Once the agents enter the asymptotic phase, they converge at different rates.
For the same value of $\delta$, the random update order can lead to very different trajectories which become mostly regular relatively quickly (c.f. Figure~\ref{fig:rr2}).
While the asymptotic phase is only entered around the time $T=300$, when the last activity change occurs, the strategy profile is close to the stable equilibrium the system converges to around the time $T=100$ for the majority of the agents.

For $\delta=0.45$, the trajectories are more structured and fewer agents change their activity levels significantly throughout the shuffle phase (c.f. Figures~\ref{fig:rr3}, \ref{fig:rr4}). Furthermore, in the equilibrium the system converges to, fewer distinct activity levels are chosen and groups of distinct agents converge to the same activity levels.
For $\delta=0.55$, the shuffle phase is again shorter and the asymptotic convergence is slow (c.f. Figures~\ref{fig:rr5}, \ref{fig:rr6}).
The stable equilibria the dynamics converge to have fewer distinct activity levels than both of the cases $\delta=0.35, 0.45$. 
The component stability condition limits the number of distinct graphs which can make up the connected components of the active set, leading to the different connected components of the active set being isomorphic and therefore approaching the same activity levels.
The inspection of the final active sets confirms this. For $\delta=0.35$, the active sets tend to comprise a large component, multiple isolated vertices and occasional small components. For $\delta=0.45$, they include a smaller component and more isolated vertices and, for $\delta=0.55$, mostly isolated vertices and several small components. 
For different values of $d$, the analysis of individual trajectories yields similar results. The larger $\delta$ is, the simpler the trajectories and the active sets are.

This behavior is difficult to explain by analyzing the equilibria of the graph. 
The variance of terminal activity levels in the case $\delta=0.35$ suggests that the equilibria analysis is difficult.
The components of the active sets supporting stable equilibria are comprised of the induced subgraphs of the random regular graph, which can be arbitrary.
In some respects, we may view the varied behavior of the trajectories as a consequence of this. 
No clear structure of the equilibria leads to intricate interaction between the subgraphs, consequently leading to complex dynamics.

\begin{figure}
    \centering
    \begin{subfigure}[t]{0.16\textwidth}
    \centering
            \includegraphics[width=\linewidth]{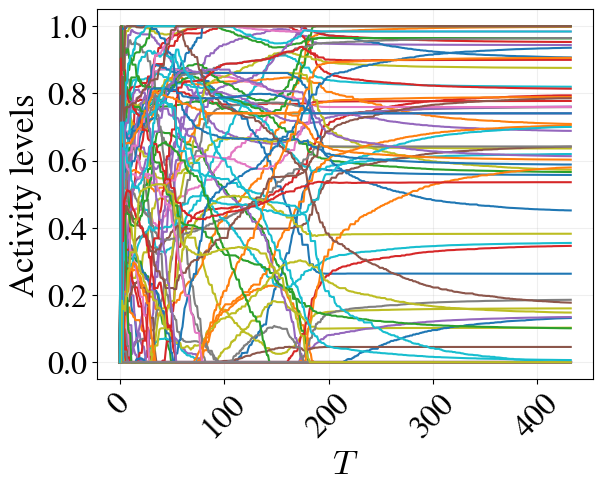}
        \caption{$\delta=0.35$}
        \label{fig:rr1}
    \end{subfigure}%
    \begin{subfigure}[t]{0.16\textwidth}
    \centering
            \includegraphics[width=\linewidth]{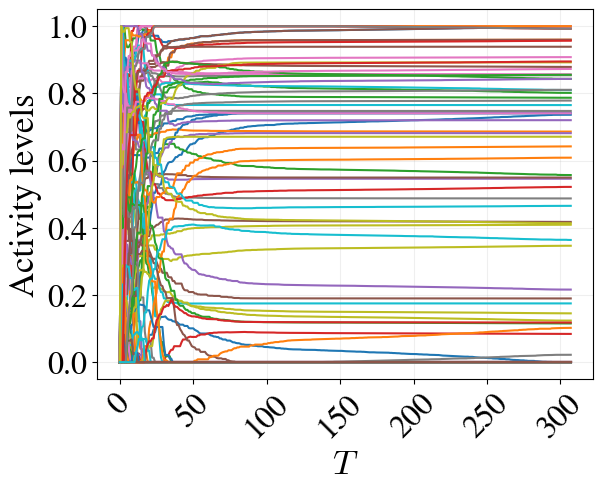}
        \caption{$\delta=0.35$}
        \label{fig:rr2}
    \end{subfigure}%
    \begin{subfigure}[t]{0.16\textwidth}
    \centering
            \includegraphics[width=\linewidth]{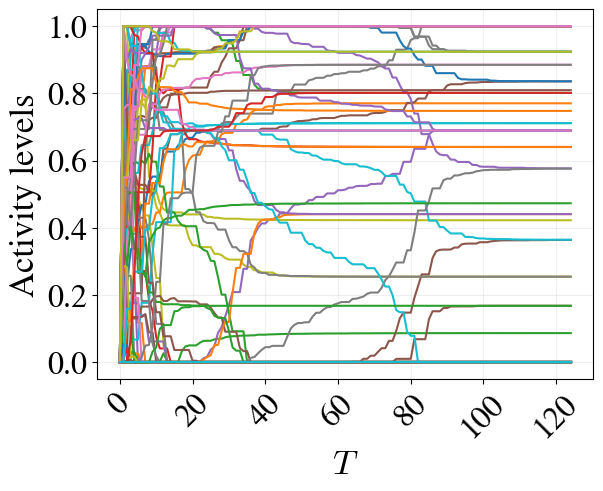}
        \caption{$\delta=0.45$}
        \label{fig:rr3}
    \end{subfigure}%
    \begin{subfigure}[t]{0.16\textwidth}
    \centering
            \includegraphics[width=\linewidth]{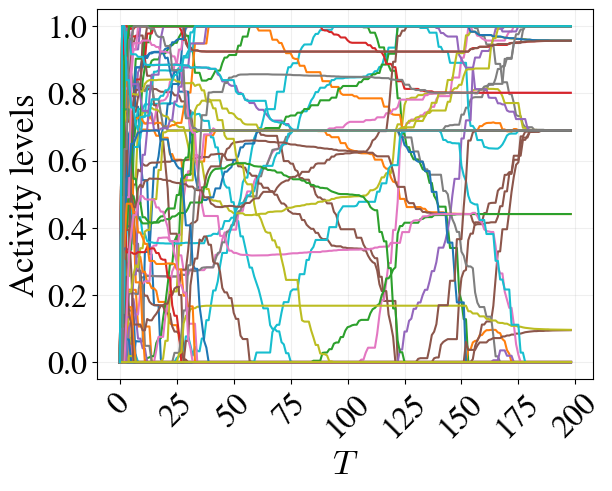}
        \caption{$\delta=0.45$}
        \label{fig:rr4}
    \end{subfigure}%
    \begin{subfigure}[t]{0.16\textwidth}
    \centering
            \includegraphics[width=\linewidth]{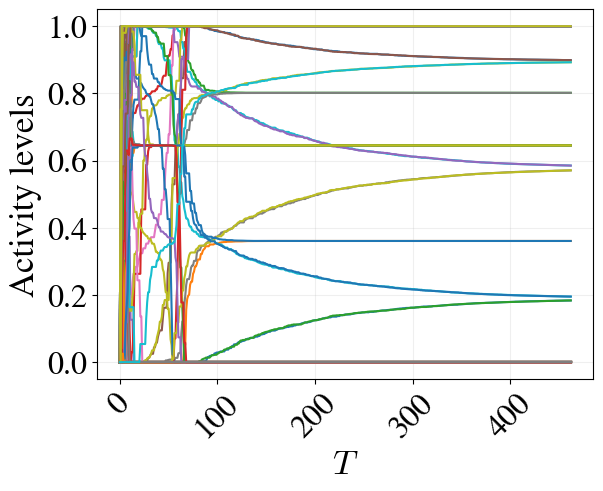}
        \caption{$\delta=0.55$}
        \label{fig:rr5}
    \end{subfigure}%
    \begin{subfigure}[t]{0.16\textwidth}
    \centering
            \includegraphics[width=\linewidth]{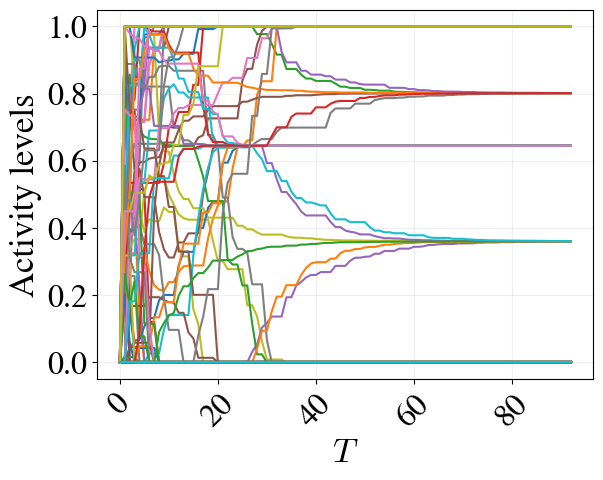}
        \caption{$\delta=0.55$}
        \label{fig:rr6}
    \end{subfigure}
    \caption{Sample trajectories for a sample random regular graph with $d=5$ and $n=100$ for various values of $\delta$.}
    \label{fig:rr_samples}
\end{figure}

\begin{figure}[h!]
    \centering
    \begin{subfigure}[t]{0.25\textwidth}
        \centering
        \includegraphics[width=\linewidth]{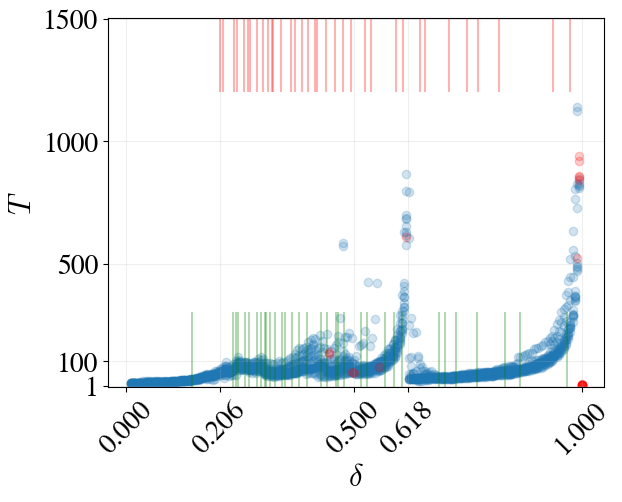}%
        \caption{$p=0.05$}
    \end{subfigure}%
    \begin{subfigure}[t]{0.25\textwidth}
        \centering
        \includegraphics[width=\linewidth]{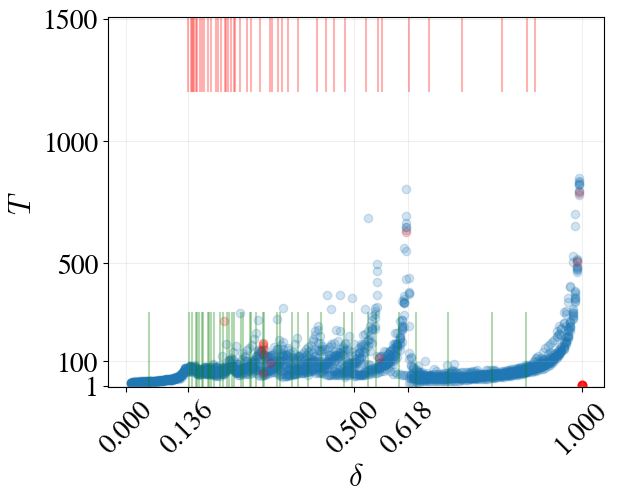}%
        \caption{$p=0.2$}
    \end{subfigure}%
    \begin{subfigure}[t]{0.25\textwidth}
        \centering
        \includegraphics[width=\linewidth]{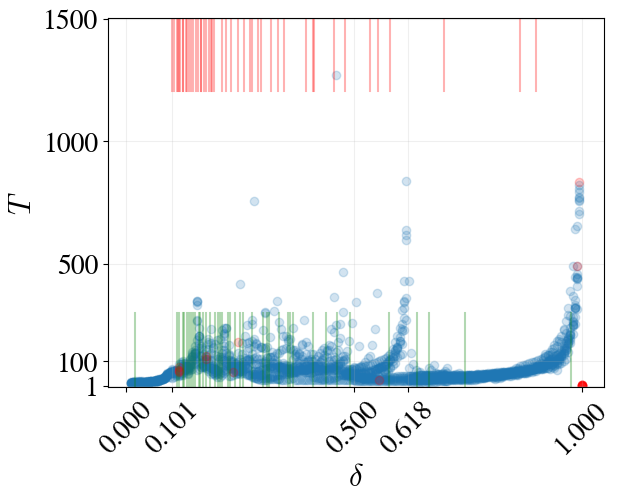}%
        \caption{$p=0.5$}
    \end{subfigure}%
    \begin{subfigure}[t]{0.25\textwidth}
        \centering
        \includegraphics[width=\linewidth]{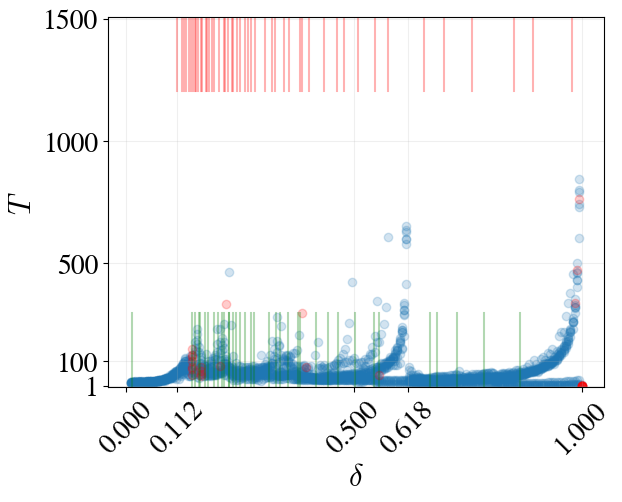}%
        \caption{$p=0.8$}
    \end{subfigure}\\
    \caption{Simulation results for Erd\H{o}s-R\'enyi random graph model samples on $n=100$ vertices with $p$ varied}
    \label{fig:er}
\end{figure}
We now proceed to the Erd\H{o}s-R\'enyi (ER) random graph model.
\footnote{In this model, the graph has $n$ vertices and $g_{ij}$ are independent random variables with $\mathbb P(g_{ij}=1)=1-\mathbb P(g_{ij}=0)=p$ for a given parameter $p\in[0,1]$.}
The overall convergence properties are similar to those on random regular graphs.
The main difference occurs near the loss of uniqueness threshold.
For $\delta<1/|\lambda_{\min}(\bm G)|$, the unique equilibrium can have inactive agents.
Consequently, no loss of stability occurs near the threshold, leading to no significant increase of convergence times.
The property is similar to what was observed for odd-length paths and complete bipartite graphs with unequal numbers of vertices in either part.

Within the range $\delta\in(1/|\lambda_{\min}(\bm G)|, 1/\phi)$, the dynamics resemble the dynamics on random regular graphs.
The individual trajectories are diverse, with regularity increasing as the value of $\delta$ approaches $1/\phi$.
Graphs sampled from both the ER and RR models have complex structures.
In particular, the connected components of active sets supporting the stable equilibria are diverse and more complex than the graphs we studied earlier.
This leads to intricate interactions between the stable equilibria and the phenomena observed earlier co-occurring. Similar behavior was observed within large paths in the interval $(0.5, 0.532)$, where the proximity of thresholds of different length blocks lead to complex behavior. In this case, the blocks are replaced with arbitrary graphs and the behavior is even more complex.
\begin{figure}[htb]
    \centering
    \begin{subfigure}[t]{0.25\textwidth}
        \centering
        \includegraphics[width=\linewidth]{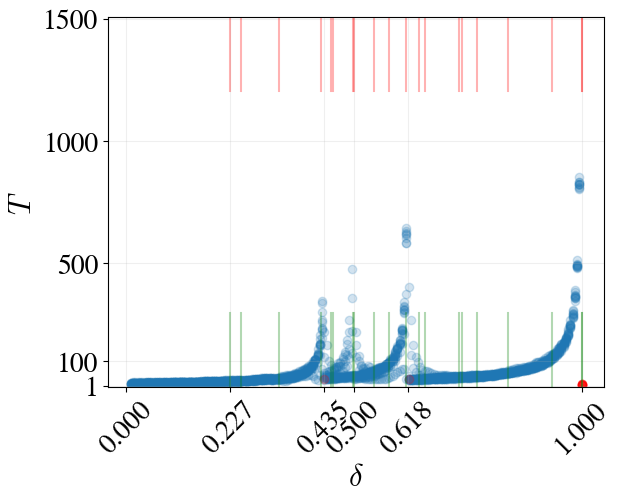}%
        \caption{$n=100, d=1$}
    \end{subfigure}%
    \begin{subfigure}[t]{0.25\textwidth}
        \centering
        \includegraphics[width=\linewidth]{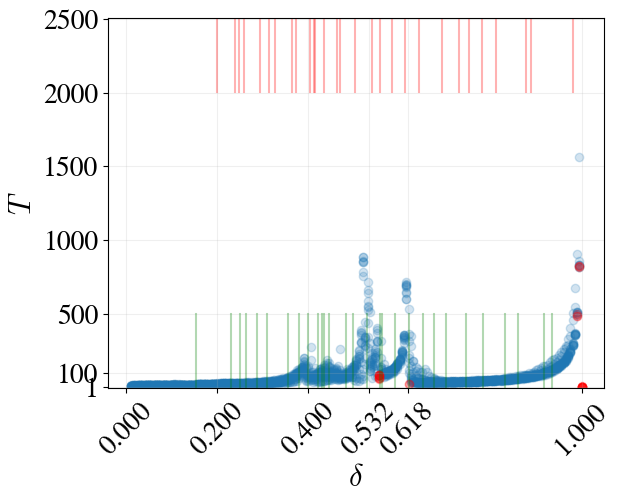}%
        \caption{$n=100, d=2$}
    \end{subfigure}%
    \begin{subfigure}[t]{0.25\textwidth}
        \centering
        \includegraphics[width=\linewidth]{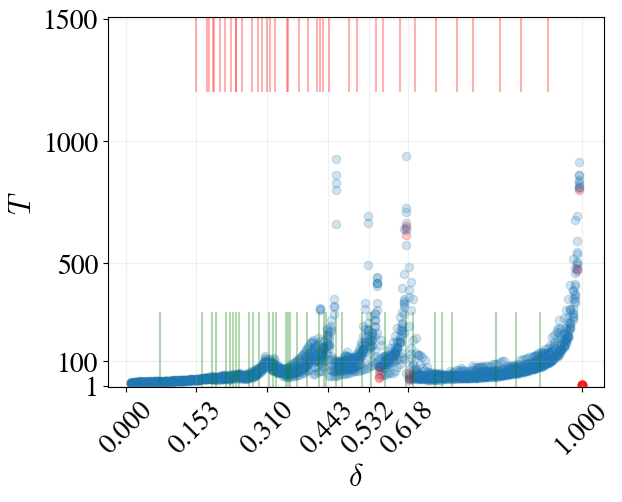}%
        \caption{$n=100, d=5$}
    \end{subfigure}%
    \begin{subfigure}[t]{0.25\textwidth}
        \centering
        \includegraphics[width=\linewidth]{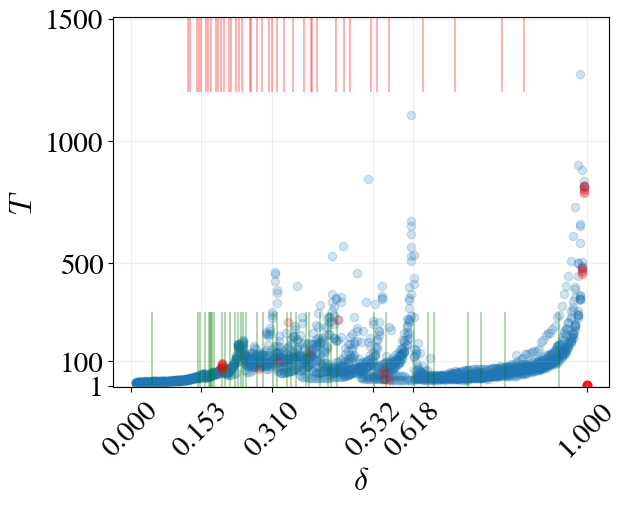}%
        \caption{$n=100, d=10$}
    \end{subfigure}
    \caption{The results for BA random graph samples with $n$ vertices and parameter $m$}
    \label{fig:ba}
\end{figure}

For Barab\'{a}si-Albert (BA) random graphs\footnote{BA random graphs are generated procedurally. They start with a clique of size $m$. Then, a node is added and connected to $m$ existing vertices, which is done by adding a single edge $m$ times. When adding each edge, the probability of connecting the new node to an existing node $i$ not connected to it is proportional to the number of neighbors of $i$.}, we observe fewer trajectories with long convergence times than in other random graphs and the overall plot appears more structured (c.f.Figure~\ref{fig:ba}).
By investigating the dynamics in closer detail, we uncover the induced subgraphs responsible for the slowdowns.
For $m=1$, the peaks occur near the values $\delta=0.434, 0.5, 0.618, 1$.
The latter two values correspond to the loss of stability thresholds of the paths on $4$ and $2$ vertices.
The value $\delta=0.434$ is close to the instability threshold of the graph $G_1$ presented in Figure~\ref{fig:ba_peak1_graph}, which is equal to $1/|\lambda_{\min}(\bm G_1)| \approx 0.4343$.
The trajectory, restricted to the graph $G_1$, converges to a strategy profile in which all agents are active (c.f. Figure~\ref{fig:ba_peak1_traj}).
The behavior near the threshold $\delta=0.5$ is showcased in Figure~\ref{fig:ba_peak2}.
For $\delta=0.51$, the stable equilibrium of the graph shown in Figure~\ref{fig:ba_peak2_graph} is comprised of an inactive vertex in the middle and three pairs of agents, each with activity $1/(1+\delta)$.
This equilibrium becomes emerges at $\delta=1/2$, as only then the three pairs provide sufficient activity for the middle agent to be inactive.
This threshold is also the loss of stability threshold for the equilibrium with all agents active. The divergence from this strategy profile is slow, leading to long overall convergence time.
\begin{figure}[!htb]
    \centering
    \begin{minipage}{.49\textwidth}
    \centering
    \begin{subfigure}[t]{0.5\textwidth}
    \includegraphics[width=1\linewidth]{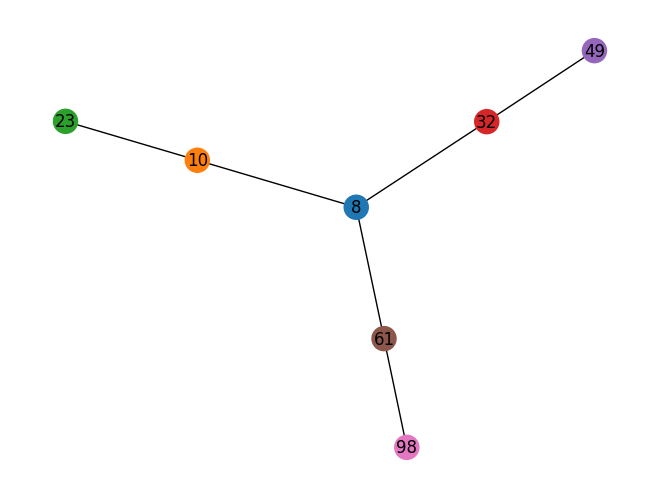}%
    \caption{The structure of the subgraph}
    \label{fig:ba_peak2_graph}
    \end{subfigure}%
    \hfill
    \begin{subfigure}[t]{0.5\textwidth}
    \includegraphics[width=1\linewidth]{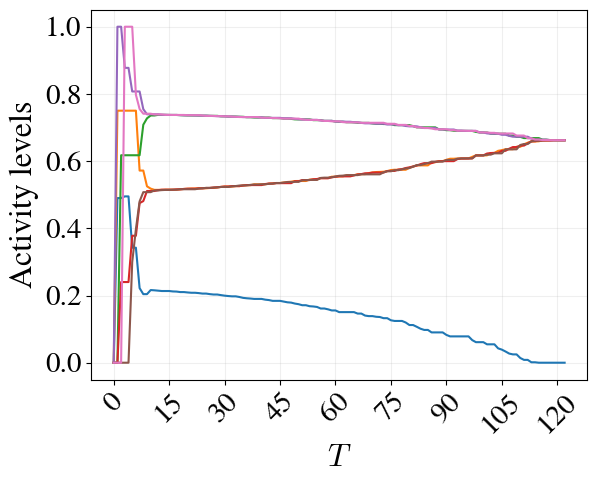}%
    \caption{The trajectories of agents withing the subgraph}
    \label{fig:ba_peak2_traj}
    \end{subfigure}
    \caption{The induced subgraph of a sample BA graph with $n=100, m=1$ causing slow convergence at $\delta=0.51$.}
    \label{fig:ba_peak2}
    \end{minipage}%
    \hfill
    \begin{minipage}{0.49\textwidth}
    \centering
    \begin{subfigure}[t]{0.5\textwidth}
    \includegraphics[width=1\linewidth]{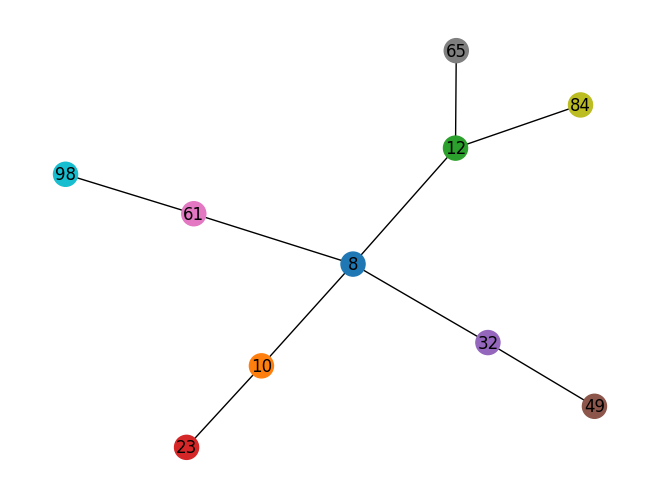}%
    \caption{The structure of the subgraph}
    \label{fig:ba_peak1_graph}
    \end{subfigure}%
    \hfill
    \begin{subfigure}[t]{0.5\textwidth}
    \includegraphics[width=1\linewidth]{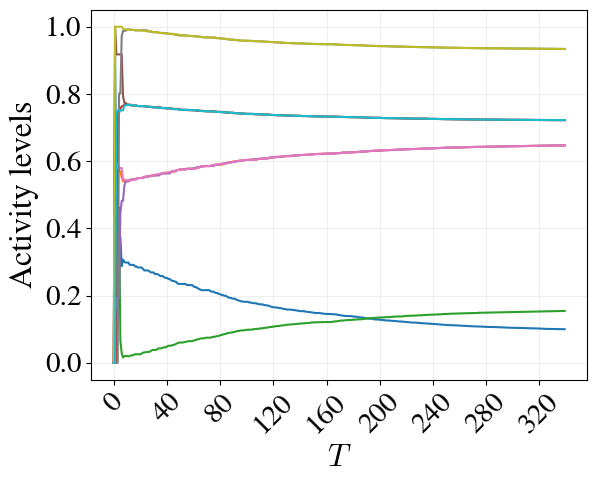}%
    \caption{The trajectories of agents withing the subgraph}
    \label{fig:ba_peak1_traj}
    \end{subfigure}
    \caption{The induced subgraph of a sample BA graph with $n=100, m=1$ causing slow convergence at $\delta=0.434$.}
    \label{fig:ba_peak1}
    \end{minipage}
\end{figure}

For $m=2$ and $m=5$ we observe more peaks form. 
They indicate slower convergence times and are more prominent than what was observed for other classes of random graphs.
When inspecting individual trajectories and slow converging subgraphs, similarities with the case $m=1$ arise, with star-like patterns slowing down the overall convergence.

For $m=10$, while the convergence time plot includes numerous peaks, they are less defined.
The convergence time plots resemble those of RR and ER random graphs more than those of BA random graphs for smaller values of $m$.
The individual trajectories confirm this, with the shuffle phase showing more disordered behavior, similar to the trajectories observed for RR graphs.\footnote{Appendix~\ref{appendix:random_trajs} contains more sample trajectories and further comparison of the random graphs.}
While some of the cases of slow convergence may be traced to subgraphs characteristic of BA graphs, the overall structure is more complex and the intricate connections between different parts of the graph dominate.

\subsubsection{Graphs showcasing reshuffles}
\label{sec:reshuffles}
The ability to explain slow convergence times relies heavily on the ability to describe the possible equilibria.
The more structured a graph is, the simpler it is to describe the stable equilibria and therefore explain the dynamics.
For graphs with less structure, many distinct phenomena can occur within individual trajectories and coherent explanations are difficult to provide.
We therefore move to study one of the discovered phenomena, the reshuffle, by itself, in more detail.
It is perhaps more interesting than the other phenomena, as it can lead to significant changes in activity levels when the system seems to have already converged.

\begin{figure}[h]
\centering
\hfill
\begin{minipage}{.4\textwidth}
\centering
    \includegraphics[width=\textwidth]{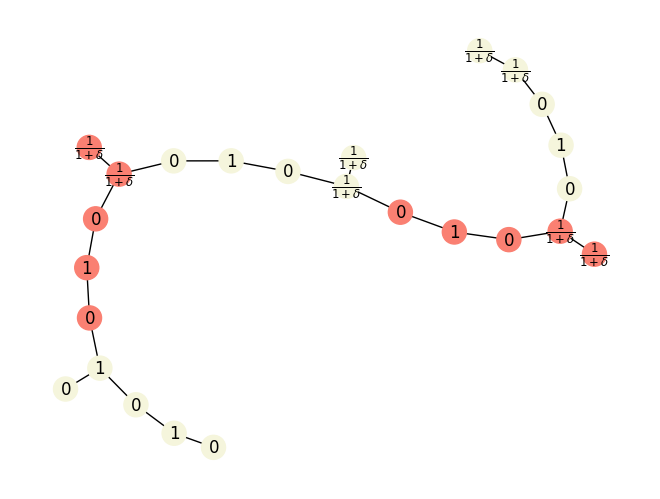}
    \caption{The graph leading to $5$ reshuffles, with starting activity levels annotated on the nodes. Color changes indicate different copies of the path graph.}
    \label{fig:switchups_graph}
\end{minipage}%
\hfill
\begin{minipage}{.4\textwidth}
\centering
\includegraphics[width=\textwidth]{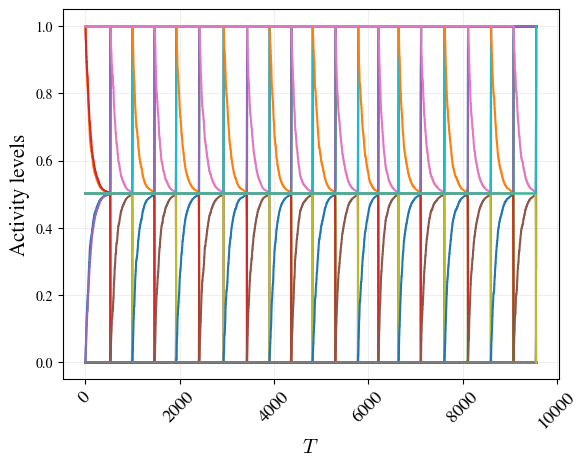}
\caption{The trajectory started at the configuration leading to $10$ reshuffles for $\delta=0.99$.}
\label{fig:switchups_traj}
\end{minipage}%
\hfill
\end{figure}

\begin{wrapfigure}{r}{0.37\textwidth}
\centering
\begin{subfigure}[t]{0.37\textwidth}
\centering
    \includegraphics[width=0.7\textwidth]{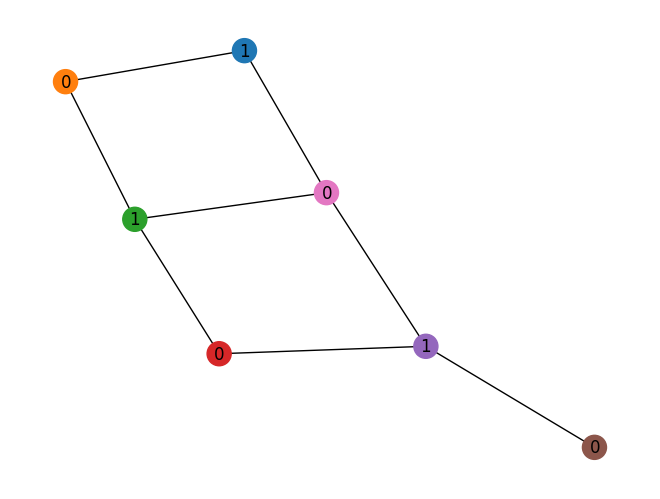}
    \caption{The graph leading to a reshuffle only involving one component.}
    \label{fig:single_comp_reshuffle_graph}
\end{subfigure}\\
\hfill
\centering
\begin{subfigure}[t]{0.37\textwidth}
\centering
    \includegraphics[width=0.6\textwidth]{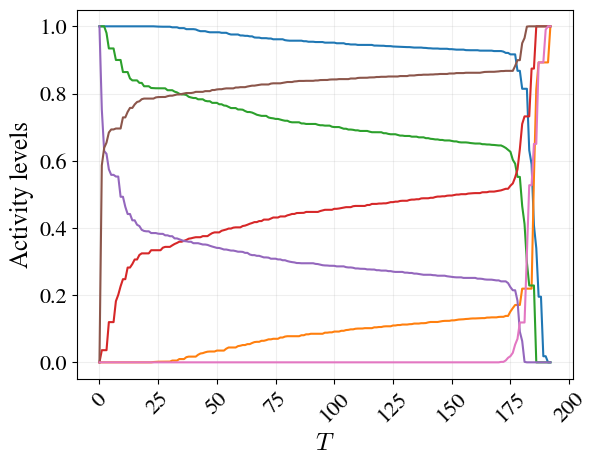}
    \caption{The trajectory started at the configuration leading to the reshuffle only involving one component for $\delta=0.55$.}
    \label{fig:single_comp_reshuffle_traj}
\end{subfigure}
\caption{An example of a system with a reshuffle involving only one connected component.}
\label{fig:single_comp_reshuffle}
\end{wrapfigure}

We construct two graph examples which showcase more properties of the reshuffle.
First we show that arbitrarily many reshuffles can occur.
To do this we construct a graph and an initial strategy profile which leads to a trajectory with $k$ reshuffles, for any $k\in \mathbb N$.
A detailed version of this construction is included in Appendix~\ref{appendix:many_switchups}.
In Figure~\ref{fig:switchups_graph} we showcase the graph and the initial activity levels.

The graph is constructed by connecting many copies of the path of length $n=5$  together such that one of the copies reaching its equilibrium leads to a reshuffle in the next copy. In particular, when the path with a starting strategy profile of $(0,1,0,1,0)$ reaches its stable equilibrium $(1,0,1,0,1)$, the first agent in the next path does not have his neighbor activity threshold met and initiates the next reshuffle.
The constructed graph can be comprised of arbitrarily many paths, leading to an arbitrary number of reshuffles.
For $10$ reshuffles, this leads to the trajectory presented in Figure~\ref{fig:switchups_traj}.

By taking many copies of such graph, we can achieve arbitrarily large expected convergence times for $\delta$ close enough to $1$.
\begin{thm}
\label{pr:many_reshuffles}
    For any $\delta\in(1/\phi,1)$ and $T\in \mathbb N$, there exists a graph $G$ such that the expected time to obtain the final active set is at least $T$.
\end{thm}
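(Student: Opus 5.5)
The plan is to use the chained construction of Appendix~\ref{appendix:many_switchups}, not to rely on $\delta$ being close to $1$. For fixed $\delta\in(1/\phi,1)$ and $k\in\mathbb N$, let $G_k$ be $k$ copies $P^{(1)},\dots,P^{(k)}$ of the path on $5$ vertices, linked as in Figure~\ref{fig:switchups_graph}. Start from the profile where every copy plays $(0,1,0,1,0)$; the expected time to reach the final active set will be bounded from below along this trajectory. The point is that the number of reshuffles grows with $k$, and each reshuffle forces a sequential waiting time of order one round. Since a round has $n=5k$ steps, the expected number of rounds should grow linearly in $k$, and we then choose $k\ge T$ (up to a constant).

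\textbf{Step 1: each copy reshuffles for every update order.} In an isolated copy started at $(0,1,0,1,0)$, the two pairs evolve as the $n=2$ dynamics as long as the middle agent is inactive. By the explicit formula for $n=2$, agents $2$ and $4$ approach $1/(1+\delta)$ from above and agents $1$ and $5$ approach it from below. Initially $\delta(x_2+x_4)=2\delta>1$, so the middle agent is inactive. In the limit, however, $2\delta/(1+\delta)<1$. Hence, for every realization of the update order in which each agent updates infinitely often (an almost sure event), there is a finite time when the middle agent's best response becomes positive and the active set changes. Afterwards the copy moves to the configuration $1-1-1$, that is, $(1,0,1,0,1)$. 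This matches Proposition~\ref{pr:sequences} and the $n=5$ discussion: for $\delta>1/\phi$ the $1-1-1$ profile is reached and $2-2$ is not an equilibrium.

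\textbf{Step 2: the reshuffles are forced to occur in sequence.} Check from the gluing in the appendix that, as long as copy $j$ has not completed its reshuffle, the linking agent of copy $j+1$ gets enough neighbor activity from copy $j$ to stay inactive. This keeps copy $j+1$ frozen in its $(0,1,0,1,0)$-type state, or at least prevents its reshuffle from completing. Only once copy $j$ settles on $(1,0,1,0,1)$ does that agent's threshold fail, which starts the reshuffle in copy $j+1$. Let $\tau_j$ be the time at which copy $j$'s active set changes. We get $\tau_1<\tau_2<\dots<\tau_k$ almost surely. Moreover, between $\tau_j$ and $\tau_{j+1}$, at least one specific agent of copy $j+1$ must be selected, for instance its linking agent must update to become active. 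Each step selects that agent with probability $1/n$, independently of the past. By the strong Markov property at $\tau_j$, we get $\mathbb E[\tau_{j+1}-\tau_j]\ge n$ steps, i.e.\ one round. Summing, the expected time of the last active-set change is at least $k$ rounds, and $k=T$ suffices. (Summing over the stages is the cleaner argument, though taking many disjoint copies with a maximum is an alternative.)

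\textbf{Main obstacle.} The delicate part is Step 2: verifying, for all update orders rather than a favorable one, that the coupling between consecutive copies neither triggers a reshuffle early nor lets copy $j+1$ skip its reshuffle. In particular, copy $j+1$ must reach its ``false'' $2-2$-like state before its linking agent activates; if it has drifted elsewhere, the $\ge 1$ round waiting time still holds, but we need the active set to change at least once in every stage. My first attempt would be a monotonicity argument: while the link is inactive, agents $2$ and $4$ of each copy stay above $1/(1+\delta)$ and agents $1$ and $5$ stay below. These invariants are preserved by any single best-response update, which is what makes the ordering robust. I would then use the strict inequality $2\delta/(1+\delta)<1$, together with $\delta+\delta/(1+\delta)$-type thresholds from Theorem~\ref{thm:large_delta_equils}, to show that the final profile of each copy fails to keep its successor's link inactive.
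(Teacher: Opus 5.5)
\textbf{Comparison with the paper.} You use the same chained construction as the paper but organize the probability differently.
\begin{itemize}
\item The paper starts copies $j>1$ at $(0,1,0,\frac{1}{1+\delta},\frac{1}{1+\delta})$. It charges every stage the cost $c(\delta)$ from Fact~\ref{thm:p5}. To avoid depending on a hand-picked state, it then passes to a disjoint union of $(5n)^{2n}$ copies of $G_n$, so that with probability above $1/2$ some copy is driven into the chain-triggering state by a favourable initial update order. (From the zero profile, updating the agents $v_2^j,v_4^j$ first gives exactly your starting profile.)
\item You fix that profile as the initial state and argue that the chain is forced for \emph{every} update order. You charge each stage only the geometric waiting time for one designated agent, via the strong Markov property. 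This suffices because the number of stages grows like $k$ while a round is only $5k$ steps.
\end{itemize}
Your route gives a graph on $5T$ vertices and a clean expectation bound with no amplification. You lose the blow-up of the per-stage cost as $\delta\to1$. You also lose independence from the initial profile: if the statement is read with a prescribed start, you still need the paper's amplification step.

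\textbf{First repair: the definition of $\tau_j$.} As you define it (``the time at which copy $j$'s active set changes''), $\tau_1<\dots<\tau_k$ is false. In your start the agents $v_5^j$ (and $v_1^1$) are inactive, and they switch on at their first update, in all copies in parallel. Define $\tau_j$ instead as the first time $v_3^j$ becomes active. The ordering then holds for every update order:
\begin{itemize}
\item Look at the first activation among $\{v_1^m,v_3^m : m>j\}$. Until then, $x_{v_2^m}=1$ and $x_{v_4^m}\ge\frac{1}{1+\delta}\ge\frac1\delta-1$ for $m>j$.
\item So the first one to switch on can only be $v_1^{j+1}$, and only after $x_{v_4^j}<\frac1\delta-1$. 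That forces $v_3^j$ to have been active earlier.
\item After $\tau_j$, the agent $v_1^{j+1}$ must still be selected, which gives your bound of $n$ steps per stage.
\end{itemize}

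\textbf{Second repair: finiteness of every $\tau_j$.} Your Step~1 proves a reshuffle only for an isolated copy. For $j\ge2$ you also need $\tau_j<\infty$ almost surely. That is, $x_{v_4^{j-1}}$ must really collapse, and the left pair of copy $j$ must then fall below threshold. The paper merely asserts this too. One way to close it:
\begin{itemize}
\item $G_k$ is a tree with bipartition $A=\{v_1^j,v_3^j,v_5^j\}$ and $B=\{v_2^j,v_4^j\}$. Replacing $x_i$ by $-x_i$ on $B$ makes all best responses nondecreasing, and your initial profile is the least element of the resulting lattice.
\item Hence, for every update order, activities on $A$ are nondecreasing and those on $B$ nonincreasing; your invariants are special cases. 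Each agent changes status at most once, and the trajectory converges to an equilibrium almost surely.
\item An inductive case analysis shows that for $\delta\in(1/\phi,1)$ the only equilibrium of $G_k$ puts every copy at $(1,0,1,0,1)$. It uses $x_{v_2^j},x_{v_4^j}\le\frac1{1+\delta}$ once $x_{v_4^{j-1}}=0$, together with $3\delta^2>1$ and $\delta^2+\delta>1$.
\end{itemize}
So every $v_3^j$ switches on in finite time. The last active-set change then occurs no earlier than $\tau_k$, with $\mathbb E[\tau_k]\ge kn$.
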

Note that this implies that, while we do understand the equilibria for $\delta\in(1/\phi,1)$, the convergence can still be slow due to a large number of reshuffles.

The reshuffles observed in the previous examples happened when an inactive agent separating two parts of the graph became active.
Now, we will show that the inactive agent may only be connected to a single active component of the graph and still cause a reshuffle, changing the equilibrium the system converges to significantly.
In Figure~\ref{fig:single_comp_reshuffle_graph} we present a graph leading to this behavior.
It consists of a path of length $n=6$ with a single vertex added and connected to vertices with odd indices. In this example, we have $\delta=0.55$. The trajectory starts at a strategy profile where the agents in the path with odd indices have activity level of $1$ and the remaining agents have an activity level of $0$ (c.f. Figure~\ref{fig:single_comp_reshuffle_graph}).
With no added agent, the dynamics would converge to the equilibrium with all agents active, as the threshold for the loss of stability is $\delta\approx0.555$. 
During those dynamics, the activity levels of agents with odd indices keep decreasing, eventually leading to the added vertex activating due to insufficient activity levels as showcased in Figure~\ref{fig:single_comp_reshuffle_traj}.
Finally, the system reaches an equilibrium in which $x_i^*=1-x_i(0)$ for all agents $i$.

\subsection{Web application}
The characteristics of the dynamics often depend on particular induced subgraphs, with different subgraphs being important for different values of $\delta$.
The analysis conducted in this section showed that the trajectories need to be analyzed separately to understand the system.
To facilitate the inspection of different trajectories, we created a web application along with interactive tools meant to help analyze them.
This resource, along with its documentation, is available at \href{https://networkconvergencerates.up.railway.app/}{https://networkconvergencerates.up.railway.app/}.

\section{Summary of the results and discussion}
\label{sec:summary}

By investigating many sample trajectories, we uncovered much of the complex behavior of the system, identifying phenomena governing convergence rates.
Throughout the entire analysis, we observed the following phenomena to be the causes of slow convergence.
\begin{enumerate}[label=(\alph*)]
    \item When the dynamics converges to a stable equilibrium which is close to its stability threshold, the convergence is slow.\label{p:summary:1}
    \item When the dynamics diverges from an unstable equilibrium close to its stability threshold, the divergence is slow, leading to long overall convergence times.\label{p:summary:2}
    \item The dynamics may converge to a strategy profile which is a stable equilibrium of the active set but provides insufficient activity levels for inactive agents and is therefore not an equilibrium for the entire network. When the dynamics is close to converging, the inactive agents become active and initiate convergence to a different strategy profile. We call this phenomena the \emph{reshuffle} and show that there can be a sequence of reshuffles of arbitrary length.\label{p:summary:3}
    \item The dynamics may undergo many changes in a short span of time due to intricate relationships between different parts of graph. This behavior is especially prominent in unstructured networks.\label{p:summary:4}
\end{enumerate}
With Theorem~\ref{thm:equilibria}, we classify the stable equilibria in terms of connected components of the active sets. 
This classification leads to identifying the observed phenomena with the structural properties of the network.
In particular, the asymptotic converge occurs at a rate proportional to $1+\delta\lambda_{\min}(\bm G_{S})$. Hence, for $\delta$ close to $1/|\lambda_{\min}(\bm G_{S})|$ the dynamics are slow.
If $\delta<1/|\lambda_{\min}(\bm G_{S})|$, the trajectories slowly converge to the equilibrium, and if $\delta>1/|\lambda_{\min}(\bm G_{S})|$, they diverge slowly away from it. This corresponds to the phenomena \ref{p:summary:1} and \ref{p:summary:2} listed above.
For graphs in which the equilibria can be identified, such as the path graphs, this leads to the enumeration of the potential values of $\delta$ for which the time to converge is large.

Failure of Condition~\ref{p:eqilibria:3} of Theorem~\ref{thm:equilibria} leads to more interesting behavior.
For a given subset of agents $S$, the equilibria of the connected components of $S$ may provide insufficient activity levels for the inactive agents to remain inactive.
In the dynamics, this leads to inactive agents changing their activity levels and initiating convergence to a different strategy profile, supported by a different active set.
This change of active set can happen quickly and involve different subgraphs, leading to the intricate behavior observed for random graphs and captured by point \ref{p:summary:4}.
It can also happen slowly, leading to a reshuffle described by point \ref{p:summary:3}.
Reshuffles occur whenever the strategy profile, restricted to the set of active agents, is a stable equilibrium with $\delta$ close to its stability threshold. 
The trajectory then slowly converges to the strategy profile, possibly fulfilling the neighbor activity level requirements of the inactive agents until the system is close to converging. Once those requirements are not satisfied, the inactive agents enter the dynamics and the system begins convergence to a different strategy profile. 
This behavior shows that, even after a trajectory reaches an $\epsilon-$equilibrium, the agents may still change their strategies significantly.
Furthermore, such reshuffles can lead to other reshuffles, repeating similar behavior multiple times.

\bibliographystyle{ACM-Reference-Format}
\bibliography{InvDynCit1}

\newpage
\appendix
\section{Proofs and comments}
\subsection{Stability}
\label{appendix:theoretical}
\begin{autoproof}{th:stability}

We begin by presenting the problem in the language of dynamical systems.
Let $\bm x(t)$ be the state of the system at step $t$.
We choose $i_{t+1}$ uniformly from the set $\{1, \dots, n\}$ and update the state with the map
\begin{align*}
    \bm x(t+1) = M_{i_{t+1}}(\bm x(t)) = \bm x(t)+\bm e_{i_{t+1}}(-x_{i_{t}}(t)+\max(0, 1-(\delta\bm A\bm x(t))_{i_t})),
\end{align*}
where $\bm e_i$ are the standard basis vectors. 
Each of the maps $M_i$ increases the game's potential\footnote{The potential game given by this potential has the same best responses as the abstract game we study here (c.f. \cite{Bra14}). Therefore, this potential increases with each best-reply strategy change. } given by 
$$V(\bm x) = \bm x^\top\bm 1-\frac12\bm x^\top(\bm I + \delta\bm G)\bm x.$$

First we show Lyapunov stability for $\delta<1/|\lambda_{\min}(\bm G)|$.
From \cite[Proof of Proposition 4]{Bra07} we know that $\bm x^*$ is a strict maximum of $V$.
The notion of a strict maximum is local and therefore if agents change their strategies by large amounts at a time, they may find a larger value of $V$.
We show that this is not a case by showing that their activity changes are small enough near the equilibrium.

Let $d=\inf_i |1-\sum g_{ij}x_j^*|$. As agents are active or strictly inactive, $d > 0$. Let $D$ denote the maximal degree in the graph.
Observe that if $\|\bm x(t)- \bm x^*\|_\infty<d/D$,for any $i\not \in S$, the best move is to change to $0$, so move by at most $\|\bm x(t)- \bm x^*\|_\infty$.
For $i\in S$, the best move differs from $x^*_i$ by at most $|(1-\sum g_{ij}x_j(t)) - (1-\sum g_{ij}x_j(t))|<D\|\bm x(t)- \bm x^*\|_\infty$.
Therefore every agent changes their strategy by at most $(D+1)\|\bm x(t)- \bm x^*\|_\infty$.

Now, let $R<d/D$ be such that $\bm x^*$ is the unique maximum of $V$ on $B(\bm x^*, R)=\{\bm x: \|\bm x(t)- \bm x^*\|_\infty<R\}$.
Let $V_0$ be the maximal potential achieved on $B(\bm x^*, R)\setminus B(\bm x^*, R/(2+D))$. As $\bm x^*$ is the unique maximum, we must have $V_0<V(\bm x^*)$. Let $V_1=(V_0+V(\bm x^*))/2$ and let $r\in(0, $ be such that the minimum value of $V$ on $B(\bm x^*, r)$ is some $V_2\geq V_1$ (possible by continuity).
Now we know that trajectory started in $\bm x(0)\in B(\bm x^*, r)$ will not leave $B(\bm x^*, R/(2+D))$, as all steps of size up to $(D+1)\|\bm x(t)- \bm x^*\|_\infty$ with $\bm x(t)\in B(\bm x^*, R/(2+D))$ would land in $B(\bm x^*, r)$ and the subset of $B(\bm x^*, R)$ in which the potential is at least $V_2$ is contained within $B(\bm x^*, R/(2+D))$.

We therefore obtain 
$$\forall R<d/D\exists r: \|\bm x(0)- \bm x^*\|_\infty<r\implies\forall t>0\|\bm x(t)- \bm x^*\|_\infty<R.$$
If the given $R$ is larger than $d/2D$, we can choose $r$ corresponding to $d/2D$ to obtain Lyapunov stability.

Now, we can proceed to check attractiveness.
Suppose we start with $\|\bm x(0)-\bm x^*\|<\epsilon$, with $\epsilon$ chosen such that $\|\bm x(t)-\bm x^*\|<d/D$ throughout the dynamics.
For all $j\not\in S$, if $i_k$ is drawn to be $j$, agents $j$ changes their strategy to $0$. Hence, 
$$\mathbb P(\exists t_0\forall j\not \in S\forall t>t_0: x_j(t)=0)=1.$$
We condition on this event and start the dynamics at $t_0$, assuming only players in $i$ are active (and by the choice of $\epsilon$, they remain active).

We track the evolution of the error term $\bm f(t) = \bm x(t)-\bm x^*$.
Note that the transition maps are given by 
\begin{align*}
    M_i(\bm x)=\begin{cases}
        \bm x \text{ if } i\not\in S,\\
        \bm x+\bm e_i(-x_i+1-(\delta\bm A\bm x)_{i})\text{ if } i\in S.
    \end{cases}
\end{align*}
By letting $\bm B = \bm I+\delta \bm G_{S}$ and noting that
\begin{align*}    
x_i+(\delta\bm A\bm x)_{i}  &=\bm e_i^\top (\bm I+ \delta\bm A)\bm x= \bm e_i^\top \bm B \bm x,
\end{align*}
this rearranges to 
\begin{align*}
    M_i(\bm x)=\begin{cases}
        \bm x \text{ if } i\not\in S,\\
        \bm e_i+(\bm I-\bm e_i\bm e_i^\top \bm B)x\text{ if } i\in S.
    \end{cases}
\end{align*}

Combining this with $M_i(\bm x^*)=\bm x^*$, we obtain, for $i\in S$
\begin{align*}
    M_i(\bm x)-\bm x^*&=M_i(\bm x)-M_i(\bm x^*)\\
    &=(\bm I-\bm e_i\bm e_i^\top \bm B)(\bm x-\bm x^*)\\
    &=(\bm x-\bm x^*) - \bm e_i(\bm B(\bm x-\bm x^*))_i.
\end{align*}

Consider now the quantity $\|\bm f\|_{\bm B}^2 = \bm f^\top \bm B \bm f.$
As $\delta\lambda_{\min}(\bm G_{S})>-1$, $B$ is positive definite and $\|\bm f\|_{\bm B}$ is a norm.

If $i_t\in S$, noting that $B$ is symmetric, we have
\begin{align*}
    \|f(t+1)\|_B^2&=(\bm x(t+1)-\bm x^*)^\top \bm B(\bm x(t+1)-\bm x^*)\\
    &=(M_{i_{t+1}}(\bm x(t))-\bm x^*)^\top \bm B(M_{i_{t+1}}(\bm x(t))-\bm x^*)\\
    &=(\bm f-\bm e_{i_{t+1}}(\bm B\bm f)_{i_{t+1}})^\top \bm B (\bm f-\bm e_{i_{t+1}}(\bm B\bm f)_{i_{t+1}})\\
    &=\|\bm f\|_B^2-(\bm B\bm f)_{i_{t+1}}\bm e_{i_{t+1}}^\top \bm B\bm f-\bm f^\top \bm B\bm e_{i_{t+1}}(\bm B\bm f)_{i_{t+1}}+(\bm B\bm f)_{i_{t+1}}\bm e_{i_{t+1}}^\top\bm Be_{i_{t+1}}(\bm B\bm f)_{i_{t+1}}\\
    &=\|\bm f\|_B^2-(\bm B\bm f)_{i_{t+1}}^2,
\end{align*}
where we used that $\bm e_i^\top \bm B \bm e_i=1$ and the shorthand notation $\bm f =\bm f(t)$.
Hence, the norm only decreases. This is clear, as $\|\bm f\|_{\bm B}$ may be viewed as the local potential, but we estimated the size of this decrease.

Now we can prove attractivity.
We first do it simply.
Namely, we can bound the series $\sum_{t=0}(\bm B\bm f)_{i_t}^2\leq \|f(0)\|_B^2$ implying $(\bm B\bm f)_{i_t}\to0$ and so $(\bm B\bm f)_{i}\to 0$ almost surely.
This argument gives us no insights about convergence speed.
To obtain them, we show stability by proving that $(1-\frac{\lambda_{\min}(\bm B)}n)^{-t}\|\bm f(t)\|_B^2$ is a supermartingale (with the filtration $\mathcal F_t$ generated by the process $\bm x(t)$) and use standard martingale theory.
Note that 
\begin{align*}
    \mathbb E(\|f(t+1)\|_B^2|\mathcal F_n)&=\mathbb E(\|\bm f(t+1)\|_B^2|\bm f(t)=\bm f)\\
    &=\|\bm f\|_B^2-\mathbb E((\bm B\bm f)_{i_{t+1}}^2|\bm f(t)=\bm f)\\
    &=\|\bm f\|_B^2-\frac1n\sum_{i=1}^n(\bm B\bm f)_i^2\\
    &=\|\bm f\|_B^2-\frac{\|Bf\|_2^2}n\\
    &\leq\|\bm f\|_B^2-\frac{\lambda_{\min}(\bm B)\|\bm f\|_B^2}n\\
    &=(1-\frac{\lambda_{\min}(\bm B)}n)\|\bm f\|_B^2,
\end{align*}
where to obtain the inequality, we used the Rayleigh Quotient 
\begin{align*}
    \lambda_{\min}(\bm B)=\min_{x\not=0}\frac{\bm x^\top \bm B\bm x}{\bm x^\top\bm x}=\min_{x\not=0}\frac{(\sqrt{\bm B}\bm x)^\top \bm B\sqrt{\bm B}\bm x}{(\sqrt{\bm B}\bm x)^\top\sqrt{\bm B}\bm x}=\min_{x\not=0}\frac{\|\bm B \bm x\|_2^2}{\|\bm x\|_{\bm B}^2}.
\end{align*}
A simple calculation shows that 
\begin{align*}
    \rho=1-\frac{\lambda_{\min}(\bm B)}n= \frac{n-1-\delta \lambda_{\min}(\bm G_S)]}{n}<1.
\end{align*}
We can now deduce from Doob's martingale convergence theorem that $f(t)/\rho^t$ converges almost surely to a random variable with finite expectation.
This implies $f_t$ converging almost surely and, heuristically, we expected the size of the error term to be $O(\rho^t)$ (with the constant random).

We now need to show the lack of stability in the remaining cases.
Suppose now $\delta\lambda_{\min}(\bm G_{S})<-1$.
This implies that $\bm B$ has an eigenvalue $\lambda=1+\delta\lambda_{\min}(\bm G_{S})<0$. Suppose $\bm v$ satisfies $\bm B\bm v=\lambda \bm v$ and $\|\bm v\|_\infty<1$.

Consider now the function $V(\bm f) = \bm f^\top\bm B\bm f$. 
It no longer induces a norm but the previous calculations hold and it is nonincreasing.
\begin{align*}
    V(\bm f(t+1))&=V(\bm f)-(\bm B\bm f)_{i_{t+1}}^2.
\end{align*}
Furthermore, it should converge to $0$ whenever $\bm f(t)\to 0$.
We construct a point $\bm x$ which is arbitrarily close to $\bm x^*$ and $V(\bm x-\bm x^*)<0$.
Indeed, for any $\epsilon'<\epsilon$, we can take $\bm x^*+\epsilon' \bm v$ and obtain 
\begin{align*}
    V(\bm x-\bm x^*)&=V(\epsilon'\bm v)=\lambda\epsilon'^2\|\bm v\|_2^2<0.
\end{align*}
This means the process cannot converge to $\bm x^*$.

For $\delta\lambda_{\min}(\bm G_{S})=-1$, we also consider the strategy profile $\bm x= \bm x^*+\epsilon' \bm v$.
We have that \begin{align*}
    (\bm I +\delta\bm A)\bm x = (\bm I +\delta\bm A)\bm x^* + \bm0= \bm 1,
\end{align*}
so $\bm x$ is an equilibrium on the active set. By non-degeneracy and the choice of $\epsilon'$, the inactive agents remain inactive and so $\bm x$ is an equilibrium.
The dynamics are therefore constant and $\bm x^*$ cannot be attractive.

\end{autoproof}

This result only holds for equilibria with agents active or strictly inactive.
For almost any $\delta$, only such equilibria exist \cite[Footnote~16]{Bra14}.

\subsection{Path equilibria}
\begin{autoproof}{pr:path:uneq}
    Firstly, note that for $\delta<1/2$ the result is a trivial consequence of the observation regarding the relationship between $\delta$ and the maximal degree of the graph.
    We begin by proving the result for $n=2m+1$.
    For $\delta\geq1/2$, the strategy profile $x_i\equiv i\bmod 2$ is an equilibrium. As the equilibrium is unique, this is the only equilibrium.

    For $n=2m$, we split the analysis into the cases $\delta=1/2$ and $\delta>1/2$ and explicitly construct the equilibrium for each.
    Assume $\delta>1/2$ and let $\bm x=(\bm I+\delta \bm G)^{-1}\bm 1.$
    We need $x_i>0$ to show the claim, as then it must be an equilibrium.
    The entries of $\bm x$ satisfy
    \begin{align*}
        x_i+\delta(x_{i-1}+x_{i+1})&=1,\\
        x_0=x_{2m+1}&=0.
    \end{align*}
    Set $-1/\delta = 2\cos\theta$ with $\theta\in(\frac{2m}{2m+1}\pi, \pi]$.
    The solutions to this problem can be written in the form
    \begin{align*}
        x_i=A\cos\left(\left(i-\frac{2m+1}2\right)\theta\right)+B\sin \left(\left(i-\frac{2m+1}2\right)\theta\right)+\frac1{1+2\delta}.
    \end{align*}
    Using the boundary conditions, we obtain
    \begin{align*}
        A&=-\frac1{(1+2\delta)\cos\left((\frac{2m+1}2)\theta)\right)},\\
        B&=0,
    \end{align*}
    and the solution
    \begin{align*}
        x_i&=\frac1{(1+2\delta)}\left(1-\frac{\cos\left((i-\frac{2m+1}2)\theta\right)}{\cos(\frac{2m+1}2\theta)}\right)\\
        &=\frac1{(1+2\delta)}\left(\frac{\cos(\frac{2m+1}2\theta)-\cos\left((i-\frac{2m+1}2)\theta\right)}{\cos(\frac{2m+1}2\theta)}\right)\\
        &=\frac{-2\sin(\frac{i\theta}2)\sin\left((\frac{2m+1-i}2)\theta\right)}{(1+2\delta)\cos(\frac{2m+1}2\theta)}.
    \end{align*}
    We need to analyze the sign.
    Firstly, we always have $1+2\delta>0$. Set $\theta=\pi-\epsilon$. As $\theta\in(\frac{2m}{2m+1}\pi)$, we have $\epsilon\in (0,\frac{1}{2m+1}\pi)$.
    We now have $\frac{2m+1}2\theta=\frac{2m+1}2\pi-\frac{2m+1}2\epsilon\in(m\pi, \frac{2m+1}2\pi)$.
    This means that $\sgn(\cos(\frac{2m+1}2\theta))=(-1)^m$.
    Now we compute the sines.
    Note $\frac{i\theta}2=i\frac{\pi}{2}-i\epsilon/2$. As $i\in \{1,\dots, 2m\}$, we have $0< i\epsilon/2<\frac{m}{2m+1}\pi<\frac{\pi}{2}$ and therefore $\sgn(\sin(\frac{i\theta}2)=(-1)^{\lfloor\frac{i-1}{2}\rfloor}$. 
    This also yields $\sgn(\sin((\frac{2m+1-i}2)\theta)=(-1)^{\lfloor\frac{2m-i}2\rfloor}$.
    Combining these, we obtain
    \begin{align*}
        \sgn (x_k)=-(-1)^{\lfloor\frac{i-1}{2}\rfloor}(-1)^{\lfloor\frac{2m-i}2\rfloor}(-1)^m=(-1)^{\lfloor\frac{i-1}{2}\rfloor+\lfloor\frac{2m-i}2\rfloor+m+1}.
    \end{align*}
    Simplyfying, we get $$\left\lfloor\frac{i-1}{2}\right\rfloor+\left\lfloor\frac{2m-i}2\right\rfloor+m+1=\left\lfloor\frac{i+1}{2}\right\rfloor-1+\left\lceil\frac{i}2\right\rceil+2m+1=2m+2,$$ and so $x_i>0$.
    
    Finally, we need to handle the case $\delta=1/2$.
    We obtain the solution $x_i=\frac12+A(-1)^i+Bi(-1)^i=\frac{1}{2}+(-1)^i(\frac{i}{2m+1}-\frac12)$ which can be easily seen to be positive.

\end{autoproof}


\begin{autoproof}{pr:sequences}
    First, note that $b(\delta, 2m)$ is well defined for $\delta<1/|\lambda_{\min}(\bm G(2m))$ by Proposition~\ref{pr:path:uneq}. For all $\delta$, we have $b(\delta, 1)=1$.
    Furthermore, using the proof of Proposition~\ref{pr:path:uneq} we obtain the value of $b(\delta, 2m)$. It uses the substitution $\delta = -1/2\cos\theta$ with $\theta\in(\frac{2m}{2m+1}\pi, \frac{2m+2}{2m+3}\pi)$ corresponding to the interval $\delta\in (1/|\lambda_{\min}\bm G(2m+2)|, 1/|\lambda_{\min}(\bm G(2m))|)$.
    We have
    \begin{align*}
        b(\delta, 2m)&=\frac{-2\sin(\frac{\theta}2)\sin(m\theta)}{(1+2\delta)\cos(\frac{2m+1}2\theta)}=\frac{-2\sin(\frac{\theta}2)\cos\theta\sin(m\theta)}{(\cos\theta-1)\cos(\frac{2m+1}2\theta)}.
    \end{align*}
    To show statement~\ref{p:seq:1}, we therefore need
    \begin{align*}
    -2\cos\theta&<1+\frac{ -2\sin(\frac{\theta}2)\cos\theta\sin(m\theta)}{(\cos\theta-1)\cos(\frac{2m+1}2\theta)}\\
0&<1+2\cos\theta+\frac{-2\sin(\frac{\theta}2)\cos\theta\sin(m\theta)}{-2\sin^2(\frac\theta2)\cos(\frac{2m+1}2\theta)}\\
    0&<\frac{\sin(\frac{3\theta}{2})}{\sin(\frac{\theta}{2})}+\frac{\cos\theta\sin(m\theta)}{\sin(\frac\theta2)\cos(\frac{2m+1}2\theta)}\\
    0&<\frac{\sin(\frac{3\theta}{2})\cos(\frac{2m+1}2\theta)+\cos\theta\sin(m\theta)}{\sin(\frac\theta2)\cos(\frac{2m+1}2\theta)}\\
    0&<\frac{\sin((m+2)\theta)+\sin((1-m)\theta)+\sin((m+1)\theta)+\sin((m-1)\theta)}{2\sin(\frac\theta2)\cos(\frac{2m+1}2\theta)}\\
    0&<\frac{\sin((m+2)\theta)+\sin((m+1)\theta)}{2\sin(\frac\theta2)\cos(\frac{2m+1}2\theta)}\\
    0&<\frac{\sin(\frac{2m+3}{2}\theta)\cos(\frac{\theta}{2})}{\sin(\frac\theta2)\cos(\frac{2m+1}2\theta)}.
    \end{align*}
    In the transformations, we used standard trigonometric identities.

    For $\theta\in(\frac{2m}{2m+1}\pi, \frac{2m+2}{2m+3}\pi)$, we have $\sin(\frac\theta2), \sin(\frac\theta2)>0$.
    Furthermore, we also have
    \begin{align*}
        \frac{2m+1}{2}\theta \in(m\pi, m\pi+\frac{1}{2m+3}\pi)&\implies\sgn(\cos(\frac{2m+1}2\theta))=(-1)^m,\\
        \frac{2m+3}{2}\theta \in((m+1)\pi+\frac{-1}{2m+1}\pi, (m+1)\pi)&\implies\sgn(\sin(\frac{2m+3}{2}\theta))=(-1)^m.
    \end{align*}
    This yields the inequality.

    Now, we proceed Statement~\ref{p:seq:2}. A similar chain of calculations leads to the required inequality
    \begin{align*}
    0&>\frac{\sin(\frac{2k+3}{2}\theta)\cos(\frac{\theta}{2})}{\sin(\frac\theta2)\cos(\frac{2k+1}2\theta)}.
    \end{align*}
    We need to show that $\sin(\frac{2k+3}2\theta),\cos(\frac{2k+1}2\theta)$ have different signs. We verify that
    \begin{align*}
        \frac{2k+1}{2}\theta \in((k+\frac{m-k}{2m+1})\pi,(k+\frac{m-k+1}{2m+3})\pi)&\implies\sgn(\cos(\frac{2k+1}2\theta))=(-1)^k,\\
        \frac{2k+3}{2}\theta \in((k+1+\frac{m-k-1}{2m+1})\pi, (k+1+\frac{m-k}{2m+3})\pi)&\implies\sgn(\sin(\frac{2k+3}{2}\theta))=(-1)^{k+1},
    \end{align*}
    proving the result.

    Finally, we prove Statement~\ref{p:seq:3}. We begin by showing $2\delta b(\delta, 2m)<1$. This condition rearranges to
    \begin{align*}
        \tan(\frac{2m+1}{2}\theta)\cot(\frac\theta2)>0,
    \end{align*}
    which, due to identical considerations as earlier, is satisfied.

    It now remains to compute the limit, with $\theta\to\frac{2m}{2m+1}\pi$, of
    \begin{align*}
        \delta b(\delta, 2m)&=\frac{-1}{2\cos\theta}\frac{\cos\theta\sin(m\theta)}{\sin(\frac{\theta}2)\cos(\frac{2m+1}2\theta)}=\frac{-\sin(m\theta)}{\sin(\frac{\theta}2)\cos(\frac{2m+1}2\theta)}.
    \end{align*}
    We have 
    \begin{align*}
        \cos\left(\frac{2m+1}2\theta\right)&\to \cos(m\pi)=(-1)^m,\\
        \sin(m\theta)&\to \sin\left(\left(m-\frac{m}{2m+1}\right)\pi\right)=(-1)^{m+1}\sin\left(\frac{m}{2m+1}\pi\right),\\
        \sin(\theta)&\to \sin\left(\frac{m}{2m+1}\pi\right),
    \end{align*}
    and so $2\delta b(\delta, 2m)\to 1$, finishing the proof.
    
\end{autoproof}
\subsection{Classification of stable equilibria}
\label{appendix:allowed_graphs}

\begin{autoproof}{thm:equilibria}
    Condition~\ref{p:eqilibria:1} implies that the equilibrium indeed exists and Conditions~~\ref{p:eqilibria:2} and~\ref{p:eqilibria:3} imply that all agents in $S$ are active and agents not in $S$ strictly inactive. 
    As $G_{S_i}$ are the connected components of $G_S$, we obtain $\lambda_{\min}(\bm G_S)=\min_i\lambda_{\min}(\bm G_{S_i})$ by using \cite[Proposition~1.3.6]{brouwer2012spectra}.
    Condition~\ref{p:eqilibria:1}, along with Theorem~\ref{th:stability}, now implies that the obtained equilibrium is stable.
   
    On the contrary, if Condition~\ref{p:eqilibria:1} is not met but the equilibrium with the given active set exists it is either unstable or has agents which are inactive but not strictly inactive by Theorem~\ref{th:stability}.
    The latter can only happen for finitely many $\delta$ \cite[Footnote~16]{Bra14}.

    Here, we can also note that if an equilibrium exists and it does not satisfy Condition~\ref{p:eqilibria:1} but does satisfy Condition~\ref{p:eqilibria:2} and Condition~\ref{p:eqilibria:3}, it only has strictly inactive agents and is therefore unstable, regardless of $\delta$.
\end{autoproof}

\begin{autoproof}{thm:paths}
    We use Theorem~\ref{thm:equilibria} to deduce the result for paths.
    Firstly, we assume the conditions are met and show that the configuration string induces a stable equilibrium. This holds for all values of $\delta$.
    We set $$S=\left\{\sum_{j=1}^m(a_j+1)+i: 1\leq i\leq a_{m+1}, 1\leq m\leq k\right\}.$$
    Using condition ~\ref{p:paths:4}, this is spans the entire path, meaning that every inactive vertex has exactly two active neighbors.

    The connected components induced by $S$ are exactly the active paths of lengths $a_i$. 
    We can now see that Condition~\ref{p:paths:1} in the assumptions implies Condition~\ref{p:eqilibria:1} of Theorem~\ref{thm:equilibria}.
    Condition~\ref{p:eqilibria:2} implies that the unique equilibria on the paths of lengths $a_i$ have all active agents, with all activity levels strictly positive, implying Condition~\ref{p:eqilibria:2} of Theorem~\ref{thm:equilibria}.
    Finally, Condition~\ref{p:paths:3} implies Condition~\ref{p:eqilibria:3} of Theorem~\ref{thm:equilibria}.
    The constructed active set -- which is described by the given configuration -- supports a stable equilibrium, finishing the proof.

    If the configuration describes a stable equilibrium, each of the components must be stable and have positive strategy profiles, implying Conditions~\ref{p:paths:1} and~\ref{p:paths:2}. 
     is only implied for almost any $\delta$, as 
    For almost any $\delta$, all equilibria have only active or strictly inactive agents \cite[Footnote~16]{Bra14}. Therefore, Condition~\ref{p:paths:3} is satisfied for almost any $\delta$, as it is implied by no non-strictly inactive agents between the blocks.
    
    If Condition~\ref{p:paths:4} did not hold, the configuration would not be well defined.
\end{autoproof}
\subsection{The reshuffle is possible}
\begin{autoproof}{thm:p5}
    Suppose the first two strategy changes occur at positions $2$ and $4$ in any order.
    This has probability of  $2/25$.
    We can now construct the fastest possible update order and show that it is slow enough to obtain the result.
    We need to find the smallest number of best-response updates such that the inactive agent in the middle becomes active again.
    For this to occur, the convergence of the boundary pairs must progress far enough.

    It makes no sense to update an agent twice as the update will be idempotent, hence we need to update agents on the side in an alternating order. 
    It is also ineffective to only update agent $1$ or $5$ without updating $2$ or $4$ respectively.
    This specifies the update order up to deciding how many updates are made on each side.

    If $2t$ steps are assigned to a side, we obtain the endpoint value of
    \begin{align*}
        \frac{1-\delta^{2t}}{1+\delta}+\delta^{2t}=\frac{1+\delta^{2t+1}}{1+\delta},
    \end{align*}
    obtained by inspecting pair dynamics with $a_0=b_0=0$.

    Hence, the activity levels with $2t$ steps assigned to one side and $2k$ to the other is, up to an affine transformation, $\delta^{2t}+\delta^{2k}$. 
    With the constraint of $s+t=const$, this is minimised when $k=t$.

    We need to achieve the inequality
    \begin{align*}
        2\delta\frac{1+\delta^{2t+1}}{1+\delta}<1,
    \end{align*}
    which rearranges to 
    \begin{align*}
        t>\frac{\ln(1-\delta) -\ln2 -2\ln\delta}{2\ln \delta}.
    \end{align*}
    Combining with $t=5T-2$ (accounting for the two initial changes), we obtain
    \begin{align*}
        \mathbb P\left(T>\frac{\ln(1-\delta) -\ln2}{10\ln \delta}\right)\geq2/25.
    \end{align*}
\end{autoproof}
\subsection{Large $\delta$ equilibria}
Finding families of graphs with a given smallest eigenvalue is a classic problem in spectral graph theory \cite{cameron1976line}.
This classification can be used to restrict what the stable equilibria's active sets consist of.
Here, we are interested in graphs satisfying $\lambda_{\min}(\bm G)>-\phi$.
We will classify them in terms of forbidden induced subgraphs.

\begin{lemma}
\label{lemma:interlace}
    Any graph $G$ satisfying $\lambda_{\min}(\bm G)>-\phi=-\frac{\sqrt5+1}{2}$ is $P_4-, C_4-$ and $K_{1,3}-$free.\footnote{For two graphs $H, G$, we say that $G$ is $H-$free if it does not contain $H$ as an induced subgraph. For $H$ to be an induced subgraph of $G$, there must exist an injection $f:V(H)\to V(G)$ preserving both the relationships of being adjacent and being non-adjacent.}\footnote{The graph $P_k$ is the path on $k$ vertices, the graph $C_k$ is the cycle on $k$ vertices, obtained by taking a path on $k$ vertices and adding the edge $(1,n)$ and $K_{l,m}$ is the complete bipartite graph with parts of sizes $l$ and $m$.}
\end{lemma}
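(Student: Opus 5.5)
The plan is to use Cauchy interlacing for induced subgraphs. If $H$ is an induced subgraph of $G$, then $\bm H$ is a principal submatrix of $\bm G$. Interlacing then gives $\lambda_{\min}(\bm G)\leq \lambda_{\min}(\bm H)$. So it suffices to check that each forbidden graph $H\in\{P_4, C_4, K_{1,3}\}$ satisfies $\lambda_{\min}(\bm H)\leq -\phi$. Then any $G$ containing $H$ as an induced subgraph has $\lambda_{\min}(\bm G)\leq-\phi$, which contradicts the hypothesis $\lambda_{\min}(\bm G)>-\phi$.

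The three eigenvalue computations are short.
\begin{itemize}
\item For $P_4$, the path formula $2\cos(\frac{k\pi}{n+1})$ quoted earlier gives $\lambda_{\min}(\bm G(4))=2\cos(\frac{4\pi}{5})=-\frac{1+\sqrt5}{2}=-\phi$. This is consistent with the block of length $4$ losing stability at $\delta=1/\phi$ in the path analysis.
\item For $C_4=K_{2,2}$, the complete bipartite formula $\lambda_{\min}=-\sqrt{ml}$ cited earlier gives $-2<-\phi$. Alternatively, the cycle eigenvalues $2\cos(2\pi k/4)$ include $-2$.
\item For $K_{1,3}$, the same formula gives $-\sqrt3\approx-1.732<-\phi\approx-1.618$.
\end{itemize}

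The only delicate point is that $P_4$ attains exactly $-\phi$, not something strictly below it. This is why the lemma's hypothesis must be the strict inequality $\lambda_{\min}(\bm G)>-\phi$: interlacing gives $\lambda_{\min}(\bm G)\leq-\phi$, which contradicts the strict inequality. I will state interlacing explicitly with the weak inequality so this boundary case is handled cleanly. Beyond that there is no real obstacle; the argument is a direct combination of interlacing with the explicit spectra recorded in the paper.
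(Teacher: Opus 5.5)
Your proposal is correct and is essentially the paper's proof: Cauchy interlacing applied to the principal submatrix of an induced subgraph gives $\lambda_{\min}(\bm G)\leq\lambda_{\min}(\bm H)$, and then $\lambda_{\min}(\bm P_4)=-\phi$, $\lambda_{\min}(\bm C_4)=-2$ and $\lambda_{\min}(\bm K_{1,3})=-\sqrt3$ each contradict $\lambda_{\min}(\bm G)>-\phi$. Your remark that the $P_4$ case needs the strict inequality in the hypothesis is accurate and matches how the paper's argument works.
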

\begin{proof}
    If $H$ is an induced subgraph of $G$, then $\bm H $ is a principal submatrix of $\bm G$. 
    By the Cauchy Interlacing Theorem, we have $\lambda_{\min}(\bm G)\leq \lambda_{\min}(\bm H)$.

    We have $\lambda_{\min}(\bm P_4) = -\phi, \lambda_{\min}(\bm C_4)=-2$ and $\lambda_{\min}(\bm K_{1,3}) = -\sqrt3$, and so none of these graphs may be an induced subgraph of $G$
\end{proof}

\begin{defn}
    A clique extension of $G$ is a graph obtained by replacing vertices of $G$ with cliques and adding, for each edge of $G$, all possible edges between the clliques replacing its endpoints.
\end{defn}
\begin{lemma}
\label{lemma:structure}
    Any $P_4-, C_4-$ and $K_{1,3}-$free graph is a disjoint union of clique extensions of $P_1$ or $P_3$.
\end{lemma}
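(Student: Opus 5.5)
It suffices to treat a connected graph $G$ that is $P_4$-, $C_4$- and $K_{1,3}$-free, since the statement is about each component separately. The plan is to show $G$ is a clique extension of $P_1$ or $P_3$. I would work with the equivalence relation of \emph{closed twins}: $u\sim v$ iff $N[u]=N[v]$, where $N[\cdot]$ is the closed neighbourhood. Each class is a clique, and between two classes there are either all edges or none. So $G$ is a clique extension of its quotient graph $Q$, obtained by contracting each class to a vertex. Any induced subgraph of $Q$ lifts to an induced subgraph of $G$ by choosing one representative per class. Hence $Q$ is also $P_4$-, $C_4$- and $K_{1,3}$-free, connected, and has no two distinct closed twins. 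It then remains to show that $Q$ is $P_1$ or $P_3$.

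To analyse $Q$, first use the forbidden $P_4$. A connected $P_4$-free graph on at least two vertices is a cograph, so its complement is disconnected. Thus $V(Q)=A\cup B$ with $A,B$ nonempty and every vertex of $A$ adjacent to every vertex of $B$. The forbidden $C_4$ then says that $A$ or $B$ is a clique: otherwise non-adjacent $a,a'\in A$ and non-adjacent $b,b'\in B$ would induce $C_4$. Say $A$ is a clique. Then every vertex of $A$ is universal in $Q$, so all of $A$ forms one closed-twin class, and by twin-freeness $|A|=1$. Write $A=\{w\}$ with $w$ universal.

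Now consider $Q-w$, which is still twin-free. Indeed, if $u,v\ne w$ were twins in $Q-w$, they would be twins in $Q$, since both are adjacent to $w$. Since $w$ is adjacent to everything, the forbidden $K_{1,3}$ says $Q-w$ has independence number at most $2$. If $Q-w$ is empty, $Q=P_1$. Otherwise $Q-w$ is again $P_4$- and $C_4$-free. If it is connected (and has more than one vertex), the previous paragraph gives it a universal vertex $w'$, and then $w,w'$ are closed twins in $Q$, a contradiction; if it is a single vertex, $Q=P_2$ has twins, again a contradiction. So $Q-w$ is disconnected. Independence number at most $2$ forces exactly two components, and each component must be a clique. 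Since $Q-w$ is twin-free, each clique is a single vertex. Hence $Q-w$ is two non-adjacent vertices, and $Q=P_3$.

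The step I expect to be the main obstacle is the appeal to the cograph fact that a connected $P_4$-free graph has a disconnected complement. I would either cite this (Seinsche) or prove it by a short induction: remove a vertex $v$ such that $G-v$ stays connected, apply the induction hypothesis to $G-v$, and check that $v$ cannot spoil the join decomposition without creating an induced $P_4$. Everything else is bookkeeping about twins.
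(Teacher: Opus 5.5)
Your proof is correct, and it takes a genuinely different route from the paper.

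The paper argues directly on $G$. It fixes an induced $P_3$ $v_1v_2v_3$ and splits on whether $v_1,v_3$ lie in different components of $G\setminus\{v_2\}$ or in the same one. In the first case, each component is shown, using $P_4$ and $K_{1,3}$, to be a clique completely joined to $v_2$. In the second case, every other vertex is forced to be adjacent to $v_2$ and to $v_1$ or $v_3$. The resulting classes $V_1,V_2,V_3$ are then checked pair by pair to be cliques with the adjacency pattern of $P_3$.

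You instead quotient by closed twins first. This disposes of all the blow-up bookkeeping at once, whereas the paper has to verify by hand that each $V_i$ is a clique and that the joins are complete. You then run the standard structure argument for $\{P_4,C_4\}$-free graphs:
\begin{itemize}
\item $P_4$-freeness gives a join;
\item $C_4$-freeness makes one side a clique and hence yields a universal vertex $w$;
\item $K_{1,3}$-freeness gives $\alpha(Q-w)\le 2$;
\item twin-freeness collapses what is left to $P_1$ or $P_3$.
\end{itemize}

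What your version buys is that the role of each forbidden subgraph is transparent. It also avoids the paper's informal steps (``iterating this argument'', ``repeat the previous reasoning'' for a further component). The cost is importing Seinsche's theorem, whereas the paper's case analysis is fully elementary.

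Your inductive sketch of Seinsche's theorem does go through. Suppose $\overline{G}$ were connected. Then $v$ would need a non-neighbour in every component of $\overline{G-v}$. So inside the component containing a neighbour of $v$ there is a non-edge $ab$ of $G$ with $a\not\sim v\sim b$. A non-neighbour $y$ of $v$ in another component is adjacent to both $a$ and $b$, so $v\,b\,y\,a$ is an induced $P_4$.

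Alternatively, you can bypass cographs entirely. In a connected $\{P_4,C_4\}$-free graph, a vertex $w$ of maximum degree is universal. Otherwise take $u\not\sim w$ with a common neighbour $z$. The bound $\deg z\le\deg w$ forces some $x\in N(w)\setminus N[z]$, and then $\{x,w,z,u\}$ induces $P_4$ or $C_4$.
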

\begin{proof}
    We may assume that the graph is connected by considering the claim separately for each connected component.
    We may also assume $|V(G)|\geq4$ as all smaller graphs satisfy the theorem.
    If the graph is $P_3-$free, it is a clique and therefore a clique extension of $P_1$.
    We now assume it is not $P_3-$free.

    Let $v_1, v_2, v_3$ be the vertices of the graph $P_3$ viewed as an induced subgraph of $G$, in order.
    If $G\setminus\{v_2\}$ has at least $3$ connected components, $G$ must have the induced subgraph $K_{1,3}$, hence it has at most $2$.

    If $v_1, v_3$ lay in distinct components of $G\setminus\{v_2\}$, and one of them, suppose $v_1$, has a neighbor $v_4\not=v_2$, we must have $v_4\sim v_2$ as otherwise $G$ would have an induced subgraph isomorphic to $P_4$. Iterating this argument, all neighbors of $v_1$ need to be connected to $v_2$ and consequently all vertices in this connected component must also be connected to $v_2$. Suppose now there exists a vertex in the component of $v_1$ which is not a neighbor of $v_1$, call it $v_0$. This makes $v_0,v_1,v_2,v_3$ a copy of $K_{1,3}$ with the center in $v_2$.
    Hence, the component of $v_1$ is a clique. By repeating the reasoning for the other component, we obtain that the graph is a clique extension of $P_3$.

    Now assume $v_1$ and $v_3$ lay in the same component. If this is not the only connected component of $G\setminus\{v_2\}$, we may repeat the previous reasoning and obtain a clique extension of $P_3$ again.
    Hence, we assume it is.
    Suppose now there is a vertex $u_1$ neighboring none of the vertices $v_1, v_2, v_3$. 
    The shortest path from $v_2$ to $u_1$ must have exactly $3$ vertices (otherwise it has $2$ vertices, meaning $u_1$ and $v_2$ are connected, or $m\geq 4$ vertices, implying $G$ includes a $P_4$ as an induced subgraph). 
    Denote the middle vertex by $u_2$. It must be adjecent to $v_1$ and $v_3$ as otherwise we could extend the path between $u_1$ and $v_2$ to $P_4$. 
    But now $\{u_2, v_1, v_3, u_1\}$ form a copy of $K_{1,3}$, which is not allowed.
    This means that each vertex $u_1$ must be connected to at least one vertex among $v_1, v_2$ and $v_3$.

    If it is connected only to $v_2$, we obtain $K_{1,3}$ as an induced subgraph of $G$.
    If it is connected only to $v_1$ or only to $v_3$, we obtain $P_4$.
    If it is connected to $v_1, v_3$ but not $v_2$, we obtain $C_4$.
    Hence, it must be connected to $v_2$ and at least one of $v_1, v_3$.
    Let us denote the set of vertices connected to $v_2$ and $v_1$ as $V_1$, to $v_2$ and $v_3$ as $V_3$ and those connected to all three as $V_2$.
    We now show that this is a clique extension of $P_3$.
    Indeed, any two vertices within $V_1$ (resp. $V_3$) must be neighbors, as otherwise they would make up a $K_{1,3}$ with $v_2$ and $ v_3$ (resp. $v_2$ and $v_1$).
    Any two vertices within $V_2$ must be connected, as otherwise they would make a copy of $C_4$ with $v_1$ and $ v_3$.
    Each vertex $v_1'\not=v_1$ from $V_1$ must be connected to each vertex from $V_2$, as otherwise the not connected vertices would make a copy of $P_4$ with $v_1$ and $ v_3$ (as $v_1$ is connected to $V_2$ due to how $V_2$ was constructed).
    Finally, for any $v_1'\in V_1\setminus\{v_1\}, v_3'\in V_3\setminus\{v_3\}$, the vertices $v_1', v_3'$ cannot be connected as otherwise they would form a path $P_4$ together with $v_1, v_3$.
    This exhausts all possibilities and finishes the proof.
    
\end{proof}
Not all clique extensions of $P_3$ satisfy the eigenvalue condition and so possibly more graphs could be excluded.
However, none of them support valid equilibria for large enough values of $\delta$ and so further restriction is not necessary.
\begin{lemma}
\label{lemma:nop3active}
    Let $G$ be a clique extension of $P_3$. For $\delta\in(1/2, 1)$, there is no equilibrium with all agents active.
\end{lemma}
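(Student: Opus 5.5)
The plan is to reduce the all-active equilibrium conditions to three scalar equations, one per clique, and then derive a contradiction from two inequalities that must hold simultaneously when $\delta>1/2$.

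\emph{Setup.} Write the clique extension of $P_3$ as three cliques $A,B,C$ of sizes $a,b,c\geq 1$. Every vertex of $B$ is adjacent to every vertex of $A\cup C$, and there are no edges between $A$ and $C$. Suppose, for contradiction, that $\bm x$ is an equilibrium with all agents active. Then every agent satisfies the interior best-response equation $x_i+\delta\sum_j g_{ij}x_j=1$.

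\emph{Step 1: values are constant on each clique.} Take two agents $i,j$ in the same clique. They have identical neighbourhoods outside $\{i,j\}$ and are adjacent to each other. Subtracting their equations gives $(1-\delta)(x_i-x_j)=0$. Since $\delta<1$, this forces $x_i=x_j$. So the profile takes constant values $\alpha,\beta,\gamma>0$ on $A,B,C$.

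\emph{Step 2: the reduced system.} Using $x_i+\delta(a-1)\alpha=(1-\delta)\alpha+\delta a\alpha$ for $i\in A$, and similarly for $B$ and $C$, the equations become
\begin{align*}
(1-\delta)\alpha+\delta(a\alpha+b\beta)&=1,\\
(1-\delta)\beta+\delta(a\alpha+b\beta+c\gamma)&=1,\\
(1-\delta)\gamma+\delta(b\beta+c\gamma)&=1.
\end{align*}

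\emph{Step 3: two inequalities.} Subtract the first equation from the second. This gives $\delta c\gamma=(1-\delta)(\alpha-\beta)$. Subtracting the third from the second gives $\delta a\alpha=(1-\delta)(\gamma-\beta)$. Because all values are positive, the left-hand sides are positive, and therefore
\begin{align*}
\delta c\gamma<(1-\delta)\alpha,\qquad \delta a\alpha<(1-\delta)\gamma.
\end{align*}

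\emph{Step 4: contradiction.} Both sides of each inequality are positive, so we may multiply them and cancel $\alpha\gamma>0$. This yields $\delta^2ac<(1-\delta)^2$. But $ac\geq 1$, and $\delta>1/2$ gives $\delta>1-\delta>0$, hence $\delta^2 ac\geq\delta^2>(1-\delta)^2$. This is a contradiction, so no all-active equilibrium exists.

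\emph{Main obstacle.} The step needing the most care is Step 1: replacing possibly non-symmetric activity levels by clique-constant values. It requires all agents to be active, so that every best-response equation holds with equality, and it requires $\delta<1$. Both are available under the hypotheses, so the rest of the argument is elementary algebra.
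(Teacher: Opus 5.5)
Your proof is correct and takes essentially the paper's route: reduce to clique-constant values, subtract equations to eliminate the constant term, and reach a sign contradiction. The paper simply averages your two difference equations into $(1-\delta)\beta=\frac{\alpha}{2}\bigl[1-(a+1)\delta\bigr]+\frac{\gamma}{2}\bigl[1-(c+1)\delta\bigr]$ and notes that the right-hand side is negative for $\delta>1/2$. Your Step 1 makes rigorous what the paper only asserts ``by symmetry''. Your multiplication step is also slightly sharper: it rules out all-active equilibria once $\delta\ge 1/(1+\sqrt{ac})$, whereas reading off the sign of the paper's identity directly needs $\delta\ge 1/(1+\min(a,c))$.
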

\begin{proof}
    Suppose the opposite.
    Let the clique sizes be $n_1, n_2, n_3\geq1$ and let $x_1, x_2, x_3\in [0,1]$ be the activity levels. By symmetry, the activity level must be constant for all vertices belonging to the same clique as they have the same neighborhoods.
    We must have
    \begin{align*}
    \begin{cases}
        x_1+\delta\left[(n_1-1)x_1+n_2x_2\right]= 1,\\
        x_2+\delta\left[(n_2-1)x_2+n_1x_1+n_3x_3\right]=1,\\
        x_3+\delta\left[(n_3-1)x_3+n_2x_2\right]=1.\\
    \end{cases}
    \end{align*}
    Subtracting half of row 1 and half of row 3 from row 2 yields
    \begin{align*}
        (1-\delta)x_2=x_1[1/2-(n_1+1)\delta/2]+x_3[1/2-(n_3+1)\delta/2].
    \end{align*}
    Left side of the equation is positive and the right side negative, giving a contradiction.
\end{proof}

Finally, we can combine the results to fully classify the equilibria for large values of $\delta$

\begin{autoproof}{thm:large_delta_equils}
    Let $S$ be an active set of a stable equilibrium.
    Let $S_i$ be a maximal connected component of this set.
    The equilibrium must be stable when restricted to $S_i$. With this restriction it has only active agents and so by Theorem~\ref{th:stability} $\lambda_{\min}(S_i)\leq-1/\delta<-\phi$.
    By Lemma~\ref{lemma:interlace}, we know that $S_i$ is $P_4, C_4, K_{1,4}$-free and so by Lemma~\ref{lemma:structure} it must be a clique extension of $P_1$ or $P_3$.
    By Lemma~\ref{lemma:nop3active} it is not a clique extension of $P_3$, meaning it is a clique extension of $P_1$, which is simply a clique.

    For each inactive agent, the sum of activity levels in their neighborhood must be at least $1$.
    An agent in an active clique of size $k$ has activity level $(1+k\delta)^{-1}$ at equilibrium. Hence, an inactive agent $i$ with $n_k^i$ neighbors in active cliques of size $k$ is receiving exactly the external activity levels $\sum_k \frac{n^i_k}{1+(k-1)\delta}$.
    For the agent to be inactive, this quantity must be at least $1$, completing the result.
\end{autoproof}

\newpage
\section{A closer look at the convergence stages}
\label{appendix:convergence_stages}
When showcasing the dynamics plots, we showcase the total time taken by both of the phases.
If one of the phases took much longer than the other, it would make it difficult to detect when the other phase is slow.
We can choose how long the asymptotic phase is by changing the value of $\epsilon$ in the stopping criterion.
If it is too small, the asymptotic phase will be much longer than the shuffle phase and the time taken by the shuffle phase will not be easily detectable.
If we choose a value which is too large, the criterion will stop the process too soon and the asymptotic phase will not be showcased at all.
In Figure~\ref{fig:lastSchange}, we compare the time taken to establish the final active set to the total time for $\epsilon=0.0001$.
The plots showcasing total time also showcase the features of the other plots.
Consequently, this value of $\epsilon$ allows us to capture features of both the asymptotic and shuffle phases in a single plot.
The final active set was established in the majority of the cases, meaning that the value of $\epsilon$ is small enough.

\begin{figure}[h!]
    \centering
    \begin{subfigure}[t]{0.5\textwidth}
    \includegraphics[width=0.5\linewidth]{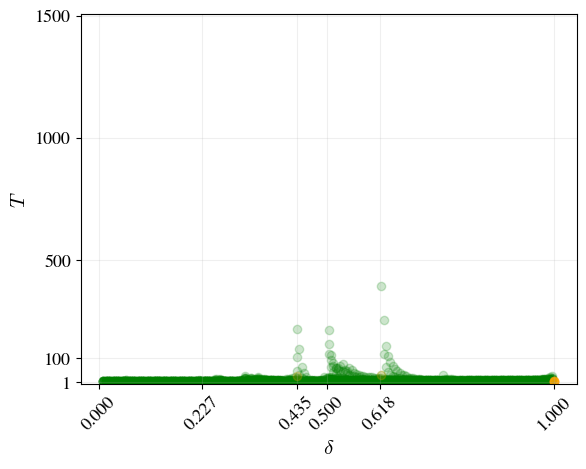}%
    \includegraphics[width=0.5\linewidth]{pictures/random_graphs/ba/ba1_100.png}
        \caption{BA(1,100) sample}
    \end{subfigure}%
    \begin{subfigure}[t]{0.5\textwidth}
    \includegraphics[width=0.5\linewidth]{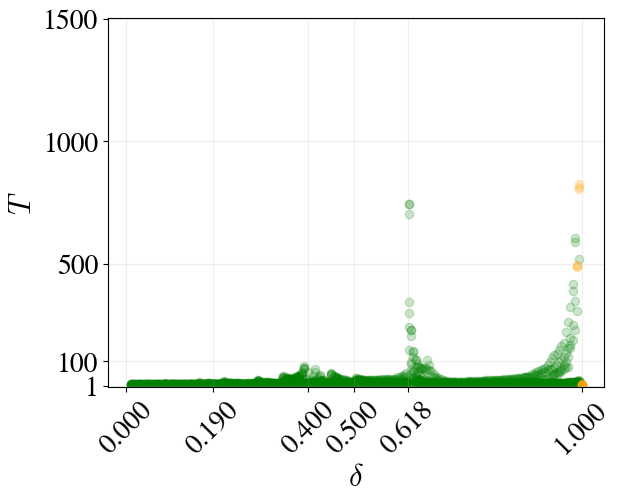}%
    \includegraphics[width=0.5\linewidth]{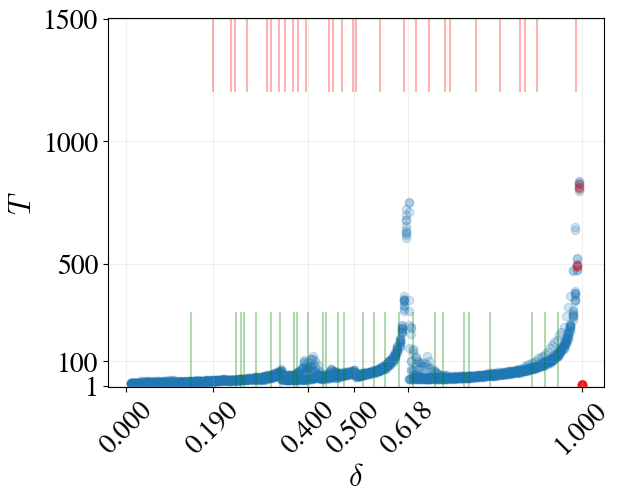}
        \caption{BA(2,100) sample}
    \end{subfigure}\\
    \begin{subfigure}[t]{0.5\textwidth}
    \includegraphics[width=0.5\linewidth]{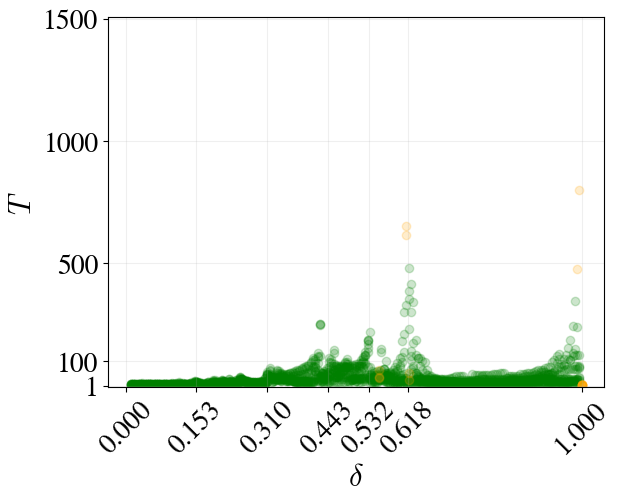}%
    \includegraphics[width=0.5\linewidth]{pictures/random_graphs/ba/ba5_100.png}
        \caption{BA(5,100) sample}
    \end{subfigure}%
    \begin{subfigure}[t]{0.5\textwidth}
    \includegraphics[width=0.5\linewidth]{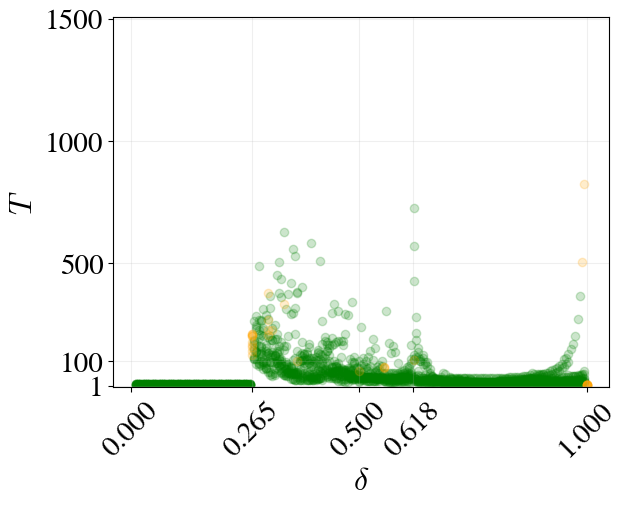}%
    \includegraphics[width=0.5\linewidth]{pictures/random_graphs/rr/rr5_100.png}
        \caption{RR(5,100) sample}
    \end{subfigure}\\
    
    \begin{subfigure}[t]{0.5\textwidth}
    \includegraphics[width=0.5\linewidth]{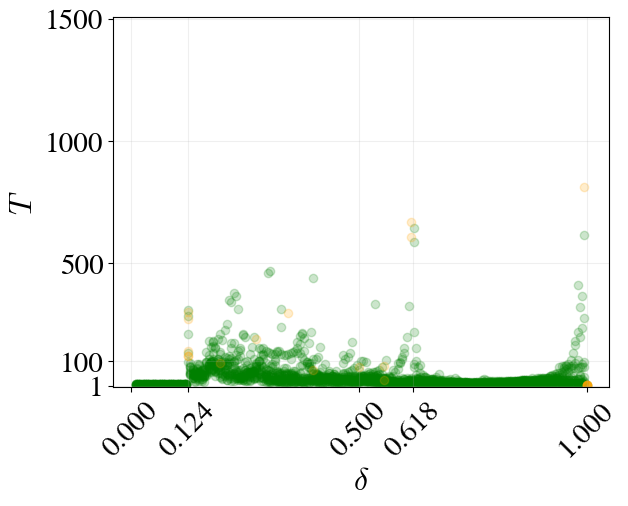}%
    \includegraphics[width=0.5\linewidth]{pictures/random_graphs/rr/rr20_100.png}
        \caption{RR(20,100) sample}
    \end{subfigure}%
    \begin{subfigure}[t]{0.5\textwidth}
    \includegraphics[width=0.5\linewidth]{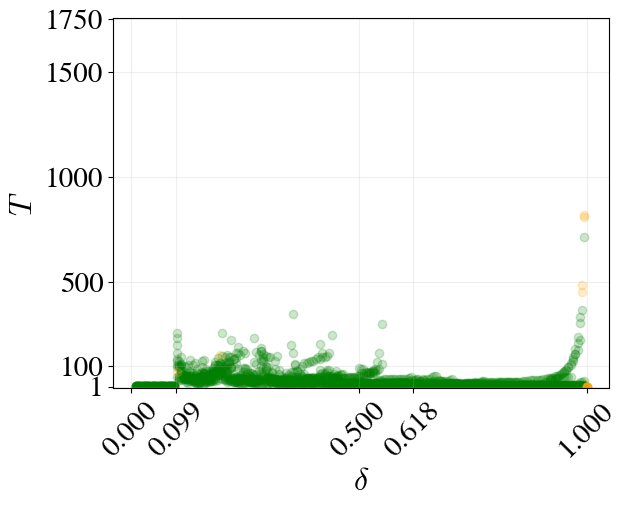}%
    \includegraphics[width=0.5\linewidth]{pictures/random_graphs/rr/rr40_100.png}
        \caption{RR(40,100) sample}
    \end{subfigure}\\

    \begin{subfigure}[t]{0.5\textwidth}
    \includegraphics[width=0.5\linewidth]{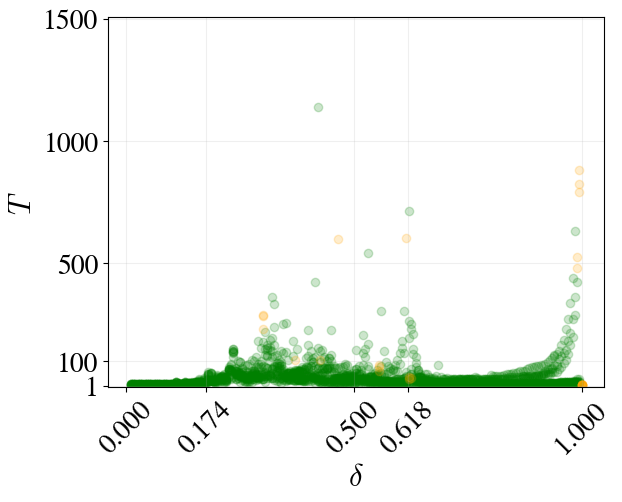}%
    \includegraphics[width=0.5\linewidth]{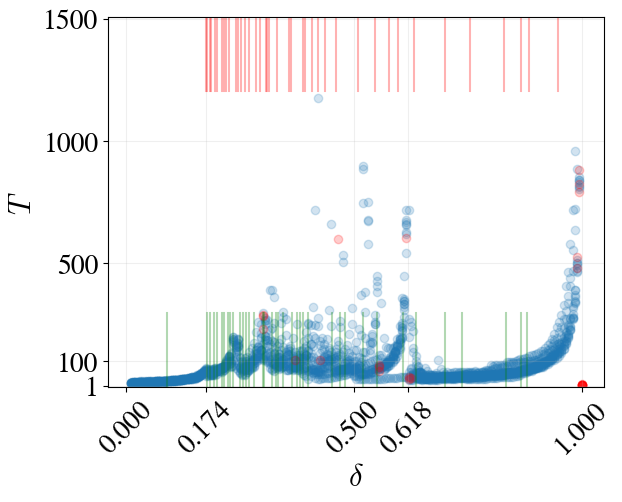}
        \caption{ER(0.1,100) sample}
    \end{subfigure}%
    \begin{subfigure}[t]{0.5\textwidth}
    \includegraphics[width=0.5\linewidth]{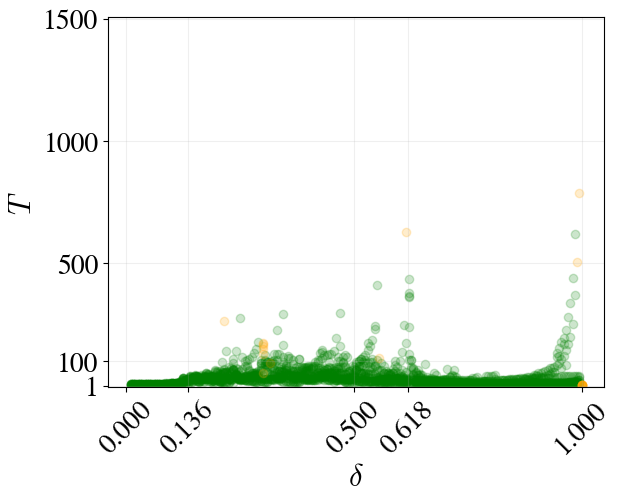}%
    \includegraphics[width=0.5\linewidth]{pictures/random_graphs/er/er02_100.png}
        \caption{ER(0.2,100) sample}
    \end{subfigure}\\

    \caption{The plots on the left side of each subfigure show the number of rounds taken until the final active set change. Green points correspond to attempts which reached an active set of a stable equilibrium, the yellow ones to those which did not. The plots on the right show the total time.}
    \label{fig:lastSchange}
\end{figure}

\newpage
\section{The impact of parity of path length in large paths.}
\label{appendix:parity}
Here, we showcase multiple trajectories for different values of $\delta$ and values of $n$ of different parity (c.f. Table~\ref{tab:parity_figures}).
The differences between the graphs of different parities exist, as graphs of different parities converge to a different equilibrium near $\delta=\frac12$. The length of the interval within which the difference is visible decreases as $n$ increases.

    \begin{table}[h!]
        \centering
        \begin{tabular}{cM{37mm}M{37mm}M{37mm}M{37mm}}
           \toprule
            $\delta$ & $n=100$ & $n=101$ & $n=300$ & $n=301$ \\
            \midrule
            $0.499$ & \includegraphics[width=0.23\textwidth]{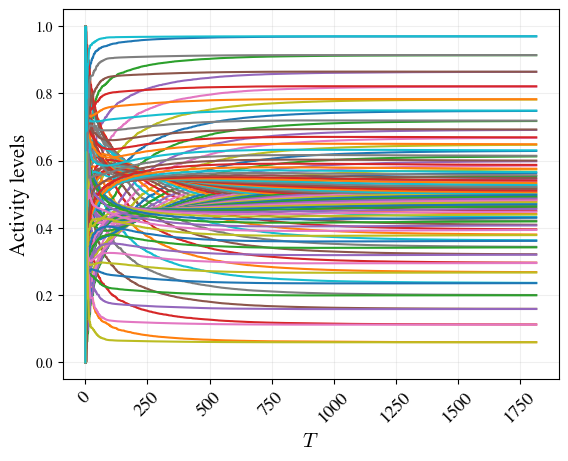} & \includegraphics[width=0.23\textwidth]{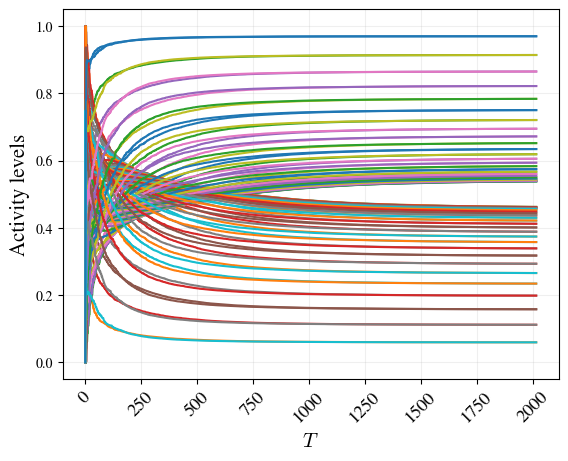} & \includegraphics[width=0.23\textwidth]{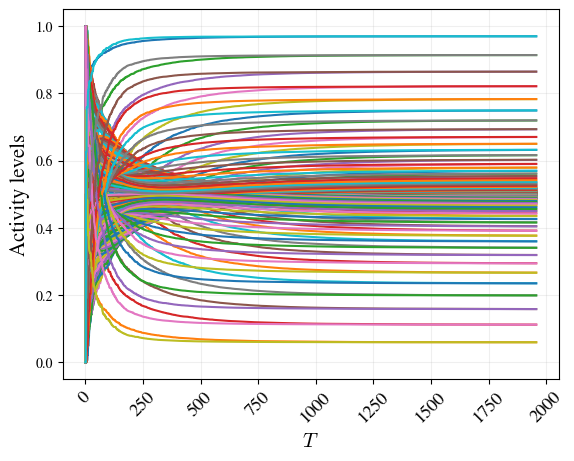} & \includegraphics[width=0.23\textwidth]{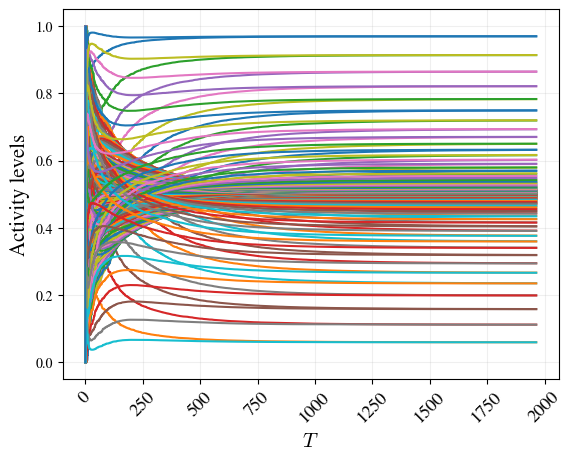} \\
            $0.4999$ & \includegraphics[width=0.23\textwidth]{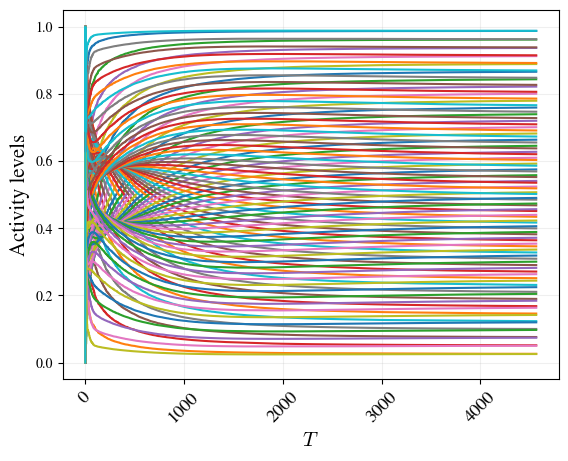} & \includegraphics[width=0.23\textwidth]{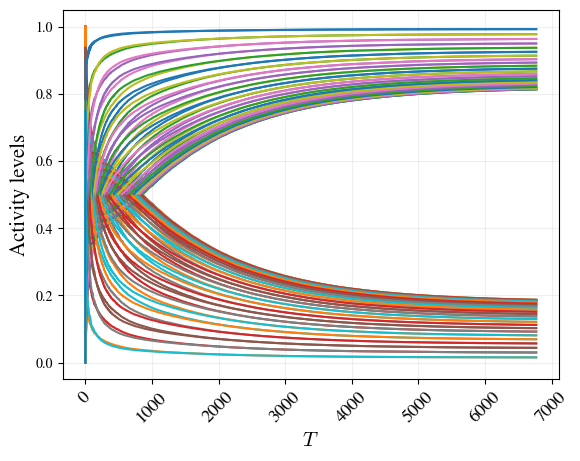} & \includegraphics[width=0.23\textwidth]{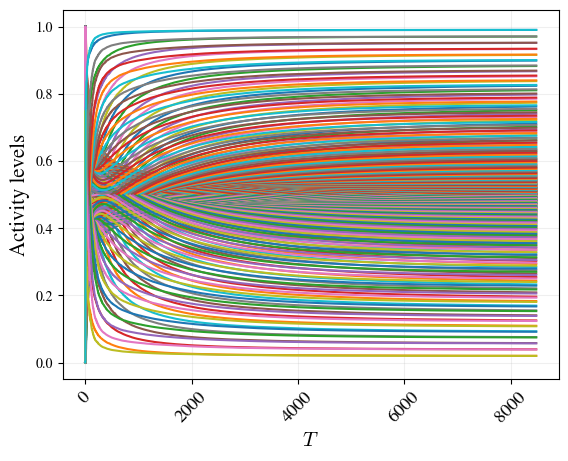} & \includegraphics[width=0.23\textwidth]{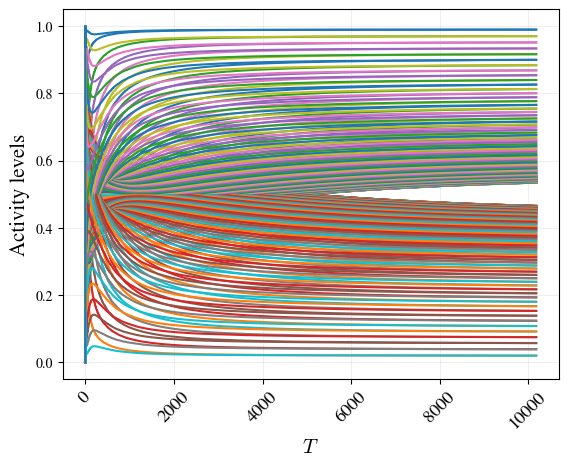} \\
            $0.5$ & \includegraphics[width=0.23\textwidth]{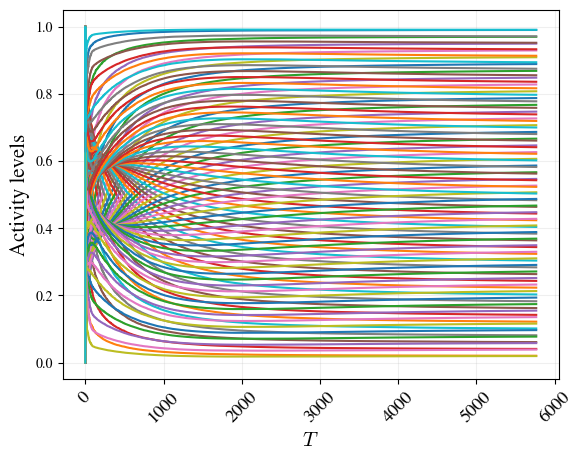} & \includegraphics[width=0.23\textwidth]{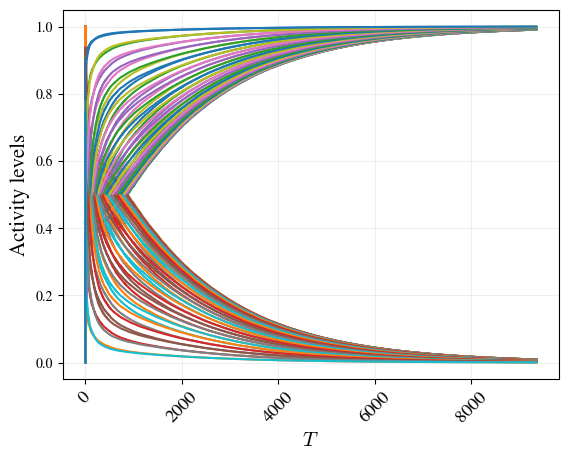} & \includegraphics[width=0.23\textwidth]{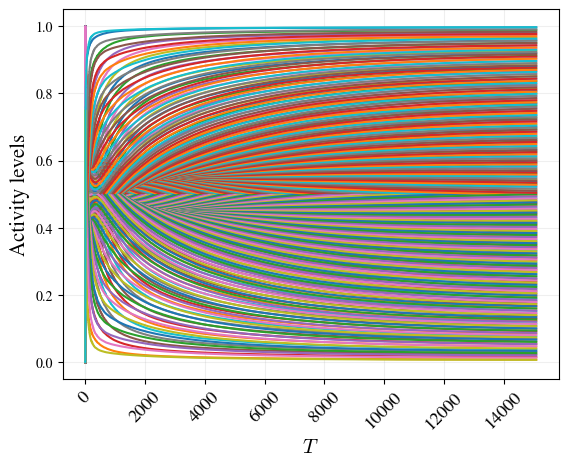} & \includegraphics[width=0.23\textwidth]{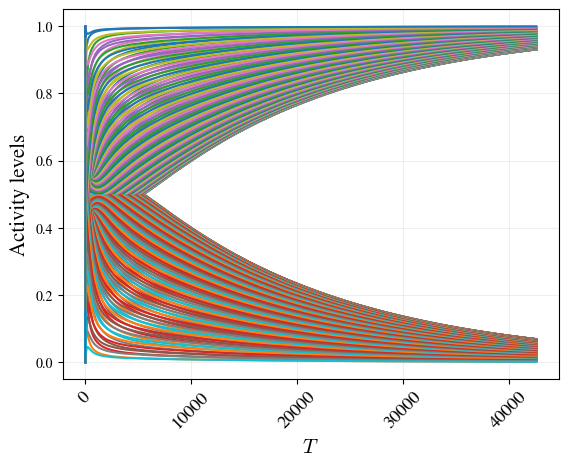} \\
            $0.502$ & \includegraphics[width=0.23\textwidth]{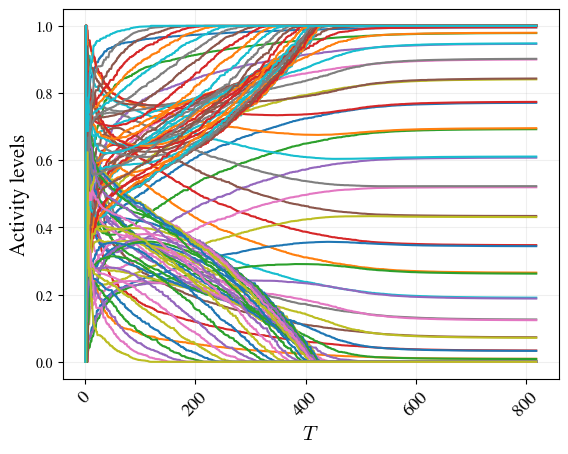} & \includegraphics[width=0.23\textwidth]{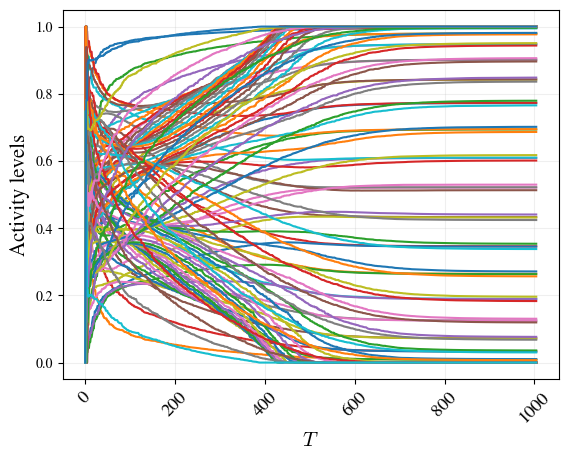} & \includegraphics[width=0.23\textwidth]{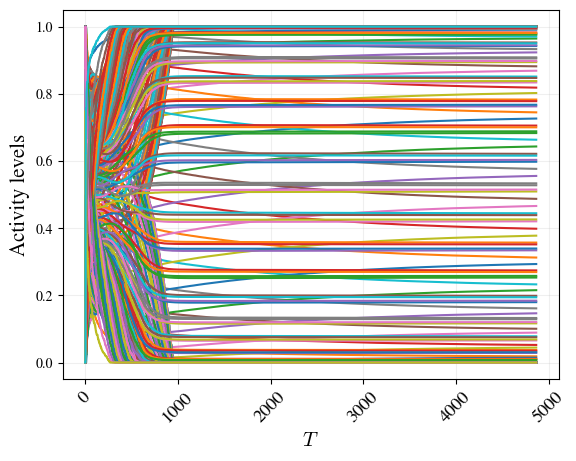} & \includegraphics[width=0.23\textwidth]{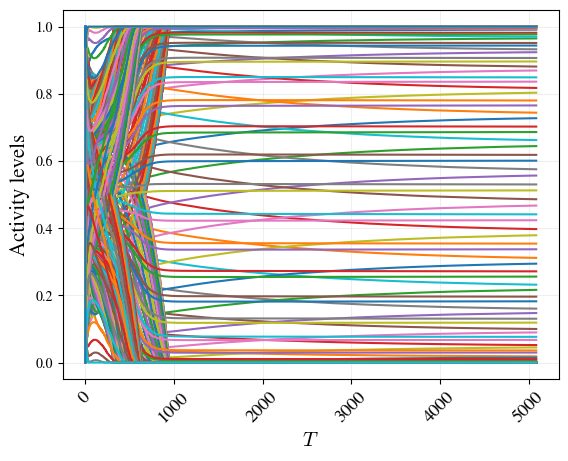} \\
            $0.51$ & \includegraphics[width=0.23\textwidth]{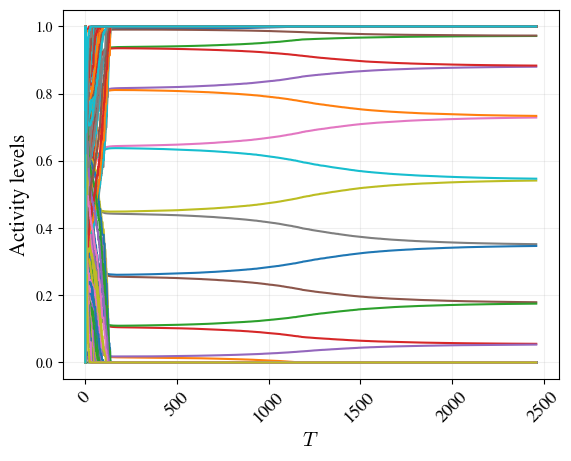} & \includegraphics[width=0.23\textwidth]{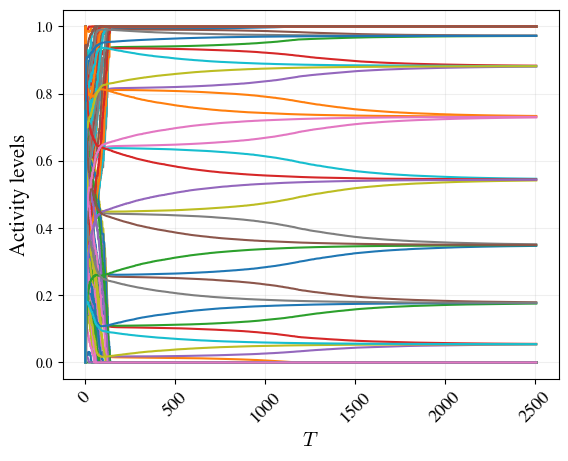} & \includegraphics[width=0.23\textwidth]{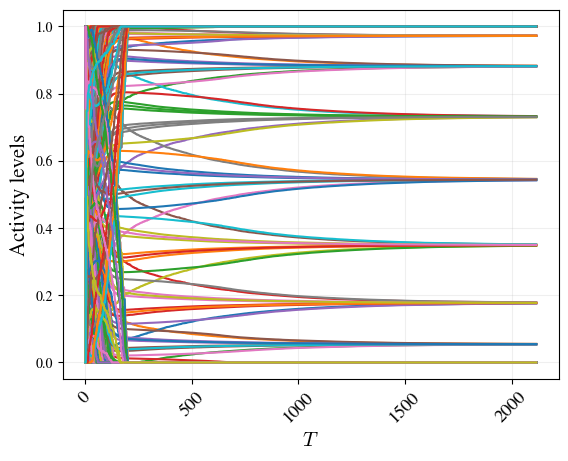} & \includegraphics[width=0.23\textwidth]{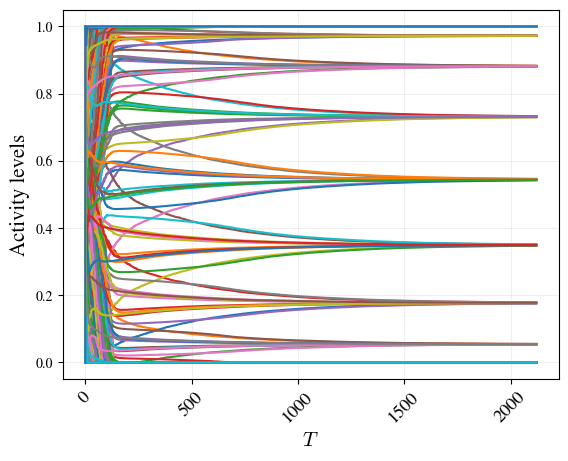} \\
            $$0.52$$ & \includegraphics[width=0.23\textwidth]{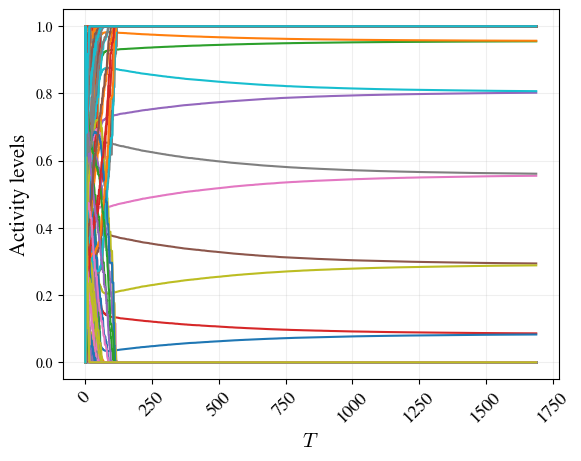} & \includegraphics[width=0.23\textwidth]{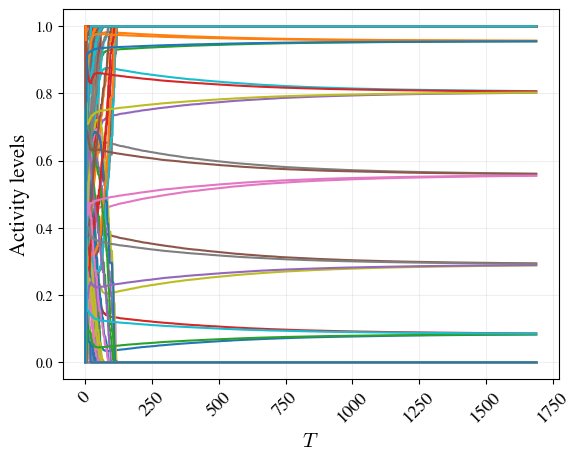} & \includegraphics[width=0.23\textwidth]{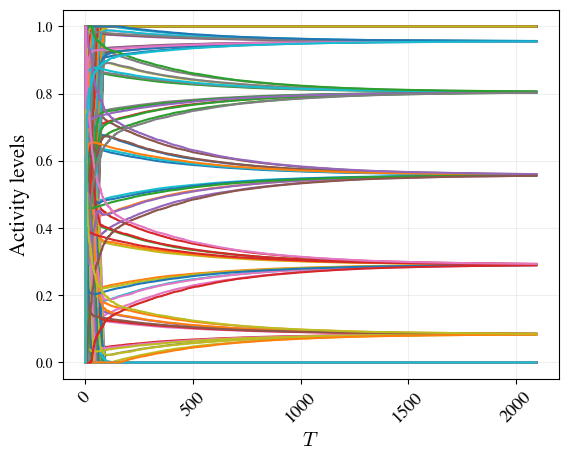} & \includegraphics[width=0.23\textwidth]{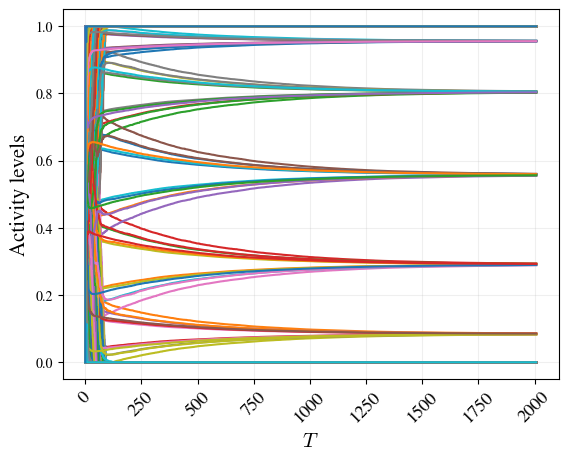} \\
            \bottomrule
        \end{tabular}
        \caption{Sample trajectories for different values of $n$ and $\delta$. All trajectories use the same value of $\epsilon=10^{-5}$}
        \label{tab:parity_figures}
    \end{table}
Furthermore, we include the full version of the convergence plot for the path on $n=101$ vertices in Figure~\ref{fig:101tall}. Note the significant increase of convergence time at $\delta=1/2$. This may be also observed in the trajectories in Table~\ref{tab:parity_figures}.
\begin{figure}[htb]
    \centering
    \includegraphics[width=0.5\linewidth]{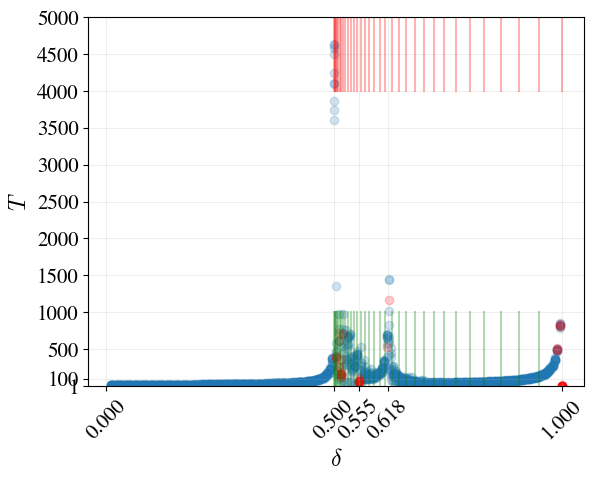}
    \caption{The convergence times $T$ for the path on $n=101$ vertices.}
    \label{fig:101tall}
\end{figure}

\newpage
\section{Additional considerations about random graphs.}
\label{appendix:random_trajs}

\begin{figure}[h]
\centering
\begin{minipage}{.3\textwidth}
\centering
\begin{subfigure}{\textwidth}
    \includegraphics[width=\linewidth]{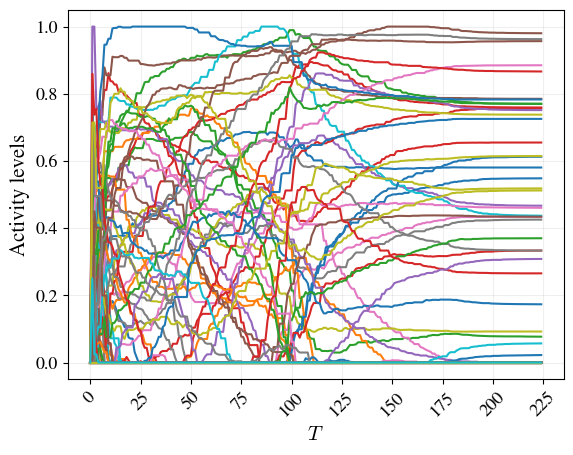}
    \caption{$\delta=0.25$}
\end{subfigure}\\
\begin{subfigure}{\textwidth}
    \includegraphics[width=\linewidth]{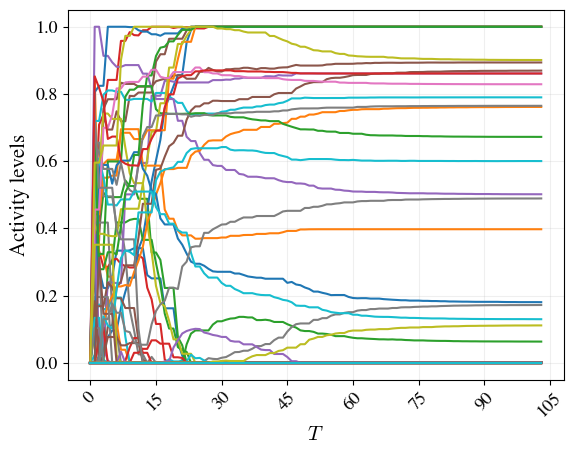}
    \caption{$\delta=0.35$}
\end{subfigure}\\
\begin{subfigure}{\textwidth}
    \includegraphics[width=\linewidth]{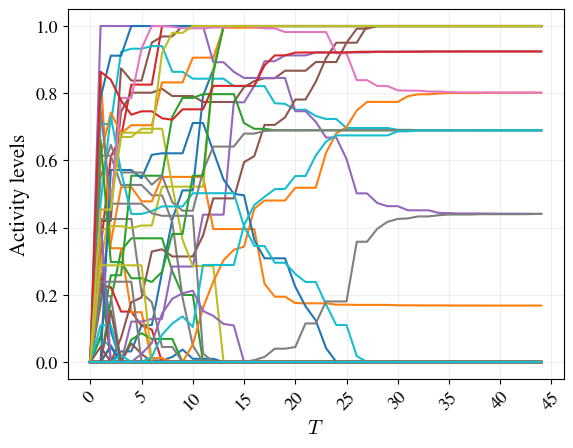}
    \caption{$\delta=0.45$}
\end{subfigure}\\
\begin{subfigure}{\textwidth}
    \includegraphics[width=\linewidth]{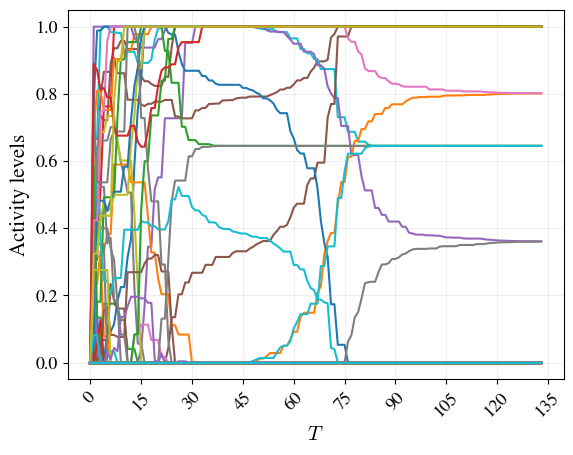}
    \caption{$\delta=0.55$}
\end{subfigure}
    \caption{Sample trajectories for a RR graph sample with $d=20$ and $n=100$ for different values of $\delta.$}
    \label{fig:rrtrajs}
\end{minipage}%
\hfill
\begin{minipage}{.3\textwidth}
\centering
\begin{subfigure}{\textwidth}
    \includegraphics[width=\linewidth]{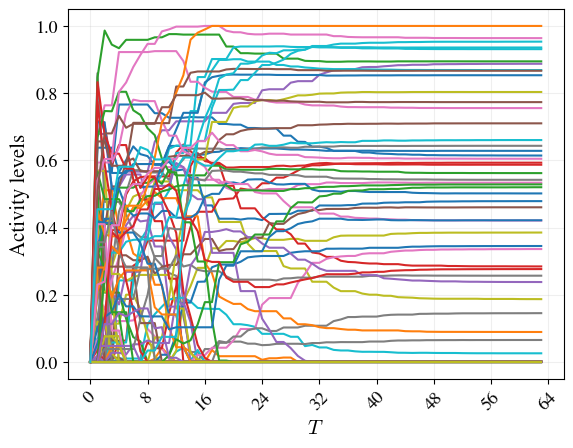}
    \caption{$\delta=0.25$}
\end{subfigure}\\
\begin{subfigure}{\textwidth}
    \includegraphics[width=\linewidth]{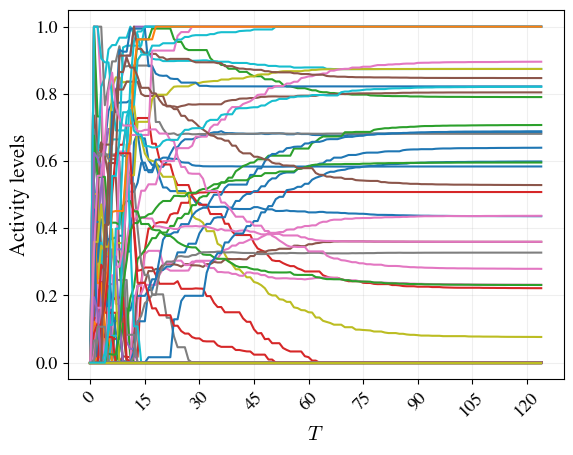}
    \caption{$\delta=0.35$}
\end{subfigure}\\
\begin{subfigure}{\textwidth}
    \includegraphics[width=\linewidth]{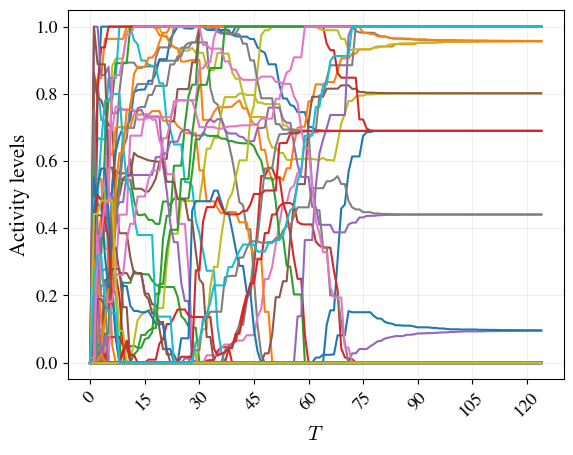}
    \caption{$\delta=0.45$}
\end{subfigure}\\
\begin{subfigure}{\textwidth}
    \includegraphics[width=\linewidth]{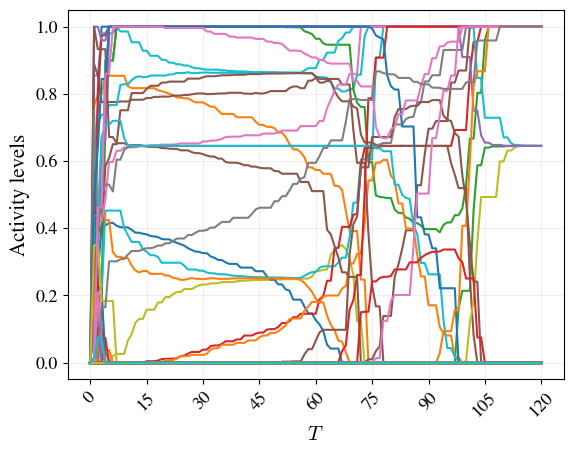}
    \caption{$\delta=0.55$}
\end{subfigure}
    \caption{Sample trajectories for an ER graph sample with $p=0.2$ and $n=100$ for different values of $\delta.$}
    \label{fig:ertrajs}
\end{minipage}%
\hfill
\begin{minipage}{.3\textwidth}
\centering
\begin{subfigure}{\textwidth}
    \includegraphics[width=\linewidth]{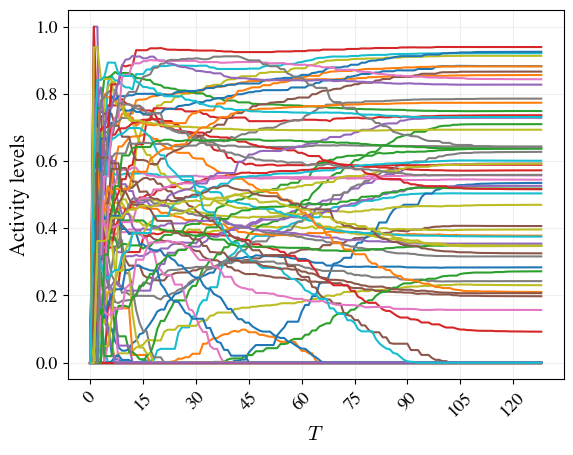}
    \caption{$\delta=0.25$}
\end{subfigure}\\
\begin{subfigure}{\textwidth}
    \includegraphics[width=\linewidth]{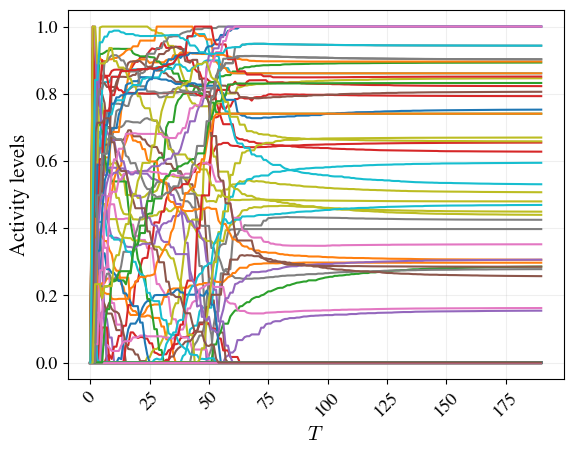}
    \caption{$\delta=0.35$}
\end{subfigure}\\
\begin{subfigure}{\textwidth}
    \includegraphics[width=\linewidth]{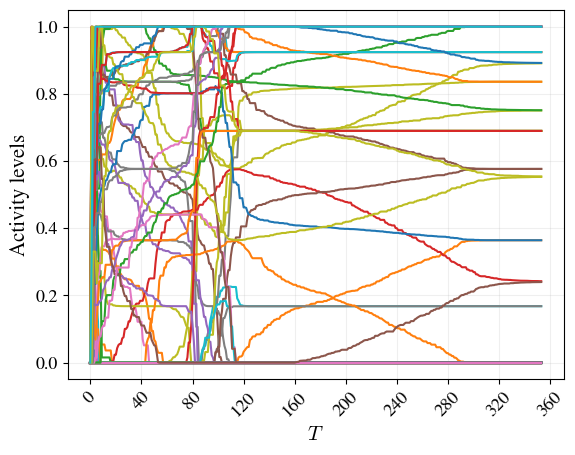}
    \caption{$\delta=0.45$}
\end{subfigure}\\
\begin{subfigure}{\textwidth}
    \includegraphics[width=\linewidth]{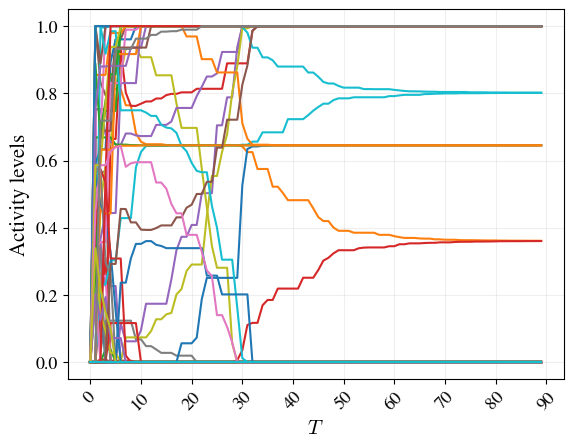}
    \caption{$\delta=0.55$}
\end{subfigure}
    \caption{Sample trajectories for a BA graph sample with $m=10$ and $n=100$ for different values of $\delta.$}
    \label{fig:batrajs}
\end{minipage}
\end{figure}

All of the considered random graphs show varied behavior within their trajectories (c.f. Figures~\ref{fig:rrtrajs},~\ref{fig:ertrajs} and~\ref{fig:batrajs}). While the individual trajectories are different and often difficult to interpret, we can focus on the similarities between the graphs.
As the value of $\delta$ increases, the individual trajectories of the dynamics become more structured, with fewer players changing strategies and less varied activity levels within the stable equilibria reached by the process.
Furthermore, the active sets supporting these stable equilibria decrease in size, the number of edges and contain an increasing number of agents who have no active neighbors (c.f. Figure~\ref{fig:rractives}).

\begin{figure}[h]
\centering
\begin{subfigure}{0.24\textwidth}
    \includegraphics[width=\linewidth]{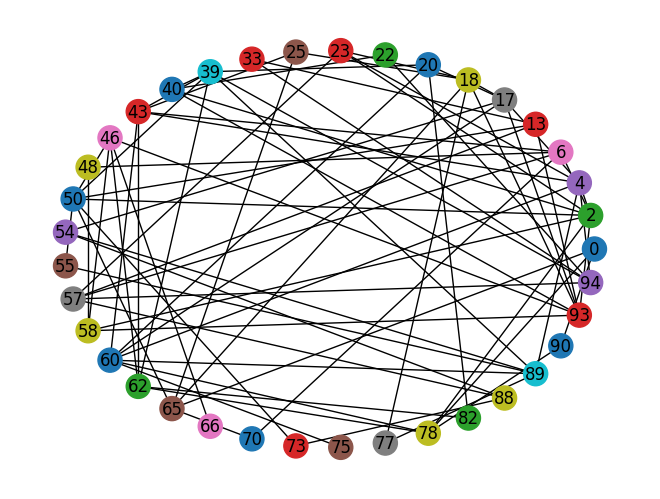}
    \caption{$\delta=0.25$}
\end{subfigure}%
\hfill
\begin{subfigure}{0.24\textwidth}
    \includegraphics[width=\linewidth]{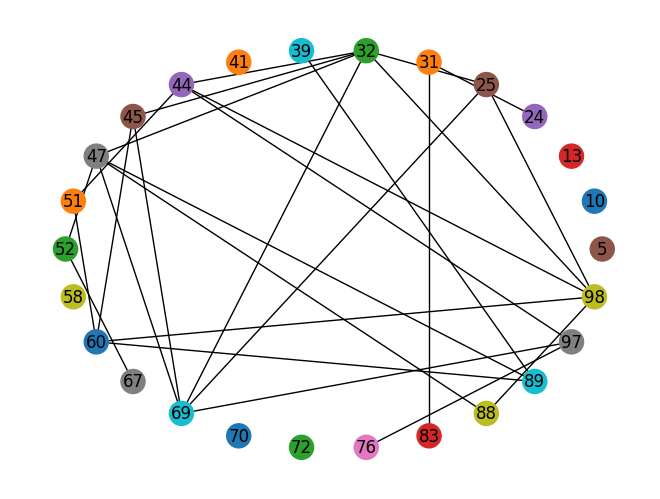}
    \caption{$\delta=0.35$}
\end{subfigure}%
\hfill
\begin{subfigure}{0.24\textwidth}
    \includegraphics[width=\linewidth]{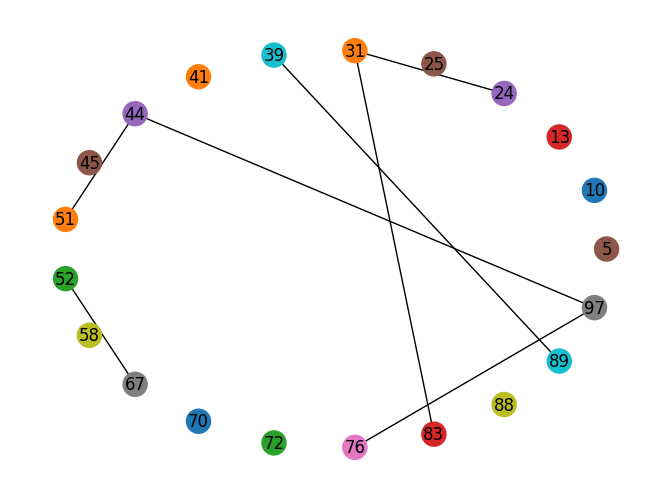}
    \caption{$\delta=0.45$}
\end{subfigure}%
\hfill
\begin{subfigure}{0.24\textwidth}
    \includegraphics[width=\linewidth]{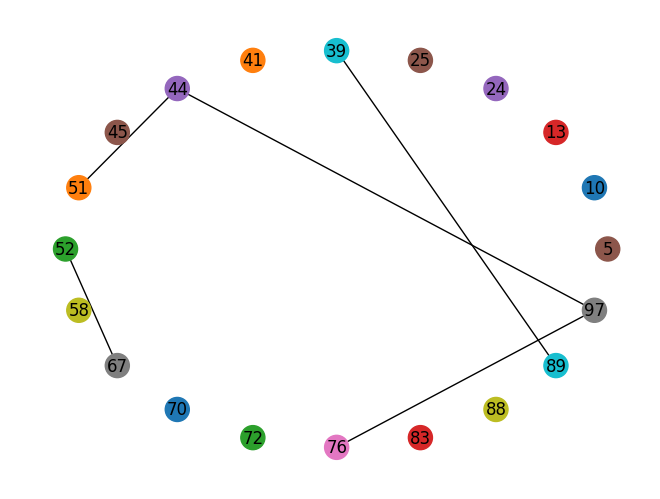}
    \caption{$\delta=0.55$}
\end{subfigure}
    \caption{The graph induced by the active set of the equilibrium the dynamics converged to for RR graphs with $d=20$ and $n=100$ for different values of $\delta.$ The colors correspond to the trajectories in Figure~\ref{fig:rrtrajs}.}
    \label{fig:rractives}
\end{figure}

These changes can also be captured quantitatively.
Let $S_{\max}$ be the largest connected component of the active set $S$ of the equilibrium the system converges to and by $I$ the set of \emph{isolated agents}, ie., the agents who are active but have no active neighbors.
The size of the largest component $|S_{\max}|$ decreases as the value of $\delta$ increases (c.f. Figure~\ref{fig:rractives}).
For small values of $\delta$, the active set is connected, implying $|S|=|S_{\max}|$.
For intermediate values of $\delta$, the sets no longer overlap and the number of isolated agents increases.
This is consistent with the basic intuition about the system --- when the externality factor is larger, less active agents suffice to exhaust the activity requirements of the agents.

Comparing this with the convergence time plots of the corresponding graphs (c.f. Figures~\ref{fig:rr},~\ref{fig:er},~\ref{fig:ba}), we can observe that, despite $S_{\max}$ being smaller and therefore simpler, the convergence is slow in this region.
The long convergence times in this region were often connected to the shuffle phase being long.
The analysis conducted in this section suggests that the process of establishing the active set is more intricate when the equilibria consist of smaller connected components.
This conclusion is consistent with the remainder of the analysis, as when the connected components of the active sets are smaller, more distinct stability thresholds can affect the convergence properties throughout the dynamics.

\begin{figure}[h]
\centering
\begin{subfigure}{0.33\textwidth}
    \includegraphics[width=\linewidth]{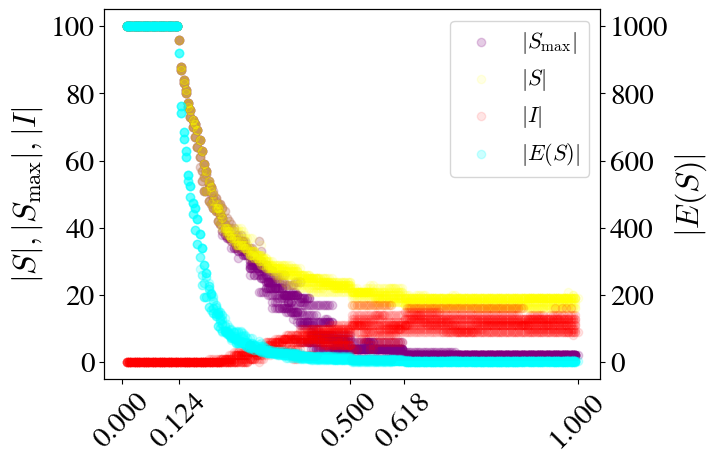}
    \caption{A sample RR graph with $d=20$.}
\end{subfigure}%
\hfill
\begin{subfigure}{0.33\textwidth}
    \includegraphics[width=\linewidth]{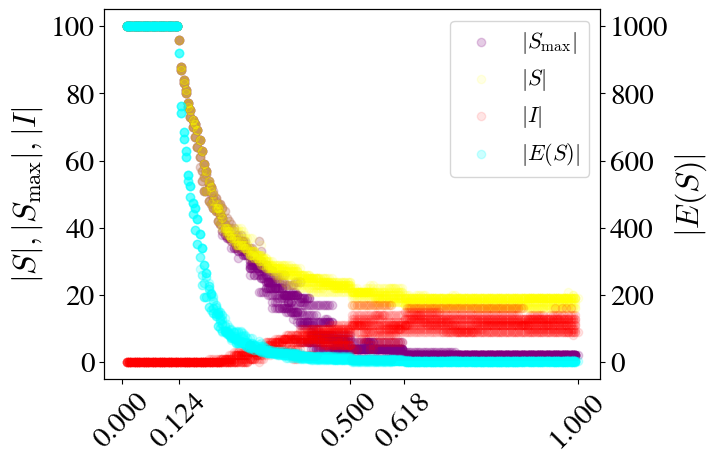}
    \caption{A sample ER sample graph with $p=0.2$.}
\end{subfigure}%
\hfill
\begin{subfigure}{0.33\textwidth}
    \includegraphics[width=\linewidth]{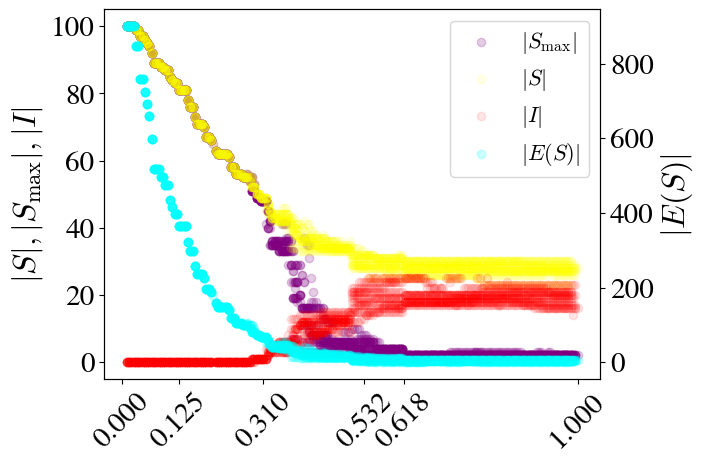}
    \caption{A sample BA sample graph with $m=10$.}
\end{subfigure}
    \caption{The quantities $|S|, |S_{\max}|, |I|$ and the number of edges $|E(S)|$ in the subgraphs induced by $S$ at the equilibrium for different values of $\delta$ for different random graph models. Each point corresponds to a single simulation. The first non-zero value on the $x-$axis denotes the loss of uniqueness threshold $\delta=1/|\lambda_{\min}(\bm G)|$.}
    \label{fig:rractive_counts}
\end{figure}

\section{Example of a graph with arbitrarily many active set changes}
\label{appendix:many_switchups}
This example showcases the possibility of slow communication.
\begin{lemma}
    For $\delta\in(1/\phi,1)$ there exists a graph $G_n$ on $5n$ vertices and a starting configuration $x$ for which the expected time taken to establish the final active set is at least $c(\delta)n$, with $c\geq1$.
\end{lemma}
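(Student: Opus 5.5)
We build $G_n$ by chaining $n$ copies $P^{(1)},\dots,P^{(n)}$ of the path on five vertices, as in Figure~\ref{fig:switchups_graph}. Label the vertices of copy $j$ as $v^j_1,\dots,v^j_5$. We connect copies $j$ and $j+1$ by a single edge from $v^j_5$ to $v^{j+1}_1$. The starting profile is chosen so that copy $1$ already sits in its reshuffled state, with $v^1_3$ inactive and the two pairs $(v^1_1,v^1_2)$, $(v^1_4,v^1_5)$ converging as in the $n=5$ analysis of Fact~\ref{thm:p5}. Every later copy starts in a configuration, e.g. $(1,0,1,0,1)$ for its path part, which is an equilibrium as long as the incoming neighbour $v^{j}_5$ stays at activity $0$.

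The intended cascade is the following. While copy $j$ is in the ``false'' $2$-$1$-$2$ phase, $v^j_5$ is active and its activity level feeds copy $j+1$. Once copy $j$ reshuffles to $1$-$1$-$1$, the value $x_{v^j_5}$ changes. We must pick the attachment (which endpoint of copy $j$ connects to which vertex of copy $j+1$, possibly $v^{j+1}_2$ or $v^{j+1}_4$) so that this change pushes copy $j+1$ out of its current active set and into the two-pairs-with-inactive-middle configuration. Only then does copy $j+1$ start its own slow pair convergence.

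For the chosen wiring we verify three things:
\begin{enumerate}
\item For $\delta\in(1/\phi,1)$, every intermediate configuration of copy $j+1$ is a genuine equilibrium of the whole graph until copy $j$ finishes. We check this with Theorem~\ref{thm:large_delta_equils}: active cliques of size $1$ and $2$ with the threshold $\delta+2\delta/(1+\delta)\ge 1$, which holds exactly when $\delta>1/\phi$. Hence nothing in copy $j+1$ moves early.
\item Once copy $j$ settles, the inactive middle vertex of copy $j+1$ faces the two-pair situation. By the computation in the proof of Fact~\ref{thm:p5}, its neighbourhood activity is $2\delta(1+\delta^{2t+1})/(1+\delta)$ after $2t$ alternating updates per side. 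Its limit $2\delta/(1+\delta)<1$ is only reached after $t\ge t_0(\delta)$ alternating updates, where $t_0(\delta)\ge 1$.
\item After that reshuffle, copy $j+1$ ends in the state that triggers copy $j+2$.
\end{enumerate}

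With this in hand, the expected time bound follows by a sequential argument. Let $\tau_j$ be the time at which the middle vertex of copy $j$ becomes active. Then $\tau_{j+1}-\tau_j$ is at least the time until some vertex of copy $j+1$ receives at least $2t_0$ alternating updates, since each reshuffle needs that many updates. Under uniform selection among $5n$ vertices, each such update takes on average $5n$ steps, i.e.\ one round. So $\mathbb E[\tau_{j+1}-\tau_j]\ge c'(\delta)$ rounds for a constant $c'(\delta)>0$ independent of $n$. By linearity of expectation, the expected number of rounds until the last active set change is at least $c'(\delta)\,n$. Counting in steps instead gives an even stronger bound, so a constant $c\ge1$ is obtained after adjusting units, or by requiring at least one update per copy.

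The main obstacle is the first part of the construction. We need a coupling edge that propagates the reshuffle deterministically, for every update order and every $\delta\in(1/\phi,1)$, while keeping all not-yet-triggered copies strictly at equilibrium, so that the cascade cannot be short-circuited. The single-edge attachment must be strong enough to flip an inactive vertex: removing activity $\approx 1/(1+\delta)$ from a neighbour must drop it below $1/\delta$. It must also be weak enough not to destabilise the copy beforehand. I would first try attaching $v^j_5$ to the middle vertex $v^{j+1}_3$ and check both inequalities using the clique activity levels $1$ and $1/(1+\delta)$.
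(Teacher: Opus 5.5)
There is a genuine gap. The substance of this lemma is the construction: which edge joins consecutive copies, and in which profile the later copies sit frozen. You leave this open, and the concrete choices you do make cannot work.

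With the edge $v^j_5v^{j+1}_1$, the graph $G_n$ is just the path on $5n$ vertices. Your frozen profile $(1,0,1,0,1)$ for copy $j+1$ is a best-response profile only if $x_{v^j_5}=0$. But $v^j_5$ is active throughout: it converges to $\frac{1}{1+\delta}$ in the two-pair phase and equals $1$ after the reshuffle. So $v^j_4,v^j_5,v^{j+1}_1$ interact from the start, copy $j+1$ is not frozen, and copy $j$ no longer runs the pair dynamics of Fact~\ref{thm:p5}.

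More fundamentally, when copy $j$ reshuffles, $v^j_1$, $v^j_3$ and $v^j_5$ \emph{increase} their activity. The receiving vertex in copy $j+1$ has to be inactive during copy $j$'s slow phase, because an active receiver would track the moving trigger and feed back into copy $j$. With strategic substitutes, a rising neighbour can never switch an inactive vertex on, so no attachment at $v^j_5$ can propagate the cascade. Your fallback $v^j_5\sim v^{j+1}_3$ fails for this reason. Since $x_{v^{j+1}_3}=1$, the best response of $v^j_5$ is $\max(0,1-\delta(x_{v^j_4}+1))=0$ whenever $x_{v^j_4}\ge 1/\delta-1$, which kills the pair in copy $j$.

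The missing idea is to use as trigger a vertex whose activity collapses to $0$ at the reshuffle, i.e.\ $v^j_2$ or $v^j_4$. It should be attached to a vertex of copy $j+1$ that is inactive in the frozen profile and held inactive only with the help of that edge. The paper adds the edges $(v^j_4,v^{j+1}_1)$, starts copy $1$ at $(0,1,0,1,0)$, and starts copies $j\ge2$ at $(0,1,0,\frac{1}{1+\delta},\frac{1}{1+\delta})$.

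Then $v^{j+1}_1$ stays inactive as long as $\delta(1+x_{v^j_4})\ge1$. Here $\delta(1+\frac{1}{1+\delta})\ge1$ is equivalent to $\delta^2+\delta\ge1$, i.e.\ $\delta\ge1/\phi$. This is the correct form of the threshold you quote: the inactive vertex between a $1$-block and a $2$-block sees $\delta+\frac{\delta}{1+\delta}$. The inequality $\delta+\frac{2\delta}{1+\delta}\ge1$ as you wrote it is equivalent to $\delta\ge\sqrt2-1$.

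When started at $1$, $x_{v^j_4}$ converges to $\frac{1}{1+\delta}$ from above, and for $j\ge2$ it is constant at that value. Hence copy $j+1$ is frozen until the middle vertex of copy $j$ switches on and drives $v^j_4$ toward $0$. Then $v^{j+1}_1$ activates, the pair $(v^{j+1}_1,v^{j+1}_2)$ starts its slow convergence with $v^{j+1}_3$ inactive, and the next reshuffle follows. With this construction, your sequential linearity-of-expectation argument, using the per-copy lower bound from the proof of Fact~\ref{thm:p5}, finishes the proof as in the paper.

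Two smaller points. The false phase on $P_5$ is $2$-$2$, not $2$-$1$-$2$. Theorem~\ref{thm:large_delta_equils} gives only necessary conditions for stable equilibria, so it cannot certify that a profile is at a best response; that must be checked directly.
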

\begin{proof}
    The proof is constructive.
    First, we take $n$ copies of the path graph with $n=5$, with $i$th vertex of $j$th path labeled $v_i^j$ for $1\leq i\leq 5, 1\leq j\leq n$.
    Now, we add the edges $(v_4^j,v_1^{j+1})$ and obtain $G_n$.

    Consider now the configuration given by
    \begin{align*}
        x^j=\begin{cases}
            (0,1,0,1,0) \text{ if } j=1\\
            (0,1,0,1/(1+\delta),1/(1+\delta)) \text{ if } j>1.
        \end{cases}
    \end{align*}
    All agents outside of the first path are playing their best response.
    In particular, the first agent in each path has their threshold met by the $4$th agent from the previous path. 

    Now, in the first path, we first have the dynamics presented in the proof of Fact~\ref{thm:p5}.
    When they converge close enough to the equilibrium on $2$ vertices, agent $3$ becomes active and we begin transitioning to the strategy profile in which agent $4$ is inactive.
    This forces the first agent from the second path to become active and subsequently the second path initiates convergence to the configuration $2-2$.

    This behavior repeats $n$ times, giving the total time until the last activity change $cn$, where $c$ is expected time of each step and is bounded from below in the proof of Fact~\ref{thm:p5}.
\end{proof}
The probability of the described event is low.
However, if we consider a disjoint union of enough copies of the graph, we obtain an arbitrarily long expected total convergence time.
This leads us to prove Proposition~\ref{pr:many_reshuffles}

\begin{autoproof}{pr:many_reshuffles}
    For a single copy of $G_n$, the probability of landing in the state described in the previous proof is at least $p=(5n)^{-2n}$.
    This occurs by randomly choosing a starting update order in which the target players are chosen first.
    This estimate is very crude but proves sufficient for the construction.
    Now, consider a disjoint union of $1/p$ disjoint copies of $G_n$. At least one of the copies reaches the strategy profile leading to many reshuffles with probability
    \begin{align*}
        q\geq 1-(1-p)^{1/p}\geq1-1.1e^{-1}>1/2.
    \end{align*}
    This means that the expected time to establish the active set is at least $c(\delta)n/2$.
    As $n$ is arbitrary, this suffices.
\end{autoproof}

\end{document}